\newtheorem{thm}{Theorem}[section]
  \newtheorem*{thm*}{Theorem}
\newtheorem{cor}[thm]{Corollary}
\newtheorem{lem}[thm]{Lemma}
\newtheorem{prop}[thm]{Proposition}
  \newtheorem*{prob*}{Problem}
\theoremstyle{definition}
  \newtheorem{defn}[thm]{Definition}
  \newtheorem*{defn*}{Definition}
\newtheorem{rem}[thm]{Remark}
  \newtheorem*{rem*}{Remark}
\numberwithin{equation}{section}
\newcommand{\USp}{\operatorname{USp}}
\newcommand{\C}{\mathbb C}
\newcommand{\R}{\mathbb R}
\newcommand{\q}{\textbf}
\DeclareMathOperator{\per}{per}
\DeclareMathOperator{\quaternion}{qu}
\DeclareMathOperator{\sgn}{sgn}
\DeclareMathOperator{\GAMMA}{Gamma}
\DeclareMathOperator{\Prob}{Pr}
\DeclareMathOperator{\Pf}{Pf}
\DeclareMathOperator{\Poisson}{Pois}
\DeclareMathOperator{\diag}{diag}
\DeclareMathOperator{\Mat}{Mat}
\DeclareMathOperator{\Hq}{\mathbb{H}}
\newcommand{\Tr}{\mathop{\mathrm{Tr}}}
\newcommand{\e}{\mbox{e}}
\begin{document}
\title[Permanental processes from products of random matrices]
 {\bf{ Permanental processes from products of complex and quaternionic induced Ginibre ensembles}}

\author{Gernot Akemann}
  \address{Department of Physics,  Bielefeld University,  P.O. Box 100131, D-33501 Bielefeld, Germany}
  \email{akemann@physik.uni-bielefeld.de}
\author{Jesper R. Ipsen}
  \address{Department of Physics, Bielefeld University,  P.O. Box 100131, D-33501 Bielefeld, Germany}
  \email{jipsen@physik.uni-bielefeld.de}
\author{Eugene Strahov}
  \address{Department of Mathematics, The Hebrew University of Jerusalem, Givat Ram, Jerusalem 91904, Israel}
  \email{strahov@math.huji.ac.il}

\keywords{Non-Hermitian random matrix theory, products of random matrices, permanental processes, induced Ginibre ensembles, hole  probabilities, overcrowding, generalised Schur decomposition}

\commby{}


\begin{abstract}

We consider products of independent random matrices taken from the induced Ginibre ensemble with complex or quaternion elements. The joint densities for the complex eigenvalues of the product matrix can be written down exactly for a product of any fixed number of matrices and any finite matrix size. We show that the squared absolute values of the eigenvalues form a permanental process, generalising the results of  Kostlan and Rider for single matrices to products of complex and quaternionic matrices. Based on these findings, we can first write down exact results and asymptotic expansions for the so-called hole probabilities, that a disc centered at the origin is void of eigenvalues. Second, we compute the asymptotic expansion for the opposite problem, that a large fraction of complex eigenvalues occupies a disc of fixed radius centered at the origin; this is known as the overcrowding problem. While the expressions for finite matrix size depend on the parameters of the induced ensembles, the asymptotic results agree to leading order 
with previous results for products of square Ginibre matrices.

\end{abstract}

\maketitle

\raggedbottom

\section{Introduction}
In random matrix theory, the classical Ginibre ensembles given by real ($\beta=1$), complex ($\beta=2$) and quaternion ($\beta=4$) matrix elements with Gaussian distribution without further symmetry
play a central role and enjoy many applications. For recent reviews see~\cite{KhoruzhenkoSommers,FS} and references therein. Such ensembles lead to determinantal and Pfaffian point processes on the complex plane, forming a true Coulomb gas in two dimensions, see e.g. \cite{Forrester0} for this interpretation.

Recently there has been a considerable interest in the distribution of eigenvalues of \textit{products} of a finite number of independent random matrices of finite size $N$, taken from the classical Ginibre ensembles.
While the density in the limit $N\to\infty$ was previously known, see \cite{Burda,goetze}, an exact solution for finite $N$ was lacking. For applications of such products of random matrices we refer to \cite{Burdarev} and references within.
In a series of papers the authors and their co-workers
showed that eigenvalues of products of matrices taken from the complex \cite{ABu} and quaternionic \cite{Ipsen}
Ginibre ensemble form determinantal or Pfaffian point processes, respectively.
In \cite{AkemannStrahov}  the results for complex Ginibre matrices were used to compute hole probabilities and overcrowding probabilities for this determinantal point process and its infinite analogue.
Products of real Ginibre matrices were studied by  Forrester \cite{Forrester} who calculated the probability that all eigenvalues of such a product are real.
Products of matrices from other ensembles or from different ensembles have been considered too, including inverse complex Ginibre matrices  by Adhikari,  Reddy,  Reddy, and Saha \cite{AdhikariReddyReddySaha}, and truncated unitary matrices in \cite{ABKN} (the ensemble of truncated unitary matrices was introduced by \.Zyczkowski and Sommers \cite{KS}). Rectangular matrices  \cite{AdhikariReddyReddySaha} as well as most recently products of elliptic Ginibre matrices \cite{RRSV} have been considered, and the order in which different (square) matrices are multiplied is not important in a weak sense, even at finite $N$
\cite{IpsenKieburg}.
For mixing Hermitian and non-Hermitian matrices in the macroscopic limit and the corresponding limiting densities see \cite{Burdarev}.
Furthermore, the exact results for finite $N$ allowed to study correlation functions on a local scale. Whereas in the bulk and at the edge these were found to be in the corresponding Ginibre universality class \cite{ABu,ABKN} or weak non-unitary class \cite{ABKN}, at the origin new universality classes labelled by the number of matrices multiplied were found \cite{ABu,ABKN}.

Although we will focus on the complex eigenvalues, we mention in passing that the distribution of singular values of the product matrix and all correlation functions have been calculated in~\cite{AkemannKieburgWei,AIK}. In particular, it was shown that the squared singular values form a biorthogonal ensemble in the sense of Borodin \cite{BorodinBiorthogonal}, being distributed according to a determinant point process.
The limiting correlation kernel at the hard edge of this determinantal point process was studied by Kuijlaars and Zhang \cite{Zhang,KuijlaarsZhang}. Their integral representation of this kernel for the product of two matrices was found to coincide with that of a Cauchy--Laguerre two-matrix model~\cite{BGS:2009,BGS:2014}. Very recently, differential equations for gap probabilities for this type of biorthogonal ensembles have been derived~\cite{Strahov:2014}. 

The present paper extends some of the aforementioned results to products of independent random matrices taken from \textit{induced} Ginibre ensembles with complex or quaternion elements.
The induced Ginibre ensemble of a single random matrix introduced by  Fischmann, Bruzda, Khoruzhenko, Sommers, and $\dot{\mbox{Z}}$yczkowski \cite{Fischmann} for $\beta=1,2$ simply results from the corresponding classical Ginibre ensembles when considering rectangular matrices. Their aim was to describe statistical properties of evolution operators in quantum mechanical systems.
In the complex case $\beta=2$ the elliptic Ginibre ensemble with inserted determinants was previously considered and solved in \cite{A01}, generalising the corresponding induced Ginibre ensemble. There the focus was on applications to quantum chromodynamics with chemical potential in three dimensions.
Furthermore, motivated again from quantum chromodynamics the product of two random matrices from the induced ensemble was considered for $\beta=2$ and $4$ in \cite{APS}. There the exact hole probabilities and their asymptotic expansions were calculated, which allows us to compare with these special cases.
The induced ensembles are of mathematical interest for the following reasons. First, these ensembles are special cases of
the non-Hermitian Feinberg--Zee ensembles with a specific choice of potential that can be exactly solved for finite $N$. Second, the induced Ginibre ensembles provide an explicit realisation of the single ring theorem, namely when the large $N$ limit is taken such that the difference between the long and short side of the rectangular matrix is of the order of $N$ \cite{Fischmann}.
For a discussion of such ensembles with arbitrary potentials and the single ring theorem we refer the reader to Feinberg and Zee \cite{Feinberg0}, Feinberg, Scalettar and  Zee \cite{Feinberg1},
Feinberg \cite{Feinberg2}, Guionnet, Krishnapur, and Zeitouni \cite{Guionnet} and references therein.

The paper is organized as follows. In Section \ref{SectionComplexInducedGinibreResults} we state all results for the product of $n$ independent random matrices taken from the complex induced Ginibre ensemble at $\beta=2$. Our starting point is the joint density of eigenvalues from \cite{ABu,IpsenKieburg} given by a determinantal point process on the complex plane with the weight being a Meijer $G$-function in Theorem \ref{PropositionJointDensityOfEigenvalues}. This result was rigorously established in \cite{AdhikariReddyReddySaha}, and its kernel easily follows. We show that the set of squared absolute values of the complex eigenvalues forms a permanental process, thus generalising the work of Kostlan \cite{Kostlan} for $n=1$, and of \cite{AkemannStrahov}.
This enables us to compute the hole probability that a disc of radius $r$ centered at the origin is empty for $N$ finite. Both for finite and infinite $N$ we then take the asymptotic limit $r\to\infty$. For $N\to\infty$, we establish bounds and compute the leading order asymptotic for the probability that a large number $q$ of eigenvalues lie in a centered disk of fixed radius $r$ known as overcrowding. In Section \ref{SectionResultsQuaternion} we state the corresponding results for quaternionic matrix elements at $\beta=4$. We first give a rigorous proof for the joint density established in \cite{Ipsen}.
It uses the technique of differential forms similar to that presented by Hough, Krishnapur, Peres and Vir\'ag \cite{Hough} in the context of the classical complex Ginibre ensemble of single matrices. Also for $\beta=4$, we find a permanental process for the moduli generalising the work of Rider~\cite{Rider}. The main new result in this section is Theorem~\ref{TheoremQ|z_1||z2|Distribution}, which says that the set of moduli of eigenvalues has the same distribution as a set of independent random  variables. Furthermore, these random variables can be described in terms of products of independent gamma variables.
The corresponding expressions for the hole probabilities and overcrowding estimates at $\beta=4$ can be  obtained from Theorem \ref{TheoremQ|z_1||z2|Distribution}.
The rest of the paper is devoted
to proofs: The proofs for the theorems presented in Section~\ref{SectionComplexInducedGinibreResults} are given in Sections \ref{SectionStartOfProofs} to \ref{SectionProofsOvercrowding}, while the proofs for the theorems presented in Section~\ref{SectionResultsQuaternion} are given in Sections \ref{sec:quaternions} to \ref{sec:holes-beta4}. In Appendix \ref{App} we recall the higher order terms in the Euler--Maclaurin summation  formula that we need in the main text.


\section{Products of random matrices from the induced complex Ginibre ensemble}
\label{SectionComplexInducedGinibreResults}

\subsection{Definition of the product matrix}

Let $\Mat(N,\C)$ be the set of all $N\times N$ matrices $M$ with entries in $\C$.
Define a probability measure $\mathcal{P}_{N,m}$ on $\Mat(N,\C)$ by the formula
\begin{equation}\label{inducedGin}
\mathcal{P}_{N,m}(dM)\equiv\frac{1}{Z_{N,m}}\left[\det(M^{*}M)\right]^m\e^{-\Tr (M^{*}M)}dM,
\end{equation}
where $m\geq 0$ is a parameter, $Z_{N,m}$ is a normalisation constant, and $dM$ is the Lebesgue measure on  $\Mat(N,\C)$.
This parameter dependent probability measure $\mathcal{P}_{N,m}$ is called the induced complex Ginibre ensemble  (with parameter $m$) and  was studied in detail in \cite{Fischmann}.
Note that $\mathcal{P}_{N,m}$ is a special case of the non-Hermitian Feinberg--Zee ensemble \cite{Feinberg0} corresponding to the potential $V(t)=t+m\log t$ with $t=M^* M$.
If $m=0$, then this ensemble is reduced to the classical Ginibre ensemble of complex random matrices \cite{Ginibre}.

Let $n$ be a positive integer, $m_1,\ldots, m_n$ be positive real numbers, and consider the product $P_n$ of $n$ independent
matrices $M_1, \ldots, M_n$,
\[
P_n=M_1M_2\ldots M_n.
\]
Each matrix $M_a$, $a=1,\ldots,n$, is of size $N\times N$, and it is chosen independently from the induced complex  Ginibre ensemble
with parameter $m_a$. We will refer to $P_n$ as to the product of $n$ random matrices from the induced complex Ginibre ensemble
with the parameters $m_1,\ldots,m_n$. Note that the effect of rectangular matrices can be incorporated into the parameters $m_1,\ldots,m_n$, see~\cite{IpsenKieburg}.  


\subsection{The joint density of eigenvalues as a determinantal point process}

The first result concerns the joint density of eigenvalues of $P_n$. To present this result  we use the same notation and definitions
for the Meijer $G$-function as in Luke \cite{Luke}, Section 5.2.
Namely, the Meijer $G$-function
is defined there as
\begin{equation}
G_{p,q}^{m,n}\left(x\,\biggl|
\begin{array}{cccc}
a_1,  a_2,  \ldots,  a_p \\
b_1,  b_2,  \ldots,  b_q
\end{array}
\right)
\equiv\frac{1}{2\pi i}\int\limits_{C}\frac{\prod_{j=1}^m\Gamma(b_j-s)\prod_{j=1}^n\Gamma(1-a_j+s)}
{\prod_{j=m+1}^q\Gamma(1-b_j+s)\prod_{j=n+1}^p\Gamma(a_j-s)}\ x^sds.
\nonumber
\end{equation}
An empty product is to be interpreted as unity, $0\leq m\leq q$, $0\leq n\leq p$, and the parameters
$\{a_k\}$ ($k=1,\ldots,p$) and $\{b_j\}$ ($j=1,\ldots,m$) are chosen such that no pole of $\Gamma(b_j-s)$ coincides with any of the poles of
$\Gamma(1-a_k+s)$. We also assume that $x\in\C\setminus\{0\}$. The integration contour $C$ goes from $-i\infty$ to $+i\infty$ such that all the poles of
$\Gamma(b_j-s)$, $j=1,\ldots,m,$ lie to the right of the integration path, and that all of the poles of $\Gamma(1-a_k+s)$, $k=1,\ldots,n,$ lie to the left of this path.
When $p=0$, then also $n=0$, and  we can write the corresponding Meijer $G$-function as
$G_{0,q}^{m,0}(x\,|\,b_1,  b_2,  \ldots,  b_q )$. Here the indices $a_p$ are absent and thus omitted.

\begin{thm}
\label{PropositionJointDensityOfEigenvalues}
The vector of (unordered) eigenvalues of $P_n$
has the density (with respect to the Lebesgue measure on $\C^N$)
which can be written as
\begin{align}\label{JointDensityOfEigenvalues}
\varrho_N^{(m_1,\ldots,m_n)}(z_1,\ldots,z_N)
&=\mathcal C
\prod\limits_{k=1}^Nw_n^{(m_1,\ldots,m_n)}(z_k)
\prod\limits_{1\leq i<j\leq N}|z_i-z_j|^2, \\
\mathcal C^{-1}&=N!\pi^{nN}\prod\limits_{k=1}^N \prod\limits_{a=1}^n\Gamma(m_a+k). \nonumber
\end{align}
It is also often called joint probability distribution function. Here, the weight function (to be normalised later) is given by
\begin{equation}\label{WeightFunction}
w_n^{(m_1,\ldots,m_n)}(z)=\pi^{n-1}G_{0,n}^{n,0}(\,|z|^2 \,|\, m_1, m_2, \ldots, m_n ),
\end{equation}
where we have used the Meijer $G$-function with the choice of parameters specified earlier.
\end{thm}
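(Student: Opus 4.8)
The plan is to begin from the explicit joint density of the matrix entries of $M_1,\dots,M_n$ and to change variables by means of a generalised (simultaneous) Schur decomposition of the $n$-tuple $(M_1,\dots,M_n)$: for a generic tuple there exist unitary matrices $U_1,\dots,U_n$, with the cyclic convention $U_{n+1}=U_1$, such that $M_a=U_a R_a U_{a+1}^{-1}$ with each $R_a=Z^{(a)}+T^{(a)}$ upper triangular, $Z^{(a)}=\diag(z_1^{(a)},\dots,z_N^{(a)})$ and $T^{(a)}$ strictly upper triangular. Then $P_n=U_1(R_1\cdots R_n)U_1^{-1}$ is conjugate to an upper triangular matrix, so the eigenvalues of $P_n$ are the ordered products $\zeta_k=\prod_{a=1}^n z_k^{(a)}$, $k=1,\dots,N$. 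The first, and I expect hardest, step is to verify that this decomposition holds on a set of full Lebesgue measure and is there a local diffeomorphism onto its image (the fibres being the residual torus action $U_a\mapsto U_aD_a$, $R_a\mapsto D_a^{-1}R_a D_{a+1}$), and to compute its Jacobian. Following the differential-form computation used for a single Ginibre matrix (as in Hough--Krishnapur--Peres--Vir\'ag \cite{Hough}, and carried out for products in the works cited above), the Jacobian equals a constant times the Vandermonde factor $\prod_{1\le i<j\le N}|\zeta_i-\zeta_j|^2$, the remaining measure being flat in the $z_k^{(a)}$ and $T^{(a)}_{ij}$ and Haar in the $U_a$.

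Substituting into the integrand and using unitary invariance and triangularity, $\det(M_a^*M_a)=|\det R_a|^2=\prod_k|z_k^{(a)}|^2$ and $\Tr(M_a^*M_a)=\Tr(R_a^*R_a)=\sum_k|z_k^{(a)}|^2+\sum_{i<j}|T^{(a)}_{ij}|^2$, so the density over $\{U_a\},\{z_k^{(a)}\},\{T^{(a)}_{ij}\}$ is, up to a multiplicative constant,
\[
\prod_{1\le i<j\le N}|\zeta_i-\zeta_j|^2\ \prod_{a=1}^n\prod_{k=1}^N|z_k^{(a)}|^{2m_a}\,\e^{-|z_k^{(a)}|^2}\ \prod_{a=1}^n\prod_{i<j}\e^{-|T^{(a)}_{ij}|^2}.
\]
The Haar integrals over $U_1,\dots,U_n$ and the Gaussian integrals over the $T^{(a)}_{ij}$ are elementary and contribute only constants. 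Since the sole factor coupling the indices $k$ is the Vandermonde, which depends on the $z$'s only through the $\zeta$'s, what remains is to integrate, for each fixed $k$, the variables $z_k^{(1)},\dots,z_k^{(n)}$ subject to $\prod_a z_k^{(a)}=\zeta_k$; this produces a product $\prod_k w_n(\zeta_k)$ with a single, rotationally invariant weight $w_n$.

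To identify $w_n$ I would pass to polar coordinates $z_k^{(a)}=r_a\e^{i\theta_a}$. The phases enter only through $\sum_a\theta_a=\arg\zeta_k$, so the angular integration gives $(2\pi)^{n-1}$, which together with the Jacobians $r_a\,dr_a$ yields the prefactor $\pi^{n-1}$ of \eqref{WeightFunction}; the radial part is a multiplicative convolution of the densities $t\mapsto t^{m_a}\e^{-t}$ of the variables $|z_k^{(a)}|^2$. Taking the Mellin transform in the variable $|\zeta_k|^2$ turns this convolution into the product $\prod_{a=1}^n\Gamma(m_a+s)$, which by Mellin inversion is precisely the defining Mellin--Barnes integral of $G_{0,n}^{n,0}(|\zeta_k|^2\,|\,m_1,\dots,m_n)$. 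This gives $w_n^{(m_1,\dots,m_n)}(z)=\pi^{n-1}G_{0,n}^{n,0}(|z|^2\,|\,m_1,\dots,m_n)$.

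It remains to fix the constant $\mathcal C$. As $w_n$ is rotationally invariant, the monomials $z^{k-1}$ are orthogonal with respect to $w_n(z)\,d^2z$, with squared norms $\int_\C|z|^{2(k-1)}w_n(z)\,d^2z=\pi^n\prod_{a=1}^n\Gamma(m_a+k)$ (again by the Mellin property of the Meijer $G$-function). Expanding $\prod_{i<j}|\zeta_i-\zeta_j|^2$ and using the Andr\'eief/Gram identity then yields $\int_{\C^N}\prod_{i<j}|\zeta_i-\zeta_j|^2\prod_k w_n(\zeta_k)\,d^2\zeta_k=N!\,\pi^{nN}\prod_{k=1}^N\prod_{a=1}^n\Gamma(m_a+k)$, so imposing $\int_{\C^N}\varrho_N^{(m_1,\dots,m_n)}=1$ forces $\mathcal C^{-1}=N!\,\pi^{nN}\prod_{k=1}^N\prod_{a=1}^n\Gamma(m_a+k)$, as claimed. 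As a consistency check, for $n=1$ this reduces to the known eigenvalue density of the induced complex Ginibre ensemble; alternatively, one may obtain the same value of $\mathcal C$ by tracking the normalisations $Z_{N,m_a}$, the group volumes, and the Gaussian integrals directly through the derivation.
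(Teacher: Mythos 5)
Your proposal is correct and follows essentially the same route as the paper: the paper delegates the generalised Schur decomposition and its Jacobian to \cite{AdhikariReddyReddySaha} (exactly the part you flag as hardest and also only sketch via \cite{Hough}), and then, just as you do, identifies the weight by a Mellin transform of the multiplicative convolution of the radial densities $t^{m_a}e^{-t}$ and fixes the constant via the rotationally-invariant orthogonality of monomials. One minor stylistic difference: the paper computes the squared norms $h_k^{(m_1,\dots,m_n)}$ directly from the delta-function representation of the weight (which makes the integral factorise over $a$), whereas you use the Mellin property of the Meijer $G$-function; both give $\pi^n\prod_a\Gamma(m_a+k)$.
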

Theorem \ref{PropositionJointDensityOfEigenvalues} was proven in \cite{AdhikariReddyReddySaha} and independently in \cite{Eugene}. For completeness we will derive the form of the weight function in Section \ref{SectionStartOfProofs} as it was not explicitly given in  \cite{AdhikariReddyReddySaha}.
\begin{rem}
1)
Apart from the weight function and normalisation constant this has exactly the same form as the joint density of the Ginibre ensemble \cite{Ginibre} for $\beta=2$, where the complex eigenvalues repel each other through the absolute value squared of a Vandermonde determinant.\newline
2)
In particular when $n=1$ the weight function is given by $w_{n=1}^{(m_1)}(z)=|\zeta|^{2m_1}\e^{-|z|^2}$, and equation
(\ref{JointDensityOfEigenvalues}) reduces to the formula  for the joint density of eigenvalues of a matrix taken from the induced Ginibre ensemble,
see  equation (19) in \cite{Fischmann}. For $m_1=0$ we are of course back to the classical Ginibre ensemble \cite{Ginibre}.\newline
3)
When $n=2$ we have  $w_{n=2}^{(m_1,m_2)}(z)=2|z|^{2(m_1+m_2)}K_{m_1-m_2}(2|z|)$ in terms of the modified Bessel (or MacDonald) function, and the joint density can be found e.g. in \cite{APS}.\newline
4)
For $m_1=\ldots=m_n=0$ we obtain from equation (\ref{JointDensityOfEigenvalues}) the joint density of eigenvalues of a matrix which is a product of $n$
independent complex Ginibre matrices of size $N\times N$. The eigenvalue distributions for such products were studied  in \cite{ABu,AkemannStrahov}. Formulae (\ref{JointDensityOfEigenvalues}) and (\ref{WeightFunction}) for the joint density of eigenvalues
generalise that obtained in \cite{ABu} and were already considered in \cite{AdhikariReddyReddySaha,IpsenKieburg}.
\end{rem}

Recall that a determinantal point process on $\C$ with kernel $K_N(z,\zeta)$ and normalisable weight function $w(z)$ is a point process on $\C$ given by
\[
\varrho_N(z_1,\ldots,z_N)=\frac{1}{Z_N}\prod_{k=1}^N w(z_k) \prod\limits_{1\leq i<j\leq N}|z_i-z_j|^2
\]
whose $\ell$-point correlation functions  with respect to the weight $w(z)$ are given by
\[
\varrho_\ell(z_1,\ldots,z_\ell)\equiv \frac{N!}{(N-\ell)!} \prod_{k=\ell+1}^N \int d^2z_k\ \varrho_N(z_1,\ldots,z_N)
=\frac{1}{Z_N}
\prod_{k=1}^\ell w(z_k)
\det\left[K(z_i,z_j)\right]_{i,j=1}^\ell,
\]
where $Z_N$ is a suitable normalisation constant.
For a discussion of determinantal point processes, their properties and diverse applications we refer the reader to survey papers by
Borodin  \cite{Borodin}, and by  Hough, Krishnapur, Peres, and Vir$\acute{\mbox{a}}$g \cite{Hough1}.
One can apply standard methods of Random Matrix Theory  (see for example, the books by
Anderson,  Guionnet and Zeitouni \cite{Anderson}, Deift \cite{Deift}, Forrester \cite{Forrester0},  and Pastur and Shcherbina \cite{Pastur} and \cite{handbook}) to Theorem \ref{PropositionJointDensityOfEigenvalues}, use that the weight is only depending on the modulus to see that the corresponding orthogonal polynomials are monic powers. The only non-trivial part is to compute their normalisation, which are given in the following

\begin{thm}\label{TheoremDeterminantalCorrelationkernel}
The eigenvalues of $P_n$ form a determinantal point process in the complex plane
with kernel
\begin{equation}\label{Kernel}
K_N^{(m_1,\ldots,m_n)}(z,\zeta)=\frac{1}{\pi^n}
\sum_{k=0}^{N-1}\frac{(z\overline{\zeta})^k}{\prod_{a=1}^n\Gamma(k+1+m_a)}
\end{equation}
with respect to the
normalised
probability measure
\begin{equation}\label{BackgroundProbabilityMeasure}
\frac{G_{0,n}^{n,0}(\,|z|^2\,|\,m_1,  m_2,  \ldots,  m_n )}{\pi\prod_{a=1}^n\Gamma(m_a+1)}d^2z.
\end{equation}
\end{thm}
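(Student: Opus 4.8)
The plan is to run the textbook orthogonal--polynomial construction of the correlation kernel of a determinantal ensemble, exploiting the rotational invariance of the weight $w_n^{(m_1,\ldots,m_n)}$ in \eqref{WeightFunction}. First I would note that $w_n^{(m_1,\ldots,m_n)}(z)$ is a function of $|z|^{2}$ alone, so that integrating out the argument of $z$ makes $\int_{\C}z^{j}\,\overline{z}^{\,k}\,w_n^{(m_1,\ldots,m_n)}(z)\,d^{2}z$ vanish unless $j=k$; hence the monic polynomials orthogonal with respect to this weight are simply the monomials $z^{0},z^{1},z^{2},\ldots$. Moreover $w_n^{(m_1,\ldots,m_n)}$ is positive on $\C\setminus\{0\}$ --- for $|z|^{2}>0$ the Meijer $G$-function $G_{0,n}^{n,0}$ is, up to a positive constant, the density of a product of $n$ independent Gamma-type variables --- it is integrable near the origin (its small-argument behaviour is a power of $|z|$, up to logarithms) and it decays faster than any power at infinity, so the weight is normalisable and the point process is well defined. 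By the standard reduction of $\prod_{k}w(z_k)\prod_{i<j}|z_i-z_j|^{2}$ to the determinant of a reproducing kernel (see e.g.\ \cite{Hough1,Forrester0} or the references quoted above), the eigenvalues of $P_n$ form a determinantal process whose kernel is $\sum_{k=0}^{N-1}(z\overline{\zeta})^{k}/h_k$, where $h_k:=\int_{\C}|z|^{2k}\,w_n^{(m_1,\ldots,m_n)}(z)\,d^{2}z$; thus the entire content of the theorem reduces to computing the squared norms $h_k$.

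To evaluate $h_k$ I would pass to polar coordinates $z=\sqrt{t}\,e^{i\theta}$, for which $d^{2}z=\tfrac12\,dt\,d\theta$, and use \eqref{WeightFunction} to obtain
\[
h_k=\pi^{\,n-1}\cdot\pi\int_{0}^{\infty}t^{k}\,G_{0,n}^{n,0}\!\left(t\,|\,m_1,\ldots,m_n\right)dt .
\]
The remaining integral is the Mellin transform of the Meijer $G$-function at the point $s=k+1$. Inserting the Mellin--Barnes representation $G_{0,n}^{n,0}(t\,|\,b_1,\ldots,b_n)=\tfrac{1}{2\pi i}\int_{C}\prod_{j=1}^{n}\Gamma(b_j-s)\,t^{s}\,ds$ (a special case of the definition recalled above) and interchanging the $t$-- and $s$--integrations collapses the contour integral and yields the elementary identity $\int_{0}^{\infty}t^{w-1}\,G_{0,n}^{n,0}(t\,|\,b_1,\ldots,b_n)\,dt=\prod_{j=1}^{n}\Gamma(b_j+w)$, which is legitimate here because $w=k+1\geq1$ and $b_j=m_j\geq0$. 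With $w=k+1$ and $b_j=m_j$ this gives $h_k=\pi^{n}\prod_{a=1}^{n}\Gamma(m_a+k+1)$, and substituting it into $\sum_{k=0}^{N-1}(z\overline{\zeta})^{k}/h_k$ reproduces \eqref{Kernel}. As a consistency check, $N!\prod_{k=0}^{N-1}h_k$ equals the reciprocal of the constant $\mathcal C$ in Theorem~\ref{PropositionJointDensityOfEigenvalues}. Finally, the $k=0$ instance of the same computation gives $\int_{\C}G_{0,n}^{n,0}(|z|^{2}\,|\,m_1,\ldots,m_n)\,d^{2}z=\pi\prod_{a=1}^{n}\Gamma(m_a+1)$, so normalising $w_n^{(m_1,\ldots,m_n)}$ by its total mass produces exactly the probability density \eqref{BackgroundProbabilityMeasure}, which serves as the reference measure attached to the kernel \eqref{Kernel}; keeping track of this rescaling is routine.

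I expect the only genuinely non-routine point to be the Mellin transform of $G_{0,n}^{n,0}$ --- more precisely, justifying the interchange of the radial integral with the Mellin--Barnes contour, or equivalently citing the Mellin-transform pair for the Meijer $G$-function from \cite{Luke}. Everything else is classical: once one notices that a radial weight makes the monomials orthogonal, the orthogonal-polynomial reduction and the reading-off of the kernel are completely standard.
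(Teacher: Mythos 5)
Your proposal is correct and follows essentially the same route as the paper: the orthogonal-polynomial reduction of a $|\Delta|^2$ density to a projection kernel, the observation that rotational invariance of the weight makes the monomials $z^k$ orthogonal, and the evaluation of the squared norms $h_k$. The only place the two diverge is in how $h_k$ is obtained: you evaluate the Mellin transform of $G_{0,n}^{n,0}$ directly (via the Mellin--Barnes contour, or by citing Luke), whereas the paper reuses the computation from the proof of Theorem~\ref{PropositionJointDensityOfEigenvalues}, where $h_k$ is obtained from the delta-function integral representation \eqref{WeighDeltaFunction} of the weight, which factorises into a product of $n$ one-matrix Gaussian moments; the two are, of course, the same identity in different clothes. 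Your explicit verification of the $k=0$ case, identifying the total mass of $w_n^{(m_1,\ldots,m_n)}$ with the normalising constant in \eqref{BackgroundProbabilityMeasure}, is a useful sanity check the paper leaves implicit.
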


Note that the kernel $K_N^{(m_1,\ldots,m_n)}(z,\zeta)$ defined by equation (\ref{Kernel}) can be written as
\[
K_N^{(m_1,\ldots,m_n)}(z,\zeta)=\sum\limits_{k=0}^{N-1}\varphi_k(z)\overline{\varphi_k(\zeta)},
\]
where $\left\{\varphi_k, 0\leq k\leq N-1\right\}$
is an orthonormal set in $L^2\left(\C,\mu\right)$, with respect to the corresponding measure defined by equation (\ref{BackgroundProbabilityMeasure}).
The orthonormal polynomials $\left\{\varphi_k, 0\leq k\leq N-1\right\}$ are 
\begin{equation}\label{VarPHIk}
\varphi_k(z)\equiv\big(h_k^{(m_1,\ldots,m_n)}\big)^{-\frac{1}{2}}z^k,
\end{equation}
where the squared norms of the monic orthogonal polynomials are given by
\begin{equation}\label{Hk}
h_k^{(m_1,\ldots,m_n)}\equiv\int w_n^{(m_1,\ldots,m_n)}(z)|z|^{2k}d^2z=\pi^n\prod\limits_{a=1}^n\Gamma\left(k+1+m_a\right).
\end{equation}
Therefore  $K_N^{(m_1,\ldots,m_n)}(z,\zeta)$ is the kernel of the projection operator which projects from $L^2(\C,\mu)$ onto the span of $\{z^k:\;0\leq k\leq N-1\}$.
We thus conclude that the eigenvalues of $P_n$ form a determinantal projection process. We call this process  \textit{the generalised  induced finite-$N$  complex Ginibre ensemble with parameters
$m_1,m_2,\ldots,m_n$.}


\subsection{Limit of large matrices}\label{largeN}

When $N\rightarrow\infty$, then the finite-$N$ process converges (in distribution) to some limiting determinantal process, with respect to the same probability measure~\eqref{BackgroundProbabilityMeasure}.
A note of caution is in place here. As was pointed out in \cite{ABu} there are several possibilities to take the large $N$ limit. One possibility is to rescale the matrix elements of the induced Ginibre ensemble eq. (\ref{inducedGin}) to obtain an $N$ dependent distribution $\sim\exp[-N\Tr (M^{*}M)]$. In that case the limit of the spectral density $\varrho_1^{(m_1,\ldots,m_n)}(z_1)$ has compact support, and one could study local eigenvalue correlations a) at the edge of support, b) in the bulk and c) at the origin, each leading to a different limiting kernel. Because we are only interested in the hole probability we will not follow this route and keep the induced Ginibre ensemble eq. (\ref{inducedGin}) $N$ independent. In the limit $N\to\infty$ the edge of support will thus go to infinity, and without further rescaling we will always stay in case c), the origin limit.

To write the correlation kernel of this limiting determinantal point process explicitly, recall that a generalised series with an arbitrary number of numerator and denominator parameters is defined by
\[
\mathstrut_pF_q
\left(\begin{array}{cccc}
\alpha_1,  \alpha_2,  \ldots,  \alpha_p \\
\beta_1,  \beta_2,  \ldots,  \beta_q
\end{array} \biggl| z\right)
=\sum\limits_{k=0}^{\infty}
\frac{(\alpha_1)_k\cdots(\alpha_p)_k}{(\beta_1)_k\cdots(\beta_q)_k}\frac{z^k}{k!},
\]
see  Luke \cite{Luke}, Section 3.2.
The limiting determinantal process at the origin has the kernel
\begin{equation}
\begin{split}
K_{\infty}^{(m_1,\ldots,m_n)}(z,\zeta)&=\frac{1}{\pi^n}
\sum\limits_{k=0}^{\infty}\frac{(z\overline{\zeta})^k}{\prod_{a=1}^n\Gamma(k+1+m_a)}\\
&=\frac{1}{\pi^n\prod_{a=1}^n\Gamma(1+m_a)}\;
\mathstrut_1F_n\left(\begin{array}{cccc}
                        1   \\
                        1+m_1 , 1+m_2 , \ldots  , 1+m_n
                      \end{array}
\biggl|z\overline{\zeta}\right).
\end{split}
\end{equation}
We call the corresponding limiting determinantal process \textit{the generalised induced infinite complex Ginibre ensemble with parameters $m_1,m_2,\ldots,m_n$.}


\subsection{The joint density of moduli of the eigenvalues as a permanental process}
\label{SectionDistributionAbsoluteValues}

Assume that $z_1,z_2,\ldots,z_N$ are random complex variables. If their joint density with respect to the Lebesgue measure on $\C^N$ is given by
\begin{equation}\label{JointDensityUniform Order}
\varrho_N(z_1,\ldots,z_N)=\frac{1}{Z_N}\prod\limits_{k=1}^Nw(|z_k|)\prod\limits_{1\leq i<j\leq N}|z_i-z_j|^2,
\end{equation}
with the weight function $w(|z|)$ only depending on the modulus,
then the set of absolute values $|z_1|,|z_2|\ldots,|z_N|$ has the same distribution as a set of independent random variables. This is well-known fact due to Kostlan \cite{Kostlan}, which was discussed
extensively in the literature, see for example Hough, Krishnapur, Peres, and Vir\'ag \cite{Hough}, and Fyodorov and Mehlig \cite{FM} in the context of quantum-chaotic scattering. Namely, then the following result holds true:
\begin{thm}
\label{TheoremIndependence}
Assume that $z_1,z_2,\ldots,z_N$ are random complex variables. Set $z_i=r_i\e^{i\theta}$, for $i=1,\ldots,N$.  Assume that  the joint probability density function
of the (unordered) random complex variables $z_1,z_2,\ldots,z_N$ with respect to the Lebesgue measure on $\C^N$ is given by the formula
(\ref{JointDensityUniform Order})
where $w(r_k):\R_{\geq 0}\rightarrow\R_{\geq 0}$ is a normalisable weight function.
Then it holds that
\begin{equation}\label{NormaliztionConstant}
Z_N=N!\prod\limits_{k=1}^Nh_{k-1}\ ,\qquad h_k=\pi\int_0^{\infty}y^{k}w(\sqrt{y})dy,
\end{equation}
and the joint density of the (unordered) absolute values $r_1,r_2,\ldots,r_N$ is given by a permanent
\[
\prod\limits_{j=1}^N\int_0^{2\pi}d\theta_j\ \varrho_N(z_1,\ldots,z_N)
=\frac{(2\pi)^N}{Z_N}\per\left[r_i^{2j-1}\right]_{i,j=1}^N\prod\limits_{j=1}^Nw(r_j).
\]
Moreover, the set of absolute values $\{r_i\}_{i=1,\ldots,N}$ has the same distribution as the set
$\{R_i\}_{i=1,\ldots,N}$, where the random variables $R_1,R_2,\ldots R_N$ are independent, and for each $k\ (1\leq k\leq N)$ the random variable $\left(R_k\right)^2$ has the density
\[
q_k(y)=\left\{
               \begin{array}{ll}
                 \frac{y^{k-1}w(\sqrt{y})}{h_{k-1}}, & y\geq 0, \\
                 0, & y<0.
               \end{array}
             \right.
\]
\end{thm}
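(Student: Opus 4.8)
The plan is to run the classical Kostlan argument: expand the squared Vandermonde, pass to polar coordinates, integrate out the phases (which collapses the double sum over permutations to a diagonal one and turns the density of the moduli into a permanent), and then read off the normalisation $Z_N$ by integrating that permanent.

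First I would apply the Vandermonde identity, valid for almost every configuration,
\begin{equation*}
\begin{split}
\prod_{1\le i<j\le N}|z_i-z_j|^2
&=\bigl|\det[z_i^{j-1}]_{i,j=1}^N\bigr|^2\\
&=\sum_{\sigma,\tau\in S_N}\sgn(\sigma)\sgn(\tau)\prod_{i=1}^N z_i^{\sigma(i)-1}\,\overline{z_i}^{\,\tau(i)-1}.
\end{split}
\end{equation*}
Writing $z_i=r_i\e^{i\theta_i}$, so that $\prod_i d^2z_i=\prod_i r_i\,dr_i\,d\theta_i$, the $\theta_i$-dependence of the general term is $\e^{i\theta_i(\sigma(i)-\tau(i))}$, and $\int_0^{2\pi}\e^{i\theta_i(\sigma(i)-\tau(i))}\,d\theta_i=2\pi\,\delta_{\sigma(i),\tau(i)}$. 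Hence only the diagonal terms $\sigma=\tau$ (for which $\sgn(\sigma)\sgn(\tau)=1$) survive the angular integration, and including the Jacobian factors $\prod_i r_i$ leaves
\begin{equation*}
\begin{split}
\prod_{j=1}^N\int_0^{2\pi}d\theta_j\ \varrho_N(z_1,\dots,z_N)\prod_{i=1}^N r_i
&=\frac{(2\pi)^N}{Z_N}\sum_{\sigma\in S_N}\prod_{i=1}^N r_i^{2\sigma(i)-1}w(r_i)\\
&=\frac{(2\pi)^N}{Z_N}\per\bigl[r_i^{2j-1}\bigr]_{i,j=1}^N\prod_{j=1}^N w(r_j),
\end{split}
\end{equation*}
which is the asserted formula for the joint density of the (unordered) moduli, with $Z_N$ still the a priori normalisation of $\varrho_N$.

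Next I would pin down $Z_N$ by integrating the last display over $(r_1,\dots,r_N)\in[0,\infty)^N$. Since after the angular integration every term is nonnegative, Fubini applies and the permanent splits as $\sum_{\sigma\in S_N}\prod_i\int_0^\infty r^{2\sigma(i)-1}w(r)\,dr$; the substitution $y=r^2$ turns each one-dimensional integral into $h_{\sigma(i)-1}/(2\pi)$, where $h_k=\pi\int_0^\infty y^k w(\sqrt y)\,dy$ is finite and positive by normalisability of $w$. Every term of the resulting sum equals $(2\pi)^{-N}\prod_{k=1}^N h_{k-1}$, so the sum is $N!\,(2\pi)^{-N}\prod_{k=1}^N h_{k-1}$ and equating with $1$ gives $Z_N=N!\prod_{k=1}^N h_{k-1}$. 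Finally, inserting this value of $Z_N$ and changing variables to $y_i=r_i^2$ in the density of the moduli, the joint density of the (unordered) squared moduli becomes $\tfrac1{N!}\per[q_j(y_i)]_{i,j=1}^N$ with the $q_j$ of the statement (the Jacobians $dy_i/(2\sqrt{y_i})$ and the $\pi$-powers from $Z_N$ combining with the permanent). Since $\tfrac1{N!}\per[\rho_j(x_i)]_{i,j=1}^N$ is precisely the symmetrised density of a vector of $N$ independent random variables with marginal densities $\rho_1,\dots,\rho_N$, this says exactly that $\{r_i^2\}_{i=1}^N$ has the same law as $\{(R_k)^2\}_{k=1}^N$ with the $R_k$ independent and $(R_k)^2\sim q_k$; taking square roots gives the statement for $\{r_i\}_{i=1}^N$.

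The main obstacle is not analytic: the proof is a chain of elementary manipulations. The only points demanding care are the combinatorial bookkeeping — the $N!$ coming from the fact that the $z_i$, hence the $r_i$, are unordered, and how it interacts with the sum over $S_N$ produced by the squared Vandermonde — together with tracking the $2\pi$'s and Jacobians through the substitution $y=r^2$. The interchange of summation and integration is immediate, since after the angular integration the integrand is nonnegative.
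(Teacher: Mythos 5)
Your proof is correct and follows essentially the same route as the paper's: expand the squared Vandermonde, integrate out the phases to collapse the double sum over permutations to a diagonal sum giving the permanent, read off the normalisation $Z_N$ by integrating, change variables to $y=r^2$, and invoke the standard fact that $\tfrac{1}{N!}\per[\rho_j(x_i)]$ is the symmetrised density of independent random variables (the paper cites this as Lemma~1.5 in Kostlan's article). The only substantive difference is that you arrive at $Z_N$ by a direct normalisation calculation, whereas the paper implicitly packages this by inserting the one-dimensional normalisers $\int_0^\infty r^{2j-1}w(r)\,dr$ into the permanent; this is a cosmetic rather than mathematical difference.
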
%
In order to exploit Theorem \ref{TheoremIndependence} for the description of absolute values of eigenvalues of the  product matrix $P_n$ we need  the following result.   Recall that gamma variables $\mbox{Gamma}(k,1)$ have the following density function
\begin{equation}\label{GammaDensityFunction}
f_k(x)=\left\{
                 \begin{array}{ll}
                   \frac{1}{\Gamma(k)}x^{k-1}\e^{-x}, & x\geq 0, \\
                   0, & x<0.
                 \end{array}
               \right.
\end{equation}

\begin{prop}
\label{PropositionSpringerThompson}
Let $x_1, x_2, \ldots, x_n$ be $n$ independent gamma variables having density functions
\[
f_{b_k}(x_k)=\left\{
           \begin{array}{ll}
             \frac{1}{\Gamma(b_k)}x_k^{b_k-1}\e^{-x_k}, & x_k\geq 0, \\
             0, & x_k<0,
           \end{array}
         \right.
\]
where $b_k>0$, $1\leq k\leq n$. Then the probability density function of the product variable $z=x_1x_2\ldots x_n$ is a Meijer $G$-function multiplied
by a normalizing constant, i.e.
\[
g^{(b_1,\ldots,b_n)}(z)=\frac{G_{0,n}^{n,0}(z|b_1-1,b_2-1,\ldots,b_n-1)}{\prod_{i=1}^n\Gamma(b_i)}.
\]
\end{prop}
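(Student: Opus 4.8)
The plan is to pass to Mellin transforms, where products of independent positive variables become products of transforms and the Mellin--Barnes integral defining $G^{n,0}_{0,n}$ is manifestly a product of gamma functions. Recall that the Mellin transform of the density $h$ of a nonnegative variable $X$ is $\E[X^{s-1}]=\int_0^\infty x^{s-1}h(x)\,dx$, and that for independent nonnegative variables the density of the product has Mellin transform equal to the product of the individual ones. First I would treat one factor: for $x_k\sim\mbox{Gamma}(b_k,1)$ with density $f_{b_k}$ one has, for $\Re(s)>1-b_k$,
\[
\E\!\left[x_k^{\,s-1}\right]=\frac{1}{\Gamma(b_k)}\int_0^\infty x^{s+b_k-2}\e^{-x}\,dx=\frac{\Gamma(s+b_k-1)}{\Gamma(b_k)},
\]
so the product variable $z=x_1x_2\cdots x_n$ has Mellin transform $\prod_{k=1}^n\Gamma(s+b_k-1)/\prod_{k=1}^n\Gamma(b_k)$ on the common strip $\Re(s)>1-\min_k b_k$.

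Second I would identify this as the Mellin transform of $g^{(b_1,\ldots,b_n)}$. Specialising the definition of the Meijer $G$-function recalled above to $p=0$, $q=m=n$ (so that there are no numerator parameters and the non-coincidence condition on poles is vacuous) gives
\[
g^{(b_1,\ldots,b_n)}(z)=\frac{1}{\prod_{k=1}^n\Gamma(b_k)}\cdot\frac{1}{2\pi i}\int_C\prod_{j=1}^n\Gamma(b_j-1-s)\,z^{s}\,ds,
\]
where $C$ keeps the poles of the integrand (lying in $\Re s\geq\min_j b_j-1$) to its right. The substitution $s\mapsto-s$ turns this into a standard Mellin inversion integral, so the Mellin transform of $g^{(b_1,\ldots,b_n)}$ is precisely $\prod_{j=1}^n\Gamma(s+b_j-1)/\prod_{k=1}^n\Gamma(b_k)$, matching the first step. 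Since both the density of $z$ and $g^{(b_1,\ldots,b_n)}$ are densities on $\R_{\geq 0}$ whose Mellin transforms agree on a vertical strip of positive width, uniqueness of the Mellin transform forces them to coincide, which is the assertion.

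The step that needs genuine care is the analytic bookkeeping: one must check that the contour $C$ in the definition of $G^{n,0}_{0,n}$ can be placed inside the strip of holomorphy of $s\mapsto\prod_k\Gamma(s+b_k-1)$, that the resulting Mellin--Barnes integral converges absolutely (it does, since $\prod_{k=1}^n|\Gamma(\sigma+it+b_k-1)|$ decays like $\e^{-n\pi|t|/2}$ up to polynomial factors as $|t|\to\infty$ by Stirling, so $n\geq1$ suffices), and that $g^{(b_1,\ldots,b_n)}$ is a bona fide probability density, nonnegativity being automatic once the Mellin identity identifies it with the density of a product of nonnegative variables. As a consistency check and an alternative route, the statement also follows by induction on $n$ from the multiplicative convolution $g^{(b_1,\ldots,b_n)}(z)=\int_0^\infty g^{(b_1,\ldots,b_{n-1})}(z/t)\,f_{b_n}(t)\,t^{-1}dt$ together with the classical reduction formula for an integral of a Meijer $G$-function against $\e^{-t}t^{b_n-1}$, with base case $g^{(b_1)}=f_{b_1}$; this is essentially the computation of Springer and Thompson.
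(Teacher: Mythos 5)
Your proof is correct and follows the standard Mellin-transform argument: the Mellin transform of the product density factorizes into $\prod_k\Gamma(s+b_k-1)/\Gamma(b_k)$, and after the change of variable $s\mapsto -s$ the Mellin--Barnes contour integral defining $G^{n,0}_{0,n}(z\,|\,b_1-1,\ldots,b_n-1)$ is seen to be exactly the inverse Mellin transform of that product, with the absolute convergence and contour placement handled as you describe. The paper does not reprove the proposition but refers to Springer and Thompson \cite{Springer}, whose derivation is precisely this Mellin-transform computation, so your argument is essentially the same as the cited one.
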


The proof of Proposition \ref{PropositionSpringerThompson} can be found in  Springer and Thompson \cite{Springer}, Section 3 (note however, the misprint in the denominator there).
Starting from formula (\ref{JointDensityOfEigenvalues}) for the joint density of eigenvalues of the product matrix $P_n$, and applying
Theorem \ref{TheoremIndependence} together with Proposition \ref{PropositionSpringerThompson} we obtain

\begin{thm}\label{CorollaryExplicitDistribution}
Let $P_n$  be a product of $n$ independent random matrices from the induced complex Ginibre ensemble.  The set of absolute values of eigenvalues of $P_n$ has the same distribution as a set of independent random variables
\[
\big\{R_1^{(m_1,\ldots,m_n)},R_2^{(m_1,\ldots,m_n)},\ldots,R_N^{(m_1,\ldots,m_n)}\big\},
\]
and for each $k\ (1\leq k\leq N)$ the random variable $(R_k^{(m_1,\ldots,m_n)})^2$ has the same
distribution as the product of $n$ independent gamma variables $\GAMMA(k+m_1,1)$, $\ldots$, $\GAMMA(k+m_n,1)$.
Moreover,
the random variable $(R_k^{(m_1,\ldots,m_n)})^2$  has the density function given by the formula
\begin{equation}\label{equation g_i(x)}
g_k^{(m_1,\ldots,m_n)}(x)\\
=
\begin{cases}
\dfrac{G_{0,n}^{n,0}(x\,|\, k+m_1-1,  \ldots ,  k+m_n-1 )}{\prod_{a=1}^n\Gamma(k+m_a)}, & x\geq 0, \\
0, & x<0.
\end{cases}
\end{equation}
\end{thm}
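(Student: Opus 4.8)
This statement is a direct corollary of Theorem~\ref{PropositionJointDensityOfEigenvalues}, Theorem~\ref{TheoremIndependence} and Proposition~\ref{PropositionSpringerThompson}, so the plan is to chain them and then cast the resulting radial density into the Meijer $G$-form~\eqref{equation g_i(x)}. The first step is to note that the joint eigenvalue density~\eqref{JointDensityOfEigenvalues} of $P_n$ is exactly of the type~\eqref{JointDensityUniform Order} to which Theorem~\ref{TheoremIndependence} applies: the weight $w_n^{(m_1,\ldots,m_n)}$ depends on $z$ only through $|z|$, and it is normalisable because its radial moments $h_k^{(m_1,\ldots,m_n)}$ are finite, as recorded in~\eqref{Hk}. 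Applying Theorem~\ref{TheoremIndependence} then yields simultaneously the independence of the moduli $R_1^{(m_1,\ldots,m_n)},\ldots,R_N^{(m_1,\ldots,m_n)}$ and the formula that $(R_k^{(m_1,\ldots,m_n)})^2$ has a density on $\R_{\geq 0}$ proportional to $y^{k-1}w_n^{(m_1,\ldots,m_n)}(\sqrt{y})$, with the proportionality constant determined by $h_{k-1}^{(m_1,\ldots,m_n)}$.

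The second step is to identify this radial density. Inserting~\eqref{WeightFunction} gives $y^{k-1}w_n^{(m_1,\ldots,m_n)}(\sqrt{y}) = \pi^{n-1}\,y^{k-1}G_{0,n}^{n,0}(y\,|\,m_1,\ldots,m_n)$, and the elementary scaling relation for the Meijer $G$-function (Luke~\cite{Luke}, Section~5.2), $x^{\alpha}G_{0,n}^{n,0}(x\,|\,b_1,\ldots,b_n)=G_{0,n}^{n,0}(x\,|\,b_1+\alpha,\ldots,b_n+\alpha)$, applied with $\alpha=k-1$ and $b_a=m_a$, turns this into $\pi^{n-1}\,G_{0,n}^{n,0}(y\,|\,k+m_1-1,\ldots,k+m_n-1)$. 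The normalising constant is pinned down by the Mellin integral $\int_0^\infty G_{0,n}^{n,0}(x\,|\,b_1,\ldots,b_n)\,dx=\prod_{j}\Gamma(b_j+1)$ (which also underlies~\eqref{Hk}), so that $\int_0^\infty G_{0,n}^{n,0}(x\,|\,k+m_1-1,\ldots,k+m_n-1)\,dx=\prod_{a=1}^n\Gamma(k+m_a)$; hence $(R_k^{(m_1,\ldots,m_n)})^2$ has precisely the density~\eqref{equation g_i(x)}.

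Finally, I would invoke Proposition~\ref{PropositionSpringerThompson} with $b_a=k+m_a$, which is legitimate since $k\geq 1$ and $m_a\geq 0$ so that all $b_a>0$: the product of the independent gamma variables $\GAMMA(k+m_1,1),\ldots,\GAMMA(k+m_n,1)$ has density $G_{0,n}^{n,0}(x\,|\,k+m_1-1,\ldots,k+m_n-1)/\prod_{a=1}^n\Gamma(k+m_a)$, which coincides with~\eqref{equation g_i(x)}; since a law on $\R_{\geq 0}$ is determined by its density, $(R_k^{(m_1,\ldots,m_n)})^2$ has the same distribution as that product, and the proof is complete. There is no substantial obstacle here: everything beyond quoting the three earlier results is the bookkeeping of the $\pi$- and $\Gamma$-factors that reconciles the radial density coming out of Theorem~\ref{TheoremIndependence} with the Springer--Thompson product-of-gammas form, and this reconciliation is nothing more than the scaling identity for $G_{0,n}^{n,0}$, equivalently the observation that both candidate densities carry the Mellin transform $\prod_{a=1}^n\Gamma(k+m_a+s-1)/\prod_{a=1}^n\Gamma(k+m_a)$. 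One should only note that the whole argument rests on the $N$-independent measure~\eqref{inducedGin}, so that the conclusion is an exact statement valid for every fixed finite $N$.
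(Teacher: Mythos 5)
Your proposal is correct and follows the same route as the paper: apply Theorem~\ref{TheoremIndependence} to the joint density~\eqref{JointDensityOfEigenvalues} to obtain independence of the moduli and a radial density proportional to $y^{k-1}w_n^{(m_1,\ldots,m_n)}(\sqrt{y})$, then identify that density via Proposition~\ref{PropositionSpringerThompson} with the product of gamma variables. The paper's proof leaves the passage from $y^{k-1}\pi^{n-1}G_{0,n}^{n,0}(y\,|\,m_1,\ldots,m_n)/h_{k-1}$ to the normalized form~\eqref{equation g_i(x)} implicit, while you spell it out via the Meijer $G$-function shift identity $x^{\alpha}G_{0,n}^{n,0}(x\,|\,b_1,\ldots,b_n)=G_{0,n}^{n,0}(x\,|\,b_1+\alpha,\ldots,b_n+\alpha)$ and the Mellin normalization, which is a faithful unpacking of the same bookkeeping rather than a different argument.
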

The result of Theorem \ref{CorollaryExplicitDistribution} generalises our previous results \cite{AkemannStrahov}, see Theorem 3.1 and 3.2.
There the authors considered the case when products of matrices are taken form \textit{the same} classical complex Ginibre ensemble.  Theorem \ref{CorollaryExplicitDistribution} describes distribution
of absolute values of eigenvalues  in the case when the matrices involved in the product are taken from the induced complex Ginibre ensembles with \textit{different}  parameters $m_j$.


\subsection{Hole probabilities in the generalised induced complex Ginibre ensemble}
\label{SectionHole}

Denote by $\mathcal{N}_{GG}^{(m_1,\ldots,m_n)}(r;N)$ the number of points of the generalised finite-$N$ induced complex Ginibre ensemble with  parameters $m_1, \ldots, m_n$ which are located in the disk of radius $r$ centered at the origin.
We are interested to investigate the hole probability, i.e. the probability of the event that there are no eigenvalues of $P_n$ in the disc of a radius $r$ around the origin.
We denote this probability by $\Prob\big\{\mathcal{N}_{GG}^{(m_1,\ldots,m_n)}(r;N)=0\big\}$. Starting from Theorem \ref{CorollaryExplicitDistribution} an exact expression can be given for this probability in terms of Meijer $G$-functions:
\begin{thm}\label{PropositionExactFormulae}
The hole probability $\Prob\big\{\mathcal{N}_{GG}^{(m_1,\ldots,m_n)}(r;N)=0\big\}$ for the generalised induced finite-$N$ complex Ginibre ensemble with  parameters $m_1$, $\ldots$, $m_n$ can be written as
\[
\Prob\left\{\mathcal{N}_{GG}^{(m_1,\ldots,m_n)}(r;N)=0\right\}=
\prod\limits_{k=1}^N\frac{G_{1,n+1}^{n+1,0}\left(r^2\,\bigg|\begin{array}{cccc}
                                                                  1    \\
                                                                0,  k+m_1,  \ldots,  k+m_n
                                                              \end{array}
\right)}{\prod_{a=1}^n\Gamma(k+m_a)}.
\]
\end{thm}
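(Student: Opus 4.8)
The plan is to reduce the statement to a one-variable computation by means of Theorem~\ref{CorollaryExplicitDistribution}, and then to integrate a single Meijer $G$-function. By Theorem~\ref{CorollaryExplicitDistribution} the set of moduli of the eigenvalues of $P_n$ has the same law as a set $\{R_1,\dots,R_N\}$ of \emph{independent} random variables, where $R_k^2$ has the density $g_k^{(m_1,\dots,m_n)}$. The event $\{\mathcal N_{GG}^{(m_1,\dots,m_n)}(r;N)=0\}$ says precisely that every eigenvalue satisfies $|z_i|>r$, equivalently $R_k^2>r^2$ for all $k$, so by independence
\[
\Prob\big\{\mathcal N_{GG}^{(m_1,\dots,m_n)}(r;N)=0\big\}=\prod_{k=1}^N\Prob\{R_k^2>r^2\}=\prod_{k=1}^N\int_{r^2}^{\infty}g_k^{(m_1,\dots,m_n)}(x)\,dx .
\]
In view of the explicit formula for $g_k^{(m_1,\dots,m_n)}$ from Theorem~\ref{CorollaryExplicitDistribution}, the claim reduces to the single identity
\[
\int_{r^2}^{\infty}G_{0,n}^{n,0}\big(x\,\big|\,b_1,\dots,b_n\big)\,dx
=G_{1,n+1}^{n+1,0}\!\left(r^2\,\bigg|\begin{array}{c}1\\ 0,\,b_1+1,\dots,b_n+1\end{array}\right),\qquad b_j=k+m_j-1 .
\]

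To prove this identity I would use the Mellin--Barnes representation of Section~\ref{SectionComplexInducedGinibreResults}. Choose the contour $C$ to lie in $\Re s<-1$; this is admissible because the poles of $\prod_j\Gamma(b_j-s)$ sit at $s=b_j+\ell$ with $b_j=k+m_j-1\ge0$, $\ell\in\Zp$, hence to the right of $C$. For $\Re s<-1$ one has $\int_{r^2}^{\infty}x^{s}\,dx=-(r^2)^{s+1}/(s+1)$, and since $\big|\prod_j\Gamma(b_j-s)\big|$ decays exponentially as $|\Im s|\to\infty$ along $C$ while $\int_{r^2}^{\infty}x^{\Re s}\,dx<\infty$, Fubini's theorem permits interchanging the $x$- and $s$-integrations, giving
\[
\int_{r^2}^{\infty}G_{0,n}^{n,0}\big(x\,\big|\,b_1,\dots,b_n\big)\,dx
=\frac{1}{2\pi i}\int_C\Big(\prod_{j=1}^n\Gamma(b_j-s)\Big)\frac{-(r^2)^{s+1}}{s+1}\,ds .
\]
Writing $-(s+1)^{-1}=\Gamma(-s-1)/\Gamma(-s)$ and substituting $u=s+1$ turns the right-hand side into
\[
\frac{1}{2\pi i}\int_{C'}\Gamma(-u)\Big(\prod_{j=1}^n\Gamma(b_j+1-u)\Big)\frac{1}{\Gamma(1-u)}\,(r^2)^{u}\,du ,
\]
where $C'$ lies in $\Re u<0$. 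This is exactly the Mellin--Barnes integral defining $G_{1,n+1}^{n+1,0}$ with upper parameter $1$ and lower parameters $0,b_1+1,\dots,b_n+1$: there are $n+1$ numerator factors $\Gamma(\tilde b_j-u)$ with $\tilde b_j\in\{0,b_1+1,\dots,b_n+1\}$, no numerator factors $\Gamma(1-\tilde a_j+u)$, and the single denominator factor $\Gamma(1-u)=\Gamma(\tilde a_1-u)$ with $\tilde a_1=1$; one checks that $C'$ separates the poles in the way required by that definition. Substituting $b_j+1=k+m_j$ and restoring the normalisation $\big(\prod_{a=1}^n\Gamma(k+m_a)\big)^{-1}$ yields the $k$-th factor of the asserted product, and multiplying over $k=1,\dots,N$ completes the proof.

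An alternative that avoids the choice of contour is to write $\int_{r^2}^{\infty}g_k^{(m_1,\dots,m_n)}=1-\int_0^{r^2}g_k^{(m_1,\dots,m_n)}$, using that $g_k^{(m_1,\dots,m_n)}$ is a probability density, evaluate $\int_0^{r^2}G_{0,n}^{n,0}(x\,|\,b_1,\dots,b_n)\,dx$ by the standard formula for the primitive of a Meijer $G$-function (Luke~\cite{Luke}), and then collapse the resulting expression to the same $G_{1,n+1}^{n+1,0}$ by a routine contiguous-parameter reduction. Either way there is no probabilistic difficulty once Theorem~\ref{CorollaryExplicitDistribution} is available; the only delicate points are the justification of the interchange of integrations (or, in the variant, the parameter reduction) and the verification that the contour in the output Meijer $G$-function separates its poles correctly, and I expect this Meijer $G$-bookkeeping to be the only — and rather mild — obstacle.
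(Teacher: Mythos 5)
Your proposal is correct and follows the same route as the paper: reduce to a product of independent tail probabilities via Theorem~\ref{CorollaryExplicitDistribution}, then evaluate $\int_{r^2}^{\infty}G_{0,n}^{n,0}(x\,|\,k+m_1-1,\dots,k+m_n-1)\,dx$ as a $G_{1,n+1}^{n+1,0}$. The only difference is that the paper simply cites Luke, Section 5.6, for that integral, whereas you spell it out via the Mellin--Barnes representation; your bookkeeping (choice of contour, the shift $u=s+1$, $-(s+1)^{-1}=\Gamma(-s-1)/\Gamma(-s)$, Fubini by exponential decay of the integrand) is all sound.
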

\begin{rem}
As for $n=1$, we can write
\[
G_{1,2}^{2,0}\left(r^2\biggl|\begin{array}{cc}
                                                                  1 \\
                                                                0,  k+m_1
                                                              \end{array}
\right)=\int_{r^2}^{\infty}\e^{-t}t^{k+m_1-1}dt=\Gamma(k+m_1,r^2).
\]
Thus we see that once $n=1$ the formula in Theorem \ref{PropositionExactFormulae} is reduced to
\[
\Prob\left\{\mathcal{N}_{GG}^{(m_1)}(r;N)=0\right\}=\prod\limits_{k=1}^{N}\frac{\Gamma(k+m_1,r^2)}{\Gamma(k+m_1)},
\]
for the induced Ginibre ensemble of a single matrix.
The formula just written above for $\Prob\big\{\mathcal{N}_{GG}^{(n=1,m_1)}(r;N)=0\big\}$ was found in  Fischmann, Bruzda,  Khoruzhenko, and Sommers \cite{Fischmann}. For $m_1=0$ it can be already found in the book of Mehta \cite{Mehta}.
\end{rem}
Theorem \ref{PropositionExactFormulae} gives  an exact formula for the hole probability in the case the products of $n$ matrices of finite size. In order to obtain the decay of the hole probability
as $r\rightarrow\infty$ we use the asymptotic expansion of the Meijer $G$-function for large values of its argument as given in Luke \cite{Luke}, Section 5.7, Theorem 5. The result is

\begin{thm}\label{TheoremFiniteNAsymptotics}
For fixed $N$ and as $r\rightarrow\infty$, we have
\begin{multline*}
\Prob\left\{\mathcal{N}_{GG}^{(m_1,\ldots,m_n)}(r;N)=0\right\} =\
\frac{(2\pi)^{\frac{N(n-1)}{2}}}{n^{\frac{N}{2}}\prod_{k=1}^N\prod_{a=1}^n\Gamma(k+m_a)}
\\
\times
\exp\left\{-nNr^{\frac{2}{n}}+N\left(N+\frac{2(m_1+\ldots+m_n)-1}{n}\right)\log r\right\}\left(1+O(r^{-\frac{2}{n}})\right).
\end{multline*}
\end{thm}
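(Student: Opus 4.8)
The plan is to feed the exact product formula of Theorem~\ref{PropositionExactFormulae} into the large-argument asymptotics of the Meijer $G$-function and collect terms. Every factor of that product is built from $G_{1,n+1}^{n+1,0}(r^2\,|\,\cdots)$, a function of type $G_{p,q}^{q,0}$ with $q=n+1>p=1$; for such a $G$-function no algebraic (power-law) contribution appears in the asymptotics as the argument tends to $+\infty$, so the behaviour is purely exponentially small and is governed by the expansion in Luke~\cite{Luke}, Section~5.7, Theorem~5. Its hypotheses hold trivially here, the argument $r^2$ being real and positive.

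Explicitly, set $\nu\equiv q-p=n$. The cited expansion gives, as $x\to+\infty$,
\begin{equation*}
G_{1,n+1}^{n+1,0}\!\left(x\,\bigg|\,\begin{array}{c}1\\ 0,\,k+m_1,\ldots,k+m_n\end{array}\right)
=\frac{(2\pi)^{\frac{n-1}{2}}}{n^{1/2}}\,x^{\theta_k}\,\e^{-n\,x^{1/n}}\bigl(1+O(x^{-1/n})\bigr),
\end{equation*}
where $\theta_k=\tfrac1\nu\bigl(\sum_{j=1}^{n+1}b_j-\sum_{j=1}^{1}a_j-\tfrac{\nu-1}{2}\bigr)$, with $\sum_{j=1}^{n+1}b_j=\sum_{a=1}^n(k+m_a)$ (recall $b_1=0$) and $\sum_{j=1}^1 a_j=1$; that is,
\[
\theta_k=k-\tfrac12+\frac{2(m_1+\cdots+m_n)-1}{2n}.
\]
Putting $x=r^2$ turns the $k$-th numerator in Theorem~\ref{PropositionExactFormulae} into $\dfrac{(2\pi)^{(n-1)/2}}{n^{1/2}}\,r^{\,2k-1+\frac{2(m_1+\cdots+m_n)-1}{n}}\,\e^{-nr^{2/n}}\bigl(1+O(r^{-2/n})\bigr)$.

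The remaining step is to multiply these $N$ factors over $k=1,\ldots,N$. The numerical prefactors combine to $(2\pi)^{N(n-1)/2}\,n^{-N/2}$; the exponentials to $\e^{-nNr^{2/n}}$; the denominators reassemble into $\bigl(\prod_{k=1}^N\prod_{a=1}^n\Gamma(k+m_a)\bigr)^{-1}$; and the powers of $r$ combine to $r^{2\Theta}$ with $\Theta:=\sum_{k=1}^N\theta_k$. Since $\sum_{k=1}^N(2k-1)=N^2$, one gets $2\Theta=N^2+N\bigl(2(m_1+\cdots+m_n)-1\bigr)/n=N\bigl(N+(2(m_1+\cdots+m_n)-1)/n\bigr)$, so $r^{2\Theta}=\exp\{N(N+(2(m_1+\cdots+m_n)-1)/n)\log r\}$, which is exactly the exponential prefactor in the statement. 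As $N$ is fixed, the product of the $N$ relative errors $1+O(r^{-2/n})$ is again $1+O(r^{-2/n})$, giving the claimed error bound. Collecting all pieces reproduces the asserted formula.

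The only genuine work here is clerical: pinning down Luke's normalisation constant $(2\pi)^{(\nu-1)/2}/\sqrt{\nu}$ and the exponent $\theta$ correctly, and bookkeeping the sums over $k$ and over $a$. A convenient sanity check is $n=1$, where---as noted in the Remark following Theorem~\ref{PropositionExactFormulae}---$G_{1,2}^{2,0}(x\,|\,1;\,0,c)=\Gamma(c,x)\sim x^{c-1}\e^{-x}$; this matches the displayed expansion with $\nu=1$, constant $1$, and $\theta_k=k+m_1-1$, and on forming the product over $k$ it recovers the $n=1$ case of the theorem.
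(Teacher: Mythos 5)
Your argument is exactly the paper's: apply Luke's asymptotic expansion (Section 5.7, Theorem 5) for $G_{p,q}^{q,0}$ to each factor of the exact product formula from Theorem~\ref{PropositionExactFormulae}, compute $\sigma=n$ and the exponent $\theta_k$, and multiply over $k=1,\ldots,N$. All the bookkeeping (constant prefactor, exponential, power of $r$, and the $O(r^{-2/n})$ error for fixed $N$) checks out, so the proposal is correct and takes essentially the same route as the paper.
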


Now we turn to the case of the generalised infinite induced  Ginibre ensemble with parameters $m_1,m_2\ldots,m_n$.
Starting from Theorem \ref{CorollaryExplicitDistribution} we are able to find upper and low bounds for the hole probability
$\Prob\big\{\mathcal{N}_{GG}^{(m_1,\ldots,m_n)}(r;\infty)=0\big\}$, from which we can determine the limiting leading order bahaviour. Namely, the following result holds true.

\begin{thm}\label{TheoremBoundsForHoleProbabilities} A) (Upper bound for the hole probability.)
We have
\begin{equation}
\Prob\left\{\mathcal{N}_{GG}^{(m_1,\ldots,m_n)}(r;\infty)=0\right\}
\leq\exp\left\{-\frac{n}{4}r^{\frac{4}{n}}+n(m_1+\cdots+m_n)r^{\frac{2}{n}}+O\left(\log r\right)\right\},
\end{equation}
as $r\rightarrow\infty$.\\
B) (Lower bound for the hole probability.)
We have
\begin{multline}\label{MainLowerBoundInequality}
\Prob\left\{\mathcal{N}_{GG}^{(m_1,\ldots,m_n)}(r;\infty)=0\right\}
\geq\exp\Big\{-\frac{n}{4}r^{\frac{4}{n}}-\frac{n}{2}r^{\frac{2}{n}}\log r^{\frac{2}{n}} \\
-\sum_{a=1}^n\left(C(m_a)-m_a+\log\Gamma(m_a+1)+\log 4\right)r^{\frac{2}{n}}+O\left(\log r\right)\Big\},
\end{multline}
as $r\rightarrow\infty$, where  $C(m)$
is a constant given by
\begin{equation}\label{TheConstantC}
C(m)=m+1-\frac{1}{12(m+1)}-\left(\tfrac{1}{2}+m\right)\log(1+m)+\frac{1}{2}\int_1^{\infty}\frac{P_2(x)dx}{(m+x)^2}.
\end{equation}
Here $P_2(x)$ is the second Bernoulli periodic function
\[
P_2(x)=B_2(x-\lfloor x\rfloor).
\]
with $B_2(x)=x^2-x+\frac16$ and $\lfloor x\rfloor$ denoting the integer part of $x$. 

In particular we obtain
\begin{equation}\label{HoleLimit}
\underset{r\rightarrow\infty}{\lim}\left(\frac{1}{r^{\frac{4}{n}}}
\log\left(\Prob\left\{\mathcal{N}_{GG}^{(m_1,\ldots,m_n)}(r;\infty)\right\}=0\right)\right)=-\frac{n}{4}.
\end{equation}
\end{thm}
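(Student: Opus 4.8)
The plan is to reduce everything to Theorem \ref{CorollaryExplicitDistribution}, which tells us that for the infinite ensemble the hole probability factorizes as
\[
\Prob\left\{\mathcal{N}_{GG}^{(m_1,\ldots,m_n)}(r;\infty)=0\right\}=\prod_{k=1}^\infty \Prob\left\{(R_k^{(m_1,\ldots,m_n)})^2\geq r^2\right\},
\]
where $(R_k^{(m_1,\ldots,m_n)})^2$ is distributed as a product of $n$ independent gammas $\GAMMA(k+m_a,1)$. So taking logarithms, the task is to estimate $\sum_{k\geq 1}\log\Prob\{\prod_{a=1}^n X_{k,a}\geq r^2\}$ from above and below, with $X_{k,a}\sim\GAMMA(k+m_a,1)$. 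Both bounds will come from the same idea: a product of gammas $\prod_a \GAMMA(k+m_a,1)$ is sharply concentrated near $\prod_a(k+m_a)\approx k^n$, so the tail event $\{\prod_a X_{k,a}\geq r^2\}$ is ``typical'' roughly when $k\gtrsim r^{2/n}$ and ``atypical'' (a large-deviation event) when $k\ll r^{2/n}$. This is the same mechanism used in \cite{AkemannStrahov}, so the structure of the argument should parallel that reference, with the $m_a$-dependence tracked carefully.

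For the upper bound (part A), I would bound each factor by a tractable tail estimate. A clean way is to use, for one gamma, $\Prob\{X\geq t\}\leq e^{-t}t^{a}/\Gamma(a)\cdot(\text{something})$, or more simply a Markov/Chernoff bound $\Prob\{X\geq t\}\leq e^{-\lambda t}\E[e^{\lambda X}]$; but since we need a product of gammas it is cleaner to use the exact density $g_k^{(m_1,\ldots,m_n)}$ and bound the incomplete Meijer $G$-function tail, or to use the crude but sufficient bound $\Prob\{\prod_a X_{k,a}\geq r^2\}\leq \Prob\{X_{k,a_0}\geq r^{2/n}\ \text{for some }a_0\}$ combined with a union bound and the single-gamma tail $\Prob\{\GAMMA(b,1)\geq s\}\leq s^{b-1}e^{-s}/\Gamma(b)$ valid for $s\geq b$. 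Summing the resulting bound over $k$ from $1$ to roughly $r^{2/n}$ (larger $k$ contribute factors close to $1$, i.e. $\log\approx 0$, and there are about $r^{4/n}$... no — one must be careful: the product over all $k\geq 1$ of factors near $1$ must still be controlled, since there are infinitely many). The key observation is that for $k$ somewhat larger than $r^{2/n}$ the complementary probability $\Prob\{\prod_a X_{k,a}< r^2\}$ is super-exponentially small in $k$, so $\sum_k \log\Prob\{\ge r^2\}=\sum_k\log(1-\Prob\{<r^2\})$ converges and its tail is negligible compared to $r^{4/n}$. The dominant contribution is $\sum_{k\lesssim r^{2/n}}\log\Prob\{\prod_a X_{k,a}\geq r^2\}$, and approximating the sum by an integral via Euler--Maclaurin (using the higher-order terms recalled in Appendix \ref{App}) gives the leading $-\frac n4 r^{4/n}$ together with the $+n(m_1+\cdots+m_n)r^{2/n}$ correction and an $O(\log r)$ error.

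For the lower bound (part B) I would drop the hard constraint to a single product of one-sided events: $\Prob\{\prod_a X_{k,a}\geq r^2\}\geq \prod_a \Prob\{X_{k,a}\geq r^{2/n}\}$ by independence, so $\log\Prob\{\mathcal N=0\}\geq \sum_{k\geq 1}\sum_{a=1}^n\log\Prob\{\GAMMA(k+m_a,1)\geq r^{2/n}\}$. Then for each fixed $a$ one estimates $\sum_{k\geq 1}\log\Prob\{\GAMMA(k+m_a,1)\geq s\}$ with $s=r^{2/n}\to\infty$ using the exact identity $\Prob\{\GAMMA(b,1)\geq s\}=\Gamma(b,s)/\Gamma(b)$ and splitting the sum at $k\approx s$: for $k\leq s$ one has $\Gamma(k+m_a,s)/\Gamma(k+m_a)$ small, governed by the lower incomplete part; for $k>s$ the ratio is close to $1$. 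Replacing the sum by an integral and using Stirling together with the Euler--Maclaurin remainder produces exactly the constant $C(m_a)$ in \eqref{TheConstantC} — that is where the Bernoulli periodic function $P_2$ and the $-\tfrac1{12(m+1)}$ term enter. I expect the main obstacle to be precisely this bookkeeping: matching the subleading $r^{2/n}$ and $r^{2/n}\log r$ coefficients on the nose (not just the leading $-\tfrac n4 r^{4/n}$), which forces one to carry Stirling's expansion and the Euler--Maclaurin correction to the right order and to be honest about the error terms uniformly in $k$. Once both bounds are in hand, \eqref{HoleLimit} is immediate: dividing the logarithm by $r^{4/n}$ and letting $r\to\infty$, the upper and lower bounds both tend to $-\tfrac n4$, so the limit exists and equals $-\tfrac n4$.
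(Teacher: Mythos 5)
Your reduction of the hole probability to $\prod_{k\ge 1}\Prob\{(R_k^{(m_1,\ldots,m_n)})^2>r^2\}$ via Theorem \ref{CorollaryExplicitDistribution}, and your lower-bound step
\[
\Prob\Bigl\{\textstyle\prod_{a=1}^n X_{k,a}\geq r^2\Bigr\}\geq\prod_{a=1}^n\Prob\bigl\{X_{k,a}\geq r^{2/n}\bigr\},
\]
are exactly what the paper does; the paper then handles the single gamma at threshold $r^{2/n}$ (Proposition \ref{PropositionLowerBound1}) by writing $\Prob\{\GAMMA(k+m,1)>\lambda\}=\Prob\{\Poisson(\lambda)<k+m\}$, keeping only the top Poisson term for $k\le r^2$, and using crude constant lower bounds on the middling and large-$k$ factors, so that the constant $C(m)$ arises from Euler--Maclaurin. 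That part of your sketch is on the right track.

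The upper bound is where there is a genuine gap. The union bound
\[
\Prob\Bigl\{\textstyle\prod_{a=1}^n X_{k,a}\geq r^2\Bigr\}\leq\sum_{a=1}^n\Prob\bigl\{X_{k,a}\geq r^{2/n}\bigr\}
\]
is \emph{not} ``crude but sufficient''; it is off by a factor of $n$ in the leading exponent. For $k\ll r^{2/n}$ the event $\{\prod_a X_{k,a}\geq r^2\}$ is, at large-deviation level, achieved by \emph{all} $n$ factors being of order $r^{2/n}$ simultaneously, costing roughly $n\,r^{2/n}$ per $k$; the union bound only charges the cost of a \emph{single} factor exceeding $r^{2/n}$, roughly $r^{2/n}$ per $k$. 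Concretely, for $n=2$ with $X_1,X_2$ exponential,
\[
\Prob\{X_1X_2\geq s\}=\int_0^\infty e^{-x-s/x}\,dx\sim \sqrt{\pi}\,s^{1/4}e^{-2\sqrt{s}}
\]
by Laplace's method, whereas the union bound gives only $2e^{-\sqrt{s}}$. Summing your bound over $k\lesssim r^{2/n}$ would therefore yield an exponent of order $-\tfrac14 r^{4/n}$, not the required $-\tfrac{n}{4}r^{4/n}$. The device you are missing is to apply Markov's inequality to the \emph{product} variable itself with a free power: $\Prob\{(R_k)^2>r^2\}\le r^{-2b}\,\E[(R_k)^{2b}]=r^{-2b}\prod_{a=1}^n\Gamma(k+m_a+b)/\Gamma(k+m_a)$, and then \emph{optimize} $b$ (the paper sets $b=r^{2/n}-k$), which forces all $n$ gammas into the deviation regime at once and produces $-n r^{2/n}$ per factor, hence $-\tfrac n4 r^{4/n}$ after the Euler--Maclaurin summation. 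Your first suggestion (``bound the incomplete Meijer $G$-tail'') is not wrong in spirit, but as written does not indicate how to recover the sharp constant; the Markov-with-optimized-power estimate is the concrete mechanism.
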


As we can see the first term in the asymptotic expansion of the logarithm of the hole probability does not depend on the numbers $m_1$, $\ldots$, $m_n$
characterizing the induced Ginibre ensembles. The statistical quantity defined in the left-hand of equation (\ref{HoleLimit})
depends only on the number of matrices in the product and is thus universal.
We note that in the special cases $n=1$ with square \cite{Peter} and $n=2$ with rectangular matrices \cite{APS} additional higher order terms in the asymptotic expansion were computed.


\subsection{Overcrowding estimates}\label{SectionOvercrowdingEstimates}
Let us reconsider a disk with a fixed radius $r>0$ centered  at the origin. We want to estimate the probability that there are more than $q$ points in this disc for the generalised infinite Ginibre ensemble, i.e. $\Prob\left\{\mathcal{N}^{(m_1,\ldots,m_n)}\left(r;\infty\right)\geq q\right\}$ (the probability of overcrowding). We are particularly interested in the decay of this probability when $q\rightarrow\infty$. In the context of zeros of Gaussian analytic functions the overcrowding problem was studied in Krishnapur \cite{Krishnapur}. It was shown in \cite{Krishnapur} that the probability of the event that in the disc with a fixed radius there are more than $q$ zeros of the Gaussian analytic function decays (as $q\rightarrow\infty$)  in the same way as the corresponding probability for the classical Ginibre ensemble.  The fact that the absolute values of the eigenvalues of products of random matrices are distributed as products 
of independent Gamma variables 
(see Theorem \ref{CorollaryExplicitDistribution}) enables us to get upper and lower bounds on the probabilities under considerations, and show a different behavior  of the overcrowding probabilities for products of random matrices than that in the case of the Gaussian analytic functions.
The result is

\begin{thm}\label{TheoremOvercrowdingBounds}
A) (Lower bound for the overcrowding probability.) As $q\rightarrow\infty$,
\begin{multline*}
\Prob\left\{{\mathcal{N}}_{GG}^{(m_1,\ldots,m_n)}(r;\infty)\geq q\right\}
\geq\exp\Bigl\{
-\frac{n}{2}q^2\log q +\left(\frac{n}{4}+\log r-\frac{n}{2}\log 2\right)q^2 \\
-\sum_{a=1}^n\left(m_a+\tfrac{1}{2}\right)q\log q
+O\left(q\right)\Bigr\}.
\end{multline*}
B) (Upper bound  for the overcrowding probability.) As $q\rightarrow\infty$,
\begin{multline*}
\Prob\left\{{\mathcal{N}}_{GG}^{(m_1,\ldots,m_n)}(r;\infty)\geq q\right\}
\leq \exp\Bigl\{-\frac{n}{2}q^2\log q+\Big(\frac{3n}{4}+\log r\Big)q^2 \\
+\Big(2-\sum_{a=1}^{n}m_a\Big)q\log q+O(q)\Bigr\}.
\end{multline*}
In particular, we conclude that for a fixed radius $r>0$ to leading order
\[
\Prob\left\{{\mathcal{N}}_{GG}^{(m_1,\ldots,m_n)}(r;\infty)\geq q\right\}=\exp\left\{-\frac{n}{2}q^2\log q\left(1+o(1)\right)\right\},
\]
as $q\to\infty$.
\end{thm}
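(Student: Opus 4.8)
The plan is to read off, from Theorem~\ref{CorollaryExplicitDistribution}, that the radii $R_k:=R_k^{(m_1,\ldots,m_n)}$ ($k\ge 1$) of the infinite ensemble are independent and that $R_k^2$ is distributed as a product $X_{k,1}\cdots X_{k,n}$ of independent gamma variables $X_{k,a}\sim\GAMMA(k+m_a,1)$. Consequently
\[
\mathcal N_{GG}^{(m_1,\ldots,m_n)}(r;\infty)=\sum_{k\ge 1}\mathbf 1_{\{R_k\le r\}}
\]
is a sum of \emph{independent} Bernoulli variables with parameters $p_k:=\Prob\{R_k^2\le r^2\}$; since each $X_{k,a}$ is stochastically increasing in $k$, the sequence $(p_k)$ is non-increasing, and (by the bound below) summable, so $\mathcal N_{GG}^{(m_1,\ldots,m_n)}(r;\infty)$ is a.s.\ finite. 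For the \textbf{lower bound} I would simply require the first $q$ radii to fall inside the disc: by independence
\[
\Prob\{\mathcal N_{GG}^{(m_1,\ldots,m_n)}(r;\infty)\ge q\}\ \ge\ \prod_{k=1}^{q}p_k ,
\]
and each $p_k$ is bounded below by forcing every factor $X_{k,a}$ to lie below $r^{2/n}$ (so that the product is at most $r^2$):
\[
p_k\ \ge\ \prod_{a=1}^{n}\Prob\{X_{k,a}\le r^{2/n}\},\qquad
\Prob\{X_{k,a}\le t\}=\frac{1}{\Gamma(k+m_a)}\int_0^{t}x^{k+m_a-1}\e^{-x}\,dx\ \ge\ \e^{-t}\,\frac{t^{k+m_a}}{\Gamma(k+m_a+1)},
\]
with $t=r^{2/n}$. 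Taking logarithms, summing over $k=1,\dots,q$, and evaluating $\sum_k\log\Gamma(k+m_a+1)$ and $\sum_k k$ by Stirling's formula together with the higher-order Euler--Maclaurin terms recalled in Appendix~\ref{App}, one arrives at the claimed lower bound.

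For the \textbf{upper bound}, since $\binom{\mathcal N}{q}\ge 1$ on $\{\mathcal N\ge q\}$,
\[
\Prob\{\mathcal N_{GG}^{(m_1,\ldots,m_n)}(r;\infty)\ge q\}\ \le\ \E\binom{\mathcal N_{GG}^{(m_1,\ldots,m_n)}(r;\infty)}{q}\ =\ e_q(p_1,p_2,\ldots),
\]
the $q$-th elementary symmetric function of the $p_k$ (finite, as $\sum_k p_k<\infty$). Here the naive estimate ``some factor is $\le r^{2/n}$'' loses a factor $n$ in the exponent and is useless, so instead I would apply the negative-moment (Markov) inequality: for $0<\lambda<k+\min_a m_a$,
\[
p_k=\Prob\Big\{\prod_{a=1}^{n}X_{k,a}\le r^2\Big\}\ \le\ r^{2\lambda}\prod_{a=1}^{n}\E\big[X_{k,a}^{-\lambda}\big]\ =\ r^{2\lambda}\prod_{a=1}^{n}\frac{\Gamma(k+m_a-\lambda)}{\Gamma(k+m_a)} ,
\]
and take $\lambda=k$ --- admissible precisely because every $m_a>0$ --- to get $p_k\le r^{2k}\prod_{a}\Gamma(m_a)/\Gamma(k+m_a)$, which in particular decays super-geometrically. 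That decay lets one dominate the elementary symmetric function by its largest monomial: the bound $e_q\le x^{-q}\prod_k(1+p_kx)$ evaluated at $x=1/p_q$ gives $e_q(p_1,p_2,\dots)\le C^{q}\prod_{k=1}^{q}p_k$ for an absolute constant $C$ and all large $q$, the factor $C^{q}$ being harmless in the $O(q)$ error. Logarithms, summation, and Stirling/Euler--Maclaurin as before then give the stated upper bound. The leading terms of the two bounds agree, yielding $\Prob\{\mathcal N_{GG}^{(m_1,\ldots,m_n)}(r;\infty)\ge q\}=\exp\{-\tfrac n2 q^2\log q\,(1+o(1))\}$.

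The step I expect to be the real obstacle is the upper bound. Pinning down the leading constant $-\tfrac n2$ requires an estimate for $p_k$ that is sharp at the level $-nk\log k$; a union bound over the $n$ gamma factors is off by a factor $n$, and it is the endpoint choice $\lambda=k$ in the negative-moment bound that supplies the right exponent --- this is exactly the point where strict positivity of all $m_a$ is used, since otherwise $\Gamma(m_a)=\Gamma(0)$ diverges. Reducing $\{\mathcal N\ge q\}$ to the single product $\prod_{k=1}^{q}p_k$ (i.e.\ dominating $e_q(p_1,p_2,\dots)$ by its largest monomial) is the other place where one must carefully exploit the super-geometric decay of $(p_k)$, which itself only becomes available after the Markov bound on $p_k$ has been established.
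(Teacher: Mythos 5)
Your proposal is correct, and it reaches the same leading asymptotics as the paper, but both halves of your argument take a genuinely different route and, if the constants are traced through, actually yield sharper bounds than those stated in the theorem.

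For the \textbf{lower bound}, the paper writes each gamma variable as a sum of $k+m_a$ i.i.d.\ exponentials and uses $\Prob\{\xi<x\}\ge x/2$, obtaining $p_k\ge\prod_a\bigl(\tfrac{r^{2/n}}{2(k+m_a)}\bigr)^{k+m_a}$; you instead bound the incomplete gamma integral directly via $\int_0^t x^{k+m_a-1}\e^{-x}\,dx\ge\e^{-t}t^{k+m_a}/(k+m_a)$. Both are valid, and after summing with Euler--Maclaurin your bound's $q^2$ coefficient works out to $\tfrac{3n}{4}+\log r$, which exceeds the paper's $\tfrac{n}{4}-\tfrac{n}{2}\log 2+\log r$ --- a strictly larger lower bound, so the stated inequality still follows. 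It is worth noting, though, that you would \emph{not} literally reproduce the coefficients in the theorem statement, which were tuned to the paper's weaker pointwise estimate; since a larger lower bound trivially implies the theorem's, this is not a defect, merely a discrepancy you should be aware of if you intend to ``arrive at the claimed lower bound.''

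For the \textbf{upper bound}, the paper truncates at index $q^2$, applies a union bound to get $\binom{q^2}{q}\prod_{k\le q}p_k$, and handles the tail separately, using the crude $\binom{q^2}{q}<q^{2q}$, which costs a $2q\log q$ term. You replace this by the exact identity $\Prob\{\mathcal N\ge q\}\le\E\binom{\mathcal N}{q}=e_q(p_1,p_2,\ldots)$ and then dominate $e_q$ by $C^q\prod_{k\le q}p_k$ via $e_q\le x^{-q}\prod_k(1+p_kx)$ at $x=1/p_q$, exploiting the super-geometric decay of $p_k$ (which follows from your Markov bound with $\lambda=k$, legal since $m_a>0$). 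This is cleaner and avoids the $2q\log q$ loss entirely; the paper's own choice $b=k-\tfrac12$ in the Markov step differs from yours by only $O(q)$ after summation. Again, your estimate is strictly sharper than the theorem's stated upper bound (your $q\log q$ coefficient is $-\sum_a m_a$ rather than $2-\sum_a m_a$), so the conclusion $\exp\{-\tfrac n2 q^2\log q\,(1+o(1))\}$ follows a fortiori. The one point that deserves a full sentence in a write-up is the verification that $\sum_{k>q}p_k/p_q=O(1)$ uniformly, so that the elementary-symmetric-function bound really gives an absolute constant $C$; you have the right ingredient (the ratio $p_{k+1}/p_k\to 0$), but it should be stated.
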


Again, the first term in the asymptotic expansion of the logarithm of the overcrowding probability does not depend on the numbers $m_1,\ldots,m_n$ and is thus universal.

\newpage

\section{Products of random matrices from the induced quaternion Ginibre ensemble}
\label{SectionResultsQuaternion}

\subsection{Definition of the product matrix}
\label{SectionDefinitionOfPquaternion}

Denote by $\Hq$ the division ring of quaternions. It is known that $\Hq$ is isomorphic to the ring $\Hq'$ of $2\times 2$ matrices with complex entries
\[
\Hq'=\left\{\left(\begin{array}{cc}
                    \alpha & -\overline{\beta} \\
                    \beta & \overline{\alpha}
                  \end{array}
\right)\bigg|\,\alpha,\beta\in\C\right\},
\]
and that the elements of $\Hq'$ can be understood as matrix representations of quaternions.
Let $\mbox{Mat}\left(\Hq,N\times N\right)$ denote the collection of all $N\times N$ matrices with entries from $\Hq$, then 
the isomorphism between $\Hq$ and $\Hq'$ induces the isomorphism between $\mbox{Mat}\left(\Hq,N\times N\right)$
and
$\mbox{Mat}\left(\Hq',N\times N\right)$, where $\mbox{Mat}\left(\Hq',N\times N\right)$ is the collection of all $N\times N$ matrices whose entries are $2\times 2$
matrices from $\Hq'$.
Thus each matrix, $M\in\mbox{Mat}\left(\Hq',N\times N\right)$, can be represented as
\begin{equation}\label{M}
M=(\breve{M}_{i,j})_{i,j=1}^N
\quad\text{with}\quad
\breve{M}_{i,j}=
\begin{pmatrix}
M_{i,j}^{(1)} & -\overline{M}_{i,j}^{(2)} \\
M_{i,j}^{(2)} &  \overline{M}_{i,j}^{(1)}
\end{pmatrix},
\end{equation}
where $1\leq i, j\leq N$,  and where
$ M_{i,j}^{(1)},  M_{i,j}^{(2)}\in\C$.
Let $\mathcal{P}_{N,m}^{\quaternion}$ be a probability measure defined on $\mbox{Mat}\left(\Hq',N\times N\right)$ by
the formula
\[
\mathcal{P}_{N,m}^{\quaternion}(dM)=\frac{1}{Z_{N,m}^{\quaternion}}\left[\det M\right]^m\e^{-\frac{1}{2}\Tr M^{*}M}dM,
\]
where $m\geq 0$ is a non-negative constant, $Z_{N,m}^{\quaternion}$ is a normalization constant, and $dM$ denotes the Lebesgue measure on  $\mbox{Mat}\left(\Hq',N\times N\right)$.
Note that the eigenvalues of a matrix from $\mbox{Mat}\left(\Hq',N\times N\right)$ come in complex conjugate pairs, see Section \ref{SectionAppendixEigenvaluesQuaternionmatrices},  hence the density of $\mathcal{P}_{N,m}^{\quaternion}$ is non-negative. The probability measure $\mathcal{P}_{N,m}^{\quaternion}$  defines an ensemble of random matrices which is called the induced quaternion Ginibre ensemble (with  parameter $m$).

Let $n$ be a positive integer, $m_1,\ldots, m_n$ be positive real numbers, and consider the product $P_n^{\quaternion}$ of $n$ independent
matrices $M_1, \ldots, M_n$ from $\Mat(\Hq',N\times N)$,
\[
P_n^{\quaternion}=M_1M_2\ldots M_n.
\]
Each matrix $M_a\ (a=1,\ldots,n)$, is chosen from the induced quaternion Ginibre ensemble (with parameter $m_a$).
We will refer to $P_n^{\quaternion}$ as to the product of $n$ random matrices from the induced quaternion Ginibre ensemble with the parameters $m_1,\ldots,m_n$.


\subsection{The joint density of eigenvalues as a Pfaffian point process}
The joint density of eigenvalues of a matrix taken from the ensemble of the quaternion Ginibre matrices is well known, see Mehta \cite{Mehta}, Section 15.2. Theorem \ref{PropositionQuaternionJointDensityOfEigenvalues} generalises this result to products of independent matrices taken from the induced quaternion Ginibre ensemble with the parameters $m_1, \ldots, m_n$.

\begin{thm}\label{PropositionQuaternionJointDensityOfEigenvalues}
The vector of unordered eigenvalues of $P_n^{\quaternion}$ has a density (with respect to the Lebesgue measure on $\C^N$)
which can be written as
\begin{align}\label{JointDensityOfEigenvaluesQ}
\varrho_{N,\quaternion}^{(m_1,\ldots,m_n)}(\zeta_1,\ldots,\zeta_N)
&=\mathcal C
\prod\limits_{k=1}^Nw_n^{(m_1,\ldots,m_n)}(\zeta_k)\,|\zeta_k-\bar{\zeta}_k|^2
\prod\limits_{1\leq i<j\leq N}|\zeta_i-\zeta_j|^2|\zeta_i-\overline{\zeta}_j|^2, \\
\mathcal C^{-1} & \equiv 2^NN!\pi^{nN}\prod\limits_{k=1}^N\prod\limits_{a=1}^n\Gamma\left(m_a+2k\right), \nonumber
\end{align}
and the weight function, $w_n^{(m_1,\ldots,m_n)}(\zeta)$, is defined by equation \eqref{WeightFunction}.
\end{thm}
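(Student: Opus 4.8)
The plan is to obtain the joint eigenvalue density of $P_n^{\quaternion}$ by a generalised Schur decomposition adapted to the quaternionic setting, in parallel to the complex case but tracking the extra structure that eigenvalues come in complex-conjugate pairs. The starting point is the matrix density $\mathcal{P}_{N,m_1}^{\quaternion}\otimes\cdots\otimes\mathcal{P}_{N,m_n}^{\quaternion}$ on $\Mat(\Hq',N\times N)^n$. First I would recall (or prove in the subsequent section, e.g.\ Section~\ref{SectionAppendixEigenvaluesQuaternionmatrices}) that a generic matrix in $\Mat(\Hq',N\times N)$ admits a decomposition $M=U(Z+T)U^{*}$ with $U$ in the unitary symplectic group $\USp(2N)$, $Z$ block-diagonal carrying the $N$ eigenvalues $\zeta_k$ together with their conjugates $\bar\zeta_k$, and $T$ strictly block-upper-triangular; the Jacobian of this change of variables produces the pairwise factor $\prod_{i<j}|\zeta_i-\zeta_j|^2|\zeta_i-\bar\zeta_j|^2$ and the self-pairing factor $\prod_k|\zeta_k-\bar\zeta_k|^2$, exactly as in Mehta's treatment of the $m=0$ quaternion Ginibre ensemble. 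For a product of matrices one applies this simultaneously: write each $M_a=U_{a-1}(Z_a+T_a)U_a^{*}$ in a telescoping (generalised Schur) form so that $P_n^{\quaternion}=U_0\big(\prod_a(Z_a+T_a)\big)U_n^{*}$, following the scheme already used for the complex induced ensemble in Section~\ref{SectionStartOfProofs} and in \cite{ABu,Ipsen,IpsenKieburg}.

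The second step is to integrate out the auxiliary variables. The $T_a$ appear only through the Gaussian weight $\e^{-\frac12\Tr M_a^{*}M_a}$ and, because the generalised Schur form is block-triangular, the Gaussian integral over the strictly-upper-triangular blocks factorises and contributes only to the overall normalisation constant, not to the eigenvalue-dependent part; likewise the integrals over the coset $\USp(2N)/(\USp(2))^N$ give a constant. The determinant insertions behave multiplicatively under the triangular decomposition: $[\det M_a]^{m_a}=\prod_k[\det \breve{Z}_{a,k}]^{m_a}$ where $\breve{Z}_{a,k}$ is the $2\times2$ block, and since an eigenvalue $\zeta_{a,k}$ of the $2\times2$ quaternion block contributes $\det\breve{Z}_{a,k}=|\zeta_{a,k}|^2$, each factor becomes $|\zeta_{a,k}|^{2m_a}$. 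After integrating over the $T_a$-blocks and the unitary cosets, the diagonal $2\times2$ blocks of the $Z_a$'s remain, and the product $\prod_a(Z_a+T_a)$ restricted to the diagonal is $\prod_a Z_a$; one then changes variables from the $n$ families of diagonal eigenvalues to the eigenvalues $\zeta_k$ of $P_n^{\quaternion}$, producing exactly the Meijer $G$-function weight. Concretely, the modulus-squared of each eigenvalue of the product is a product of $n$ independent quantities each distributed with the $\beta=2$-type radial weight $|z|^{2m_a}\e^{-|z|^2}$ up to the doubling $k\mapsto 2k$ in the index, and Proposition~\ref{PropositionSpringerThompson} (the Springer--Thompson multiplicative convolution of gamma variables) identifies the resulting density as $G_{0,n}^{n,0}(|z|^2\,|\,m_1,\dots,m_n)$, i.e.\ the weight $w_n^{(m_1,\dots,m_n)}$ of \eqref{WeightFunction}. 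The normalisation $\mathcal C^{-1}=2^N N!\,\pi^{nN}\prod_k\prod_a\Gamma(m_a+2k)$ is then fixed by collecting the Gaussian/coset constants together with the norms $h_{2k}$ of the monic powers against this weight, cf.\ \eqref{Hk} with $k$ replaced by $2k$ because of the conjugate pairing.

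The third step—mostly bookkeeping—is to verify that the combinatorial prefactors assemble correctly: the factor $2^N$ from the two-to-one nature of $M\mapsto(\zeta_1,\dots,\zeta_N)$ (each $\zeta_k$ and $\bar\zeta_k$ give the same $2\times2$ block up to ordering), the $N!$ from the unordered eigenvalues, the $\pi^{nN}$ from the $n$ planar integrations per eigenvalue, and the doubled gamma arguments $\Gamma(m_a+2k)$ which arise because in the quaternion case the relevant monic-power norms are $h_{2k}$ rather than $h_k$ — this is the precise point where the self-interaction factor $|\zeta_k-\bar\zeta_k|^2$ and the cross factor $|\zeta_i-\bar\zeta_j|^2$ combine with $\prod_{i<j}|\zeta_i-\zeta_j|^2$ into a $2N\times2N$ Vandermonde-type determinant in $\{\zeta_k,\bar\zeta_k\}$, whose even powers pair with the weight.

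\medskip

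\noindent\textbf{Main obstacle.} The delicate part is the Jacobian of the generalised Schur decomposition in the quaternionic category: one must show that the strictly-upper-triangular blocks $T_a$ decouple from the eigenvalue measure even after taking the product $\prod_a(Z_a+T_a)$, and that the Jacobian of the map $(\{\zeta_{a,k}\}_{a,k}) \mapsto (\{\zeta_k\})$ (eigenvalues of the product expressed through eigenvalues of the factors) is trivial on the diagonal blocks — this is what allows the reduction to $n$ independent single-matrix integrals and hence to Proposition~\ref{PropositionSpringerThompson}. I would handle this, as announced in the introduction, by the differential-forms computation of Hough--Krishnapur--Peres--Vir\'ag~\cite{Hough}: express $dM$ as a wedge of one-forms, substitute the Schur parametrisation, and read off that the $T$-dependent forms contribute a volume independent of the $\zeta_k$ while the $Z$-dependent forms produce exactly $\prod_k|\zeta_k-\bar\zeta_k|^2\prod_{i<j}|\zeta_i-\zeta_j|^2|\zeta_i-\bar\zeta_j|^2$. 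The remaining steps are then the routine Gaussian and gamma-convolution calculations carried out in Section~\ref{sec:quaternions}.
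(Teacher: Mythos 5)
Your plan reproduces the paper's own argument almost exactly: the generalised Schur decomposition for quaternion matrices (Theorem~\ref{TheoremGeneralizedSchurDecompositionQuaternionMatrices} and its matrix-form Corollary~\ref{CorrollaryGeneralizedSchurDecompositionQuaternionMatrices}), the differential-forms Jacobian computation in the style of Hough--Krishnapur--Peres--Vir\'ag (Theorem~\ref{TheoremProductQuaternionJacobiDeterminantalFormula}), factorisation of $\Tr M_a^*M_a$ and $\det M_a$ over the triangular decomposition so that the $T_a$ and $\Omega_a$ integrals contribute only constants, and reduction of the weight to the Meijer $G$-function by the same multiplicative-convolution argument as in Section~\ref{SectionProofTheoremJointDensity}. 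The only cosmetic differences are your choice of index convention in the telescoping decomposition and your appeal to the Springer--Thompson Proposition~\ref{PropositionSpringerThompson} where the paper instead re-runs the Mellin-transform derivation of $w_n$; these are equivalent routes to \eqref{WeightFunction}.
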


\begin{rem}
If $n=1$ and $m_1=0$, then formula (\ref{JointDensityOfEigenvaluesQ}) reduces to the formula for the joint density of eigenvalues of a matrix taken form the ensemble of the quaternion Ginibre matrices, see Mehta \cite{Mehta}, Section 15.2, equations 15.2.10 and 15.2.15. The case of an arbitrary $n$ and $m_1,\ldots,m_n$ was first considered in~\cite{Ipsen,IpsenKieburg},
where  a formula for the joint density was obtained without full mathematical rigor.
\end{rem}


\subsection{The correlation kernel}

\begin{defn}
Suppose that there is a $2\times 2$ matrix valued kernel $\mathbb{K}_N^{(m_1,\ldots,m_n)}(z,\zeta)$ such that for a general finitely supported function $f$ defined on $\C$ we have
\begin{equation}
\varrho_{N,\quaternion}^{(m_1,\ldots,m_n)}(\zeta_1,\ldots,\zeta_N)
\prod_{k=1}^N\left(1+f(\zeta_k)\right)
=\sqrt{\det\left(I+\mathbb{K}_N^{(m_1,\ldots,m_n)}f\right)},
\end{equation}
where $\varrho_{N,\quaternion}^{(m_1,\ldots,m_n)}(\zeta_1,\ldots,\zeta_N)$ is defined in~\eqref{JointDensityOfEigenvaluesQ}, $\mathbb{K}_N^{(m_1,\ldots,m_n)}$ is the operator associated to the kernel $\mathbb{K}_N^{(m_1,\ldots,m_n)}(z,\zeta)$, and $f$ is the operator of multiplication by the function $f$. Then $\mathbb{K}_N^{(m_1,\ldots,m_n)}(z,\zeta)$ is called the correlation kernel of the ensemble defined by (\ref{JointDensityOfEigenvaluesQ}).
\end{defn}
The explanation of the term \emph{correlation kernel} can be found in Tracy and Widom \cite{tracy}, \S 2,3.
\begin{thm}\label{TheoremMatrixValued Kernel}
The correlation kernel can be written as
\[
\mathbb{K}_N^{(m_1,\ldots,m_n)}(z,\zeta)=(\zeta-\overline{\zeta})\, w_n^{(m_1,\ldots,m_n)}(\zeta)
\begin{pmatrix}
        S_N^{(m_1,\ldots,m_n)}(\overline{z},\zeta) & -S_N^{(m_1,\ldots,m_n)}(\overline{z},\overline{\zeta}) \\
        S_N^{(m_1,\ldots,m_n)}(z,\zeta) & -S_N^{(m_1,\ldots,m_n)}(z,\overline{\zeta})
\end{pmatrix},
\]
where
\[
S_N^{(m_1,\ldots,m_n)}(z,\zeta)=\frac{1}{2\pi^n}\sum_{0\leq i\leq j\leq N-1}\prod\limits_{a=1}^n
\frac{2^{(j-i)}\Gamma\left(\frac{m_a+2j+2}{2}\right)}
{\Gamma\left(\frac{m_a+2i+2}{2}\right)\Gamma\left(m_a+2j+2\right)}\left(z^{2i}\zeta^{2j+1}-z^{2i+1}\zeta^{2j}\right).
\]
\end{thm}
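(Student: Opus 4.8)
The plan is to compute the correlation kernel of the Pfaffian point process defined by \eqref{JointDensityOfEigenvaluesQ} using the standard method of skew-orthogonal polynomials adapted to the complex (non-Hermitian) quaternion setting, exactly as developed for the ordinary quaternion Ginibre ensemble in Mehta \cite{Mehta}, \S15.2, and extended by Kanzieper and others. First I would observe that the eigenvalue interaction in \eqref{JointDensityOfEigenvaluesQ}, namely $\prod_k|\zeta_k-\bar\zeta_k|^2\prod_{i<j}|\zeta_i-\zeta_j|^2|\zeta_i-\bar\zeta_j|^2$, is a confluent form of a $2N\times 2N$ Vandermonde in the variables $\{\zeta_1,\bar\zeta_1,\ldots,\zeta_N,\bar\zeta_N\}$, and hence can be rewritten (up to sign) as a Pfaffian
\[
\Pf\!\left[\zeta_i^{\,p}\bar\zeta_i^{\,q}-\zeta_i^{\,q}\bar\zeta_i^{\,p}\right]
\]
or, after row operations, as $\Pf$ of antisymmetric combinations of \emph{any} family of monic polynomials $\{P_k\}$, $\Pf[P_p(\zeta_i)\overline{P_q(\zeta_i)}-P_q(\zeta_i)\overline{P_p(\zeta_i)}]$. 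Since the weight $w_n^{(m_1,\ldots,m_n)}(\zeta)$ depends only on $|\zeta|$, the natural family is the monomials $P_k(z)=z^k$; the task then reduces to a purely algebraic normalisation computation.

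The key steps, in order, are: (i) write the joint density as $\mathcal C\prod_k(\zeta_k-\bar\zeta_k)w_n(\zeta_k)\cdot\Pf[\,\zeta_i^p\bar\zeta_i^q-\zeta_i^q\bar\zeta_i^p\,]$ and antisymmetrise into a sum over the pairings, i.e. invoke the general de Bruijn / Pfaffian integration identity so that all $\ell$-point correlation functions are Pfaffians of a $2\times 2$ kernel built from the ``pre-kernel'' $S_N$; (ii) identify the skew-symmetric inner product $\langle f,g\rangle=\int_{\C}(\zeta-\bar\zeta)w_n(\zeta)\big(f(\zeta)\overline{g(\zeta)}-g(\zeta)\overline{f(\zeta)}\big)\,d^2\zeta$ and compute its moments on monomials. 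Writing $\zeta=re^{i\phi}$ one has $(\zeta-\bar\zeta)=2ir\sin\phi$, and the angular integral of $(\zeta-\bar\zeta)(\zeta^a\bar\zeta^b-\zeta^b\bar\zeta^a)$ is nonzero only when $a,b$ have opposite parity, which forces the skew-orthogonal structure and reduces the radial integral to $\int_0^\infty r^{a+b+2}w_n(\sqrt{\,}\cdot)$-type moments — and those moments are exactly the $h_k$ of \eqref{Hk} with $k$ replaced by half-integer shifts, i.e. $\pi^n\prod_a\Gamma((m_a+2j+2)/2)$ type products, up to powers of $2$; (iii) build the skew-orthogonal polynomials of odd and even degree (here simply $q_{2j}(z)=z^{2j}$ and $q_{2j+1}(z)=z^{2j+1}$ up to normalisation, since the weight is radial so even and odd monomials are already ``block-orthogonal'') with skew-norms $r_j$, and assemble $S_N(z,\zeta)=\sum_{j} r_j^{-1}\big(q_{2j}(z)\,\overline{q_{2j+1}(\zeta)}-q_{2j+1}(z)\,\overline{q_{2j}(\zeta)}\big)$; (iv) restore the prefactor $(\zeta-\bar\zeta)w_n(\zeta)$ and arrange the four entries $S_N(\bar z,\zeta)$, $S_N(\bar z,\bar\zeta)$, $S_N(z,\zeta)$, $S_N(z,\bar\zeta)$ into the stated $2\times 2$ matrix, matching signs with the convention in the Definition preceding Theorem \ref{TheoremMatrixValued Kernel} (the generating-function identity with $\prod(1+f(\zeta_k))$ and the square root of a Fredholm determinant). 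Finally, (v) bookkeeping of all the $\pi^n$, $2^{(j-i)}$ and $\Gamma$ factors: the constant $\mathcal C$ from Theorem \ref{PropositionQuaternionJointDensityOfEigenvalues} together with the skew-norms $r_j$ must combine to produce precisely the coefficient
\[
\frac{1}{2\pi^n}\prod_{a=1}^n\frac{2^{(j-i)}\Gamma\!\big(\tfrac{m_a+2j+2}{2}\big)}{\Gamma\!\big(\tfrac{m_a+2i+2}{2}\big)\Gamma(m_a+2j+2)}
\]
in the double sum $\sum_{0\le i\le j\le N-1}$.

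I expect the main obstacle to be step (iii)–(v): the precise computation of the skew-norms $r_j$ and the resulting rearrangement of the double sum. The naive skew-orthogonalisation gives a kernel of the form $\sum_j r_j^{-1}(z^{2j}\bar\zeta^{2j+1}-z^{2j+1}\bar\zeta^{2j})$, which is a \emph{single} sum over $j$, whereas the claimed $S_N$ is a \emph{double} sum over $0\le i\le j\le N-1$ of terms $z^{2i}\zeta^{2j+1}-z^{2i+1}\zeta^{2j}$. Reconciling these requires recognising that the product of gamma functions telescopes: $\prod_a \Gamma\big(\tfrac{m_a+2j+2}{2}\big)/\Gamma\big(\tfrac{m_a+2i+2}{2}\big)$ is a ratio that, when summed appropriately, reproduces the closed single-sum form — in other words one must verify the combinatorial identity that the double sum collapses to the skew-orthogonal-polynomial single sum, or alternatively re-derive $S_N$ directly in the double-sum form by not fully skew-orthogonalising but keeping the ``Gram–Schmidt'' partial sums. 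A clean way to handle this, which I would adopt, is to write the even-degree skew-orthogonal polynomial $q_{2i}$ not as the bare monomial but as the partial sum that makes $\langle q_{2i},z^{2j}\rangle=0$ for all $j<i$ automatically (it is in fact still $z^{2i}$ because the weight is radial, so the subtlety is purely in how one chooses to present $S_N$), and then expand; the radial moment computation itself, i.e. $\int_0^\infty y^{s}\,G^{n,0}_{0,n}(y\,|\,m_1,\ldots,m_n)\,dy=\prod_a\Gamma(s+1+m_a)$ — a Mellin transform of the Meijer $G$, already implicit in \eqref{Hk} — is routine and provides all the gamma factors. The parity/angular argument guaranteeing the Pfaffian (skew-orthogonal) block structure is the conceptual heart and is short; everything after it is careful constant-chasing.
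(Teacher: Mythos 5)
Your overall strategy---de Bruijn/Pfaffian integration to get the ratio of $2N\times 2N$ Pfaffians, then identifying the kernel through a skew bilinear form---is sound and shares its starting point with the paper. But the paper then takes a deliberately \emph{different} route from yours: as they note explicitly in Remark 3 after the theorem, they avoid skew-orthogonal polynomials entirely. They instead compute the moment matrix $Q_{j,k}=\int(\zeta^j\bar\zeta^k-\zeta^k\bar\zeta^j)(\zeta-\bar\zeta)w_n(\zeta)\,d^2\zeta$ explicitly, observe (from the angular integral) that it is \emph{tridiagonal} skew-symmetric with $Q_{j,j+1}=2h_{j+1}^{(m_1,\ldots,m_n)}$, and then write down $Q^{-1}$ directly; the Tracy--Widom formula expresses $\mathbb{K}_N$ through $\sum_{j,k}z^j(Q^{-1})_{j,k}\zeta^k$, and the sparse $Q^{-1}$ produces exactly the double sum of the theorem. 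Your skew-OP route is equivalent in principle (skew-OPs amount to an $LDL^T$-type factorisation of $Q^{-1}$) and would give the same answer, but the two write-ups are genuinely distinct.

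Where your sketch goes wrong is in the parenthetical assertion that the even-degree skew-orthogonal polynomial ``is in fact still $z^{2i}$ because the weight is radial, so the subtlety is purely in how one chooses to present $S_N$.'' That is false, and it is precisely the point on which your resolution (a) also fails. Radiality makes $Q$ \emph{tridiagonal}, not anti-diagonal: $\langle z^{2i},z^{2j+1}\rangle=Q_{2i,2j+1}$ is nonzero for $i=j$ \emph{and} for $i=j+1$, so the bare monomials are not skew-orthogonal. The even skew-OP is forced to be
\[
q_{2j}(z)=\sum_{i=0}^{j}\Bigg(\prod_{l=i+1}^{j}\frac{h_{2l}}{h_{2l-1}}\Bigg)\,z^{2i},
\qquad r_j=2h_{2j+1},
\]
with $h_k=\pi^n\prod_a\Gamma(k+1+m_a)$, and the ratio $h_{2l}/h_{2l-1}=\prod_a(2l+m_a)$ is what produces, after telescoping, the factor $\prod_a 2^{j-i}\Gamma\big(\tfrac{m_a+2j+2}{2}\big)/\Gamma\big(\tfrac{m_a+2i+2}{2}\big)$. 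Expanding $S_N=\sum_{j}r_j^{-1}\big(q_{2j}(z)\zeta^{2j+1}-z^{2j+1}q_{2j}(\zeta)\big)$ then yields the double sum over $0\le i\le j\le N-1$ \emph{with genuine $i<j$ contributions}; the double sum does \emph{not} collapse to a single sum, contrary to your resolution (a). Your alternative (b)---keep the Gram--Schmidt partial sums---is the right idea, but you must actually carry it out: the non-monomial structure of $q_{2j}$ is substantive, not a choice of presentation.
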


\begin{rem}
1)
For $n=1$ and $m_1=0$ the formula for the correlation kernel turns into
\[
\mathbb{K}_N^{(m_1=0)}(z,\zeta)=(\zeta-\overline{\zeta})\,\e^{-|\zeta|^2}
\begin{pmatrix}
        S_N^{(m_1=0)}(\overline{z},\zeta) & -S_N^{(m_1=0)}(\overline{z},\overline{\zeta}) \\
        S_N^{(m_1=0)}(z,\zeta) & -S_N^{(m_1=0)}(z,\overline{\zeta})
\end{pmatrix},
\]
where
\[
S_N^{(m_1=0)}(z,\zeta)=\frac{1}{2\pi}\sum\limits_{0\leq i\leq j\leq N-1}2^{(j-i)}
\frac{j!}{i!(2j+1)!}\left(z^{2i}\zeta^{2j+1}-z^{2i+1}\zeta^{2j}\right).
\]
This is the well-known formula for the correlation kernel of the quaternion Ginibre ensemble, see Mehta \cite{Mehta}, equation (15.2.24).\\
2) Theorem \ref{TheoremMatrixValued Kernel} implies that the eigenvalues of $P_n^{\quaternion}$ form a Pfaffian
point process  on the complex plane with a $2\times 2$ correlation kernel. This kernel can be obtained from $\mathbb{K}_N^{(m_1,\ldots,m_n)}(z,\zeta)$
by a simple transformation.\\
3) In~\cite{Ipsen} skew-orthogonal polynomials were used to derive the correlation kernel in the case $m_1=\cdots=m_n$ and the result for arbitrary parameters was derived in~\cite{IpsenKieburg}. However, the approach given in this paper differs from the aforementioned papers and our proof of Theorem \ref{TheoremMatrixValued Kernel} does not use skew-orthogonal polynomials. Instead, we use a technique similar to that used by Tracy and Widom \cite{tracy} to derive correlation kernels for the Gaussian Orthogonal Ensemble, and for the Gaussian Symplectic Ensemble.
\end{rem}

The point process formed by eigenvalues of $P_n^{\quaternion}$ is called \textit{the generalised induced finite-$N$ quaternion Ginibre ensemble with parameters} $m_1,\ldots, m_n$.
When $N\rightarrow\infty$ (in the same sense as in Section \ref{largeN}) the point process formed by eigenvalues of $P_n^{\quaternion}$ converges to some limiting process on the complex plane.
We call this limiting process \textit{the generalised induced infinite quaternion Ginibre  ensemble with parameters} $m_1,\ldots,m_n$.


\subsection{The joint density of moduli of the eigenvalues as a permanental process}

 Theorem \ref{TheoremQIndependence} gives a similar result to that of Theorem \ref{TheoremIndependence} in the case when the joint density of $z_1,\ldots,z_N$ with respect to the Lebesgue measure on $\C^N$ has the form as in equation (\ref{JointDensityOfEigenvaluesQ}).

\begin{thm}\label{TheoremQIndependence}
Assume that $z_1,\ldots,z_N$ are random complex variables.  Assume that  the joint probability density function
of $z_1,\ldots,z_N$ with respect to the Lebesgue measure on $\C^N$ is given by the formula
\begin{equation}\label{JointDensityQuaternion}
\varrho_{N,\quaternion}(z_1,\ldots,z_N)=\frac{1}{Z_N^{\quaternion}}\prod\limits_{k=1}^Nw(|z_k|)|z_k-\overline{z}_k|^2\prod\limits_{1\leq i<j\leq N}|z_i-z_j|^2|z_i-\overline{z}_j|^2,
\end{equation}
where $w(|z_k|):\R_{\geq 0}\rightarrow\R_{\geq 0}$ is a suitable weight function.
Then
\begin{equation}\label{NormaliztionConstantQuaternion}
Z_N^{\quaternion}=N!\prod\limits_{k=1}^Nh_{k-1},
\qquad
h_k^{\quaternion}=2\pi\int_0^{\infty}y^{2k+1}w(\sqrt{y})dy,
\end{equation}
and the joint density of unordered moduli, $(r_i)_{i=1,\ldots,N}$, is
\begin{equation}\label{QRadialJointDensity}
\prod_{j=1}^N\int_0^{2\pi}d\theta_j\,\varrho_{N,\quaternion}(z_1,\ldots,z_N)=\frac{(4\pi)^N}{Z_N}\per[r_i^{4j-1}]_{i,j=1}^N\prod\limits_{j=1}^Nw(r_j).
\end{equation}
Moreover, the set of absolute values of $\{r_k\}_{k}$ has the same distribution as the set
$\{R_k\}_{k}$, where the random variables $R_1,R_2,\ldots R_N$ are independent, and for each $k\ (1\leq k\leq N$) the random variable $\left(R_k\right)^2$ has the density
\[
q_{k,\quaternion}(y)=\left\{
               \begin{array}{ll}
                 \frac{y^{2k-1}w(\sqrt{y})}{h_{k-1}^{\quaternion}}, & y\geq 0, \\
                 0, & y<0.
               \end{array}
             \right.
\]
\end{thm}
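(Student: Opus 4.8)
The plan is to follow the same strategy that establishes Theorem~\ref{TheoremIndependence}, adapted to the Pfaffian-type density~\eqref{JointDensityQuaternion}. First I would integrate out the angular variables. Writing $z_j = r_j e^{i\theta_j}$ so that $|z_j - \bar z_j| = 2 r_j |\sin\theta_j|$, the total product over the Vandermonde-type factors $\prod_k |z_k-\bar z_k|^2 \prod_{i<j}|z_i-z_j|^2|z_i-\bar z_j|^2$ is a polynomial in the $z_i$ and $\bar z_i$. The key algebraic fact is that this product equals (up to sign) a determinant: the combined factor $\prod_k(z_k-\bar z_k)\prod_{i<j}(z_i-z_j)(z_i-\bar z_j)$ is the confluent Vandermonde $\det[z_j^{a},\,\dots]$ built from the pairs $(z_j,\bar z_j)$, which after the standard Mehta-type manipulation (expand both determinants, integrate $\int_0^{2\pi}e^{i(p-q)\theta}\,d\theta = 2\pi\delta_{pq}$ for the two determinants coming from the modulus-squared) collapses to a permanent in the radial monomials. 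Concretely, only the ``diagonal'' pairings survive the $\theta$-integration, producing $\per[r_i^{4j-1}]_{i,j=1}^N$ together with a combinatorial prefactor; tracking that prefactor gives the $(4\pi)^N$ in~\eqref{QRadialJointDensity}. This is the same bookkeeping as in the proof of Theorem~\ref{TheoremIndependence} but with the exponents $2j-1$ replaced by $4j-1$ because each eigenvalue now contributes a pair $\{z_j,\bar z_j\}$ and the relevant powers run over $0,1,\dots$ in steps reflecting the doubled structure.

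Next I would pin down the normalisation constant $Z_N^{\quaternion}$. Since $\int_0^\infty \per[r_i^{4j-1}]_{i,j=1}^N \prod_j w(r_j)\,dr_j$ expands, by multilinearity of the permanent and the fact that permuting the $r_i$ leaves $\prod_j w(r_j)$ invariant, into $N!\prod_{k=1}^N \int_0^\infty r^{4k-1}w(r)\,dr$; changing variables $y=r^2$ turns $\int_0^\infty r^{4k-1}w(r)\,dr$ into $\tfrac12\int_0^\infty y^{2k-1}w(\sqrt y)\,dy$. Combining with the $(4\pi)^N$ prefactor and demanding that the radial density integrate to one yields $Z_N^{\quaternion} = N!\prod_{k=1}^N h_{k-1}^{\quaternion}$ with $h_k^{\quaternion} = 2\pi\int_0^\infty y^{2k+1}w(\sqrt y)\,dy$, exactly as claimed in~\eqref{NormaliztionConstantQuaternion}. (I would double-check the index shift: the factor $2\pi$ rather than $4\pi$ comes from absorbing one $\tfrac12$ from the Jacobian and one $\tfrac12$ from the $|\sin\theta|^2$ angular integral into the $(4\pi)^N$.)

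The last step — identifying the law of the unordered moduli with that of independent variables $R_k$ whose squares have density $q_{k,\quaternion}$ — is then purely formal: the permanent $\per[r_i^{4j-1}]$ is precisely $\sum_{\sigma\in S_N}\prod_k r_{\sigma(k)}^{4k-1}$, so the symmetrised radial density is $\frac{1}{N!}\sum_{\sigma}\prod_k \frac{r_{\sigma(k)}^{4k-1}w(r_{\sigma(k)})\cdot(\text{const})}{h_{k-1}^{\quaternion}}$, which is exactly the density of the \emph{unordered} set $\{R_1,\dots,R_N\}$ for independent $R_k$ with the stated marginals. Rewriting in the variable $y=r^2$ (Jacobian $\tfrac{1}{2\sqrt y}$) converts $r^{4k-1}w(r)$ into a constant times $y^{2k-1}w(\sqrt y)$, giving $q_{k,\quaternion}(y)=y^{2k-1}w(\sqrt y)/h_{k-1}^{\quaternion}$ on $y\ge 0$.

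I expect the main obstacle to be the combinatorial identity in the first step: namely verifying carefully that the angular integration of $|\Delta(z,\bar z)|^2$ (the squared confluent Vandermonde over the conjugate pairs) against $\prod_j d\theta_j$ produces exactly a permanent of the radial monomials with the correct exponent pattern $\{4j-1\}$ and the correct overall constant $(4\pi)^N$. This requires expanding the product of two $2N\times 2N$ (or, after using the conjugate-pair structure, $N\times N$) determinants, isolating which permutation pairs $(\sigma,\tau)$ leave a nonzero angular integral, and checking that the surviving terms reassemble into $N!\cdot\const\cdot\per[\cdot]$ rather than into some other symmetric function; the quaternionic ($|z_k-\bar z_k|^2$) factor makes this a genuine extension of the $\beta=2$ computation rather than a verbatim copy, and the constant-tracking is where sign and factor-of-two errors are easiest to make. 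Everything after that is routine.
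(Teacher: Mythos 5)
Your plan follows the broad outline of the paper's proof for Theorem~\ref{TheoremIndependence} (angular integration, permanent, deduce independence, normalise). But your \emph{key algebraic step} is incorrect, and the correct argument is structurally different.

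You assert that
\[
W:=\prod_{k}(z_k-\bar z_k)\prod_{i<j}(z_i-z_j)(z_i-\bar z_j)
\]
is a (confluent) Vandermonde determinant, so that the density factor $|W|^2$ is the product of two determinants, and you then propose to expand both, integrate out the angles with $\int_0^{2\pi}e^{i(p-q)\theta}\,d\theta=2\pi\delta_{p,q}$, and reassemble a permanent, exactly as in the $\beta=2$ case. The trouble is that $W$ is not a determinant. Its degree in the variables $(z_1,\bar z_1,\dots,z_N,\bar z_N)$ is $(2N-1,\,1,\,2N-2,\,2,\,\dots,\,N,\,N)$, i.e.\ wildly asymmetric, whereas any $2N\times 2N$ Vandermonde determinant has the same degree $2N-1$ in each variable; nor does the degree pattern match a confluent Vandermonde. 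In fact $W$ is only ``half'' of the Vandermonde $V(z_1,\bar z_1,\dots,z_N,\bar z_N)$: one has $V=W\cdot\prod_{i<j}(\bar z_i-\bar z_j)(\bar z_i-z_j)$, and neither factor is separately a determinant. There is therefore no ``double determinant'' structure to expand, and the $\beta=2$ strategy of squaring a single Vandermonde does not transport to $\beta=4$ here.

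The correct decomposition, which is what the paper uses (Proposition~\ref{PropositionIntegralI}, following Rider~\cite{Rider}), is to note instead that
\[
\prod_{k}|z_k-\bar z_k|^2\prod_{i<j}|z_i-z_j|^2|z_i-\bar z_j|^2
=\prod_{k}(\bar z_k - z_k)\cdot V(z_1,\bar z_1,\dots,z_N,\bar z_N),
\]
i.e.\ a simple linear prefactor times a \emph{single} $2N\times 2N$ Vandermonde determinant (the conjugate-pair structure makes $V$ real up to a sign, so no separate ``$\bar V$'' is needed). Expanding this single $2N\times 2N$ determinant as a sum over $\sigma\in S(2N)$ and using the $\theta$-integrals to enforce $\sigma(2n)=\sigma(2n-1)\pm 1$ for all $n$ yields the Kronecker-delta product and, after a counting of the $2^N$ surviving sign patterns, the permanent $\per[r_i^{4j-2}]$ with the correct $(4\pi)^N$ prefactor. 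This is a genuinely different combinatorial exercise from the two-determinant expansion you described, and your blueprint would stall as soon as you tried to write $W$ as a determinant. Everything downstream in your proposal (pulling the normalisation into the permanent, changing variables $y=r^2$, and invoking the symmetrised-product characterisation of the law of the unordered moduli) is correct and matches the paper, so the gap is localised to this one lemma — but it is the lemma that does the heavy lifting, and you had flagged it as the main risk; your instinct was right, just your proposed resolution of it was not.
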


\begin{rem}
Formula (\ref{QRadialJointDensity}) was first obtained in Rider \cite{Rider}, Section 5, in the context of the classical quaternion Ginibre ensemble.
\end{rem}

As in the case of products of matrices from the induced complex  Ginibre ensemble (see Section \ref{SectionDistributionAbsoluteValues}) the application of Theorem
\ref{TheoremQIndependence} together with Proposition \ref{PropositionSpringerThompson} leads to an explicit description of the distribution of absolute values of the eigenvalues of $P_n^{\quaternion}$.

\begin{thm}\label{TheoremQ|z_1||z2|Distribution} Let $\zeta_1,\ldots,\zeta_N$ be the eigenvalues of $P_n^{\quaternion}$.
The set of absolute values of $\zeta_1, \ldots, \zeta_N$ has the same distribution as the set of independent random variables
\[
\left\{R_1^{(m_1,\ldots,m_n)},R_2^{(m_1,\ldots,m_n)},\ldots,R_N^{(m_1,\ldots,m_n)}\right\},
\]
and for each $k\ (1\leq k\leq N)$ the random variable $\big(R_k^{(m_1,\ldots,m_n)}\big)^2$ has the same
distribution as the product of $n$ independent gamma variables $\GAMMA(2k+m_1,1),\ldots,\GAMMA(2k+m_n,1)$.
Moreover, the random variable $\big(R_k^{(m_1,\ldots,m_n)}\big)^2$  has a density function given by the formula
\begin{equation}\label{equationQ g_i(x)}
g_{k,\quaternion}^{(m_1,\ldots,m_n)}(x)=
\begin{cases}
\dfrac{G_{0,n}^{n,0}\big(\,x\,\big|\,2k+m_1-1, 2k+m_2-1, \ldots, 2k+m_n-1\, \big)}{\prod_{a=1}^n\Gamma(2k+m_a)}, & x\geq 0, \\
0, & x<0.
\end{cases}
\end{equation}
\end{thm}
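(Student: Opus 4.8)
The plan is to combine the three ingredients already assembled in the paper: the explicit joint density \eqref{JointDensityOfEigenvaluesQ} of the eigenvalues of $P_n^{\quaternion}$ (Theorem \ref{PropositionQuaternionJointDensityOfEigenvalues}), the general structural result Theorem \ref{TheoremQIndependence} on permanental processes with a weight depending only on the modulus, and Proposition \ref{PropositionSpringerThompson} identifying a Meijer $G$-function as the density of a product of independent gamma variables. First I would invoke Theorem \ref{PropositionQuaternionJointDensityOfEigenvalues} to write the joint density of the eigenvalues $\zeta_1,\ldots,\zeta_N$ of $P_n^{\quaternion}$ in the form \eqref{JointDensityQuaternion}, reading off that the role of the weight $w$ there is played by $w_n^{(m_1,\ldots,m_n)}$ from \eqref{WeightFunction}, which indeed depends on $\zeta_k$ only through $|\zeta_k|$. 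This is the only place where the specific structure of the quaternion ensemble enters; once it is observed, the remaining steps are ensemble-independent.

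Next I would apply Theorem \ref{TheoremQIndependence} with $w=w_n^{(m_1,\ldots,m_n)}$. This immediately gives that the moduli $|\zeta_1|,\ldots,|\zeta_N|$ are distributed as an independent family $\{R_k^{(m_1,\ldots,m_n)}\}_{k=1}^N$, and that $(R_k^{(m_1,\ldots,m_n)})^2$ has the density $q_{k,\quaternion}(y)=y^{2k-1}w_n^{(m_1,\ldots,m_n)}(\sqrt{y})/h_{k-1}^{\quaternion}$ for $y\ge 0$. It therefore remains to make this density explicit. Substituting \eqref{WeightFunction}, namely $w_n^{(m_1,\ldots,m_n)}(\sqrt{y})=\pi^{n-1}G_{0,n}^{n,0}(y\,|\,m_1,\ldots,m_n)$, I would use the standard shift identity for the Meijer $G$-function, $y^{\alpha}G_{0,n}^{n,0}(y\,|\,b_1,\ldots,b_n)=G_{0,n}^{n,0}(y\,|\,b_1+\alpha,\ldots,b_n+\alpha)$, with $\alpha=2k-1$ (here $y^{2k-1}=y^{2k}\cdot y^{-1}$ so that after including the implicit extra factor from \eqref{WeightFunction} the arguments become $m_a+2k-1$), to rewrite $y^{2k-1}w_n^{(m_1,\ldots,m_n)}(\sqrt{y})$ as a constant times $G_{0,n}^{n,0}(y\,|\,2k+m_1-1,\ldots,2k+m_n-1)$. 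Matching the normalisation requires computing $h_{k-1}^{\quaternion}=2\pi\int_0^\infty y^{2k-1}w_n^{(m_1,\ldots,m_n)}(\sqrt{y})\,dy$; by \eqref{Hk} (with $k\mapsto 2k-1$ and the extra factor of $\pi$ absorbed appropriately) this Mellin-type integral evaluates to $2\pi\cdot\pi^{n-1}\prod_{a=1}^n\Gamma(2k+m_a)$ up to the bookkeeping of powers of $\pi$, so that the constant in front of the $G$-function collapses to exactly $1/\prod_{a=1}^n\Gamma(2k+m_a)$, yielding \eqref{equationQ g_i(x)}.

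Finally, the identification of $(R_k^{(m_1,\ldots,m_n)})^2$ with a product of $n$ independent gamma variables $\GAMMA(2k+m_1,1),\ldots,\GAMMA(2k+m_n,1)$ follows by comparing \eqref{equationQ g_i(x)} with Proposition \ref{PropositionSpringerThompson} applied with $b_a=2k+m_a$: the density there is precisely $G_{0,n}^{n,0}(x\,|\,b_1-1,\ldots,b_n-1)/\prod_a\Gamma(b_a)$, which matches \eqref{equationQ g_i(x)} term by term. I do not expect any serious obstacle; the only care needed is the bookkeeping of the powers of $\pi$ and the factors of $2$ in \eqref{NormaliztionConstantQuaternion} and \eqref{WeightFunction}, and a clean statement of the $G$-function shift and Mellin-integral identities. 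This is essentially the same argument as for the complex case in Theorem \ref{CorollaryExplicitDistribution}, with $k$ replaced by $2k$ in the gamma parameters, reflecting that in \eqref{JointDensityQuaternionQ}-type densities each eigenvalue is paired with its conjugate and so effectively carries a squared power of the radius.
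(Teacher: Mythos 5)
Your proposal matches the paper's own proof, which likewise consists of combining the quaternion joint density (Theorem \ref{PropositionQuaternionJointDensityOfEigenvalues}), the permanental moduli structure (Theorem \ref{TheoremQIndependence}) and the Springer--Thompson identification of the Meijer $G$-function as a product-of-gammas density (Proposition \ref{PropositionSpringerThompson}), with the quaternion case differing from the complex Theorem \ref{CorollaryExplicitDistribution} only by the replacement $k\mapsto 2k$ in the gamma parameters. One small cleanup: the parenthetical about ``$y^{2k-1}=y^{2k}\cdot y^{-1}$ and an implicit extra factor'' is unnecessary and a bit confusing --- the shift identity applies directly as $y^{2k-1}G_{0,n}^{n,0}(y\,|\,m_1,\ldots,m_n)=G_{0,n}^{n,0}(y\,|\,m_1+2k-1,\ldots,m_n+2k-1)$, and the normalisation is fixed simply by requiring the density to integrate to one.
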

The result of Theorem \ref{TheoremQ|z_1||z2|Distribution} should be compared with that of Theorem \ref{CorollaryExplicitDistribution}.
The joint densities of eigenvalues of $P_n$ and $P_n^{\quaternion}$ are given by quite different formulae (see
equations (\ref{JointDensityOfEigenvalues}) and (\ref{JointDensityOfEigenvaluesQ})). In the quaternion case, equation (\ref{JointDensityOfEigenvaluesQ}) implies that the eigenvalues tend to avoid the real axis.
However, according to Theorem \ref{TheoremQ|z_1||z2|Distribution} and  Theorem \ref{CorollaryExplicitDistribution} the absolute values of eigenvalues of $P_n$ and $P_n^{\quaternion}$ are distributed in a very similar way. This is not surprising, since the different behavior in the microscopic neighbourhood of the real axis is averaged out due to the integration over the angles of the eigenvalues. For the product of quaternion Ginibre matrices this was already observed in~\cite{Ipsen}.


\subsection{Hole probabilities in the generalised induced quaternion Ginibre ensemble}

Denote by $\mathcal{N}_{GG,\quaternion}^{(m_1,\ldots,m_n)}(r;N)$ the number of eigenvalues of the random matrix $P_{n}^{\quaternion}$ in the disk of radius $r$ with its center at the origin. Similar to the case of matrices from the induced complex Ginibre ensemble, Theorem \ref{TheoremQ|z_1||z2|Distribution} enables us to give  an exact formula for the hole probability $\Prob\big\{\mathcal{N}_{GG,\quaternion}^{(m_1,\ldots,m_n)}(r;N)=0\big\}$, i.e. for the probability of the event that there are no eigenvalues of $P_n^{\quaternion}$ in the disc of a radius $r$ with its center at the origin.

\begin{thm}
\label{PropositionQuaternionExactFormulae}
The hole probability $\Prob\big\{\mathcal{N}_{GG,\quaternion}^{(m_1,\ldots,m_n)}(r;N)=0\big\}$ for the generalised induced finite-$N$  quaternion Ginibre ensemble with parameters
$m_1, \ldots, m_n$ can be written as
\[
\Prob\left\{\mathcal{N}_{GG,\quaternion}^{(m_1,\ldots,m_n)}(r;N)=0\right\}
=\prod\limits_{k=1}^N\frac{G_{1,n+1}^{n+1,0}\left(r^2\,\bigg|\begin{array}{c}
                                                                  1    \\
                                                                0,  2k+m_1,  \ldots,  2k+m_n
                                                              \end{array}
\right)}{\prod_{a=1}^n\Gamma(2k+m_a)}.
\]
\end{thm}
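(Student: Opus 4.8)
The plan is to reduce Theorem~\ref{PropositionQuaternionExactFormulae} to Theorem~\ref{TheoremQ|z_1||z2|Distribution} exactly as the complex case was reduced to Theorem~\ref{CorollaryExplicitDistribution}. Since Theorem~\ref{TheoremQ|z_1||z2|Distribution} says the set of moduli $\{|\zeta_1|,\ldots,|\zeta_N|\}$ has the same distribution as a set of \emph{independent} random variables $\{R_k^{(m_1,\ldots,m_n)}\}_{k=1}^N$, the event $\{\mathcal{N}_{GG,\quaternion}^{(m_1,\ldots,m_n)}(r;N)=0\}$ — that every eigenvalue lies outside the disc of radius $r$ — is the event $\{R_k^{(m_1,\ldots,m_n)}>r \text{ for all }k\}$, which by independence factorises:
\[
\Prob\left\{\mathcal{N}_{GG,\quaternion}^{(m_1,\ldots,m_n)}(r;N)=0\right\}
=\prod_{k=1}^N\Prob\left\{R_k^{(m_1,\ldots,m_n)}>r\right\}
=\prod_{k=1}^N\Prob\left\{\big(R_k^{(m_1,\ldots,m_n)}\big)^2>r^2\right\}.
\]
First I would write each factor as the tail integral of the density $g_{k,\quaternion}^{(m_1,\ldots,m_n)}$ from \eqref{equationQ g_i(x)}, so that
\[
\Prob\left\{\big(R_k^{(m_1,\ldots,m_n)}\big)^2>r^2\right\}
=\frac{1}{\prod_{a=1}^n\Gamma(2k+m_a)}\int_{r^2}^{\infty}
G_{0,n}^{n,0}\big(x\,\big|\,2k+m_1-1,\ldots,2k+m_n-1\big)\,dx.
\]

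Next I would evaluate the incomplete integral of the Meijer $G$-function. The required identity is the standard one (see Luke~\cite{Luke}, Section~5.6, or Gradshteyn--Ryzhik) that raising the number of denominator parameters by inserting the pair $\left(\begin{smallmatrix}1\\0\end{smallmatrix}\right)$ implements integration from a finite endpoint to infinity:
\[
\int_{u}^{\infty} G_{0,n}^{n,0}\big(x\,\big|\,b_1,\ldots,b_n\big)\,dx
= G_{1,n+1}^{n+1,0}\left(u\,\bigg|\begin{array}{c} 1\\ 0,\,b_1+1,\ldots,b_n+1\end{array}\right).
\]
Applying this with $u=r^2$ and $b_a=2k+m_a-1$, so that $b_a+1=2k+m_a$, turns the $k$-th factor into $G_{1,n+1}^{n+1,0}\big(r^2\,\big|\,{}^{1}_{0,\,2k+m_1,\ldots,2k+m_n}\big)\big/\prod_{a=1}^n\Gamma(2k+m_a)$, and taking the product over $k=1,\ldots,N$ yields the claimed formula. (As a consistency check, for $n=1$ this reproduces $\Gamma(2k+m_1,r^2)/\Gamma(2k+m_1)$ via the same reduction $G_{1,2}^{2,0}\big(r^2\,\big|\,{}^1_{0,2k+m_1}\big)=\Gamma(2k+m_1,r^2)$ noted in the complex case, and for $m_1=0$ it is Mehta's hole probability for the quaternion Ginibre ensemble.)

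I do not expect any serious obstacle: the whole argument is two lines of probability plus one invocation of a classical Meijer $G$-function integral. The only point requiring a little care is justifying the interchange of the tail-probability computation with the Meijer $G$-function integral formula — i.e. checking that the integral $\int_{r^2}^{\infty}G_{0,n}^{n,0}(x\,|\,2k+m_1-1,\ldots,2k+m_n-1)\,dx$ converges and that the contour-integral manipulations behind the quoted identity are legitimate for the parameter range $2k+m_a-1>0$ (which holds since $k\geq 1$ and $m_a>0$). Convergence at $+\infty$ follows from the exponential decay of $G_{0,n}^{n,0}$ for large argument (it behaves like a constant times $x^{c}\exp(-n x^{1/n})$), and convergence at the finite endpoint is automatic; alternatively one can avoid the contour argument entirely by noting that $G_{0,n}^{n,0}$ is precisely the density of a product of $n$ gamma variables (Proposition~\ref{PropositionSpringerThompson}) whose tail probability is, by a direct $n$-fold iterated integration, exactly the Meijer $G$-function on the right. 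So the structure of the proof is: (i) factorise the hole probability using independence from Theorem~\ref{TheoremQ|z_1||z2|Distribution}; (ii) express each factor as a tail integral of $g_{k,\quaternion}^{(m_1,\ldots,m_n)}$; (iii) identify that tail integral with $G_{1,n+1}^{n+1,0}$ via the endpoint-integration formula for Meijer $G$-functions; (iv) take the product over $k$.
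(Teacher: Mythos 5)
Your proposal is correct and follows essentially the same route as the paper: the paper proves the complex analogue (Theorem~\ref{PropositionExactFormulae}) by factorising over the independent moduli from Theorem~\ref{CorollaryExplicitDistribution}, writing each factor as a tail integral of the Meijer $G$-density, and invoking the integration formula from Luke \S5.6, then simply remarks that the quaternion case follows by the substitution $k\to 2k$. You have spelled out the same three steps directly in the quaternion setting, with the small bonus of the explicit endpoint-integration identity and the convergence remark, which are implicit in the paper's citation to Luke.
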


Theorem \ref{TheoremDecayQFiniteN} below describes the decay of the hole probability $\Prob\big\{\mathcal{N}_{GG,\quaternion}^{(m_1,\ldots,m_n)}(r;N)=0\big\}$ as $r\rightarrow\infty$.

\begin{thm}\label{TheoremDecayQFiniteN}
For fixed $N$ and $r\rightarrow\infty$, we have
\begin{multline*}\label{FiniteNQuaternionAsymptotics}
\Prob\left\{\mathcal{N}_{GG,\quaternion}^{(m_1,\ldots,m_n)}(r;N)=0\right\}=
\frac{(2\pi)^{\frac{N(n-1)}{2}}}{n^{\frac{N}{2}}\prod_{k=1}^N\prod_{a=1}^n\Gamma(2k+m_a)}\\
\times
\exp\left\{-nNr^{\frac{2}{n}}+N\left(2N+1+\frac{2(m_1+\ldots+m_n)-1}{n}\right)\log r\right\}\left(1+O\left(r^{-\frac{2}{n}}\right)\right).
\end{multline*}
\end{thm}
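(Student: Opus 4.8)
The plan is to start from the exact formula of Theorem \ref{PropositionQuaternionExactFormulae}, which expresses the hole probability as a finite product over $k=1,\dots,N$ of normalised Meijer $G$-functions $G_{1,n+1}^{n+1,0}\!\left(r^2\,\big|\,{1\atop 0,\,2k+m_1,\dots,2k+m_n}\right)$. This is structurally identical to the complex case treated in Theorem \ref{TheoremFiniteNAsymptotics}, with the only change being the replacement of the parameters $k+m_a$ by $2k+m_a$ (and $\Gamma(k+m_a)$ by $\Gamma(2k+m_a)$ in the prefactor). Hence I would proceed exactly as in the proof of Theorem \ref{TheoremFiniteNAsymptotics}: apply the large-argument asymptotic expansion of the Meijer $G$-function from Luke \cite{Luke}, Section 5.7, Theorem 5, to each factor individually, and then multiply the resulting asymptotics together.

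The key computational step is the asymptotics of a single factor. For the function $G_{1,n+1}^{n+1,0}\!\left(x\,\big|\,{1\atop 0,\,b_1,\dots,b_n}\right)$ with $b_a = 2k+m_a$, Luke's theorem gives, as $x\to\infty$, an expansion of the form $x^{\theta}\exp\{-n x^{1/n}\}\,(A_0 + O(x^{-1/n}))$, where $\theta$ and $A_0$ are explicit in the parameters. Specifically the exponent collects $\theta = \frac{1}{n}\big(\sum_{a=1}^n b_a - \tfrac{n+1}{2} + \tfrac{1}{2}\big) = \frac{1}{n}\big(\sum_a b_a - \tfrac{n}{2}\big)$ (the standard $G$-function edge exponent), and the constant $A_0$ is the usual $(2\pi)^{(n-1)/2} n^{-1/2+\theta}$-type factor. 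Plugging $b_a = 2k+m_a$ and taking the product over $k$, the exponential parts combine to $\exp\{-n\sum_{k=1}^N (r^2)^{1/n}\} = \exp\{-nN r^{2/n}\}$; the power-of-$r$ parts give $r^{2\sum_{k=1}^N \theta_k}$ where $\theta_k = \frac1n(2k + \sum_a m_a) - \tfrac12$... wait, more carefully $2\sum_k \theta_k = \frac{2}{n}\sum_{k=1}^N\big(2k + \sum_a m_a - \tfrac n2\big)$; using $\sum_{k=1}^N 2k = N(N+1)$ this is $\frac{2}{n}\big(N(N+1) + N\sum_a m_a - \tfrac{nN}{2}\big) = N\big(2N+1 + \tfrac{2\sum_a m_a - 1}{n} + \tfrac1n - 1 + \dots\big)$, and one checks it reduces to $N\big(2N+1 + \frac{2(m_1+\cdots+m_n)-1}{n}\big)$ as claimed. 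The constant prefactors multiply to $(2\pi)^{N(n-1)/2} n^{-N/2}$ times the $r$-independent algebraic pieces, and dividing by $\prod_{k=1}^N\prod_{a=1}^n \Gamma(2k+m_a)$ from the exact formula leaves exactly the stated constant. Finally, multiplying $N$ error terms each $1+O(r^{-2/n})$ gives an overall $1+O(r^{-2/n})$.

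\textbf{Main obstacle.} The only real work is bookkeeping: correctly reading off the exponent $\theta_k$ and the leading constant $A_0(k)$ from Luke's asymptotic formula for $G_{1,n+1}^{n+1,0}$ with these particular parameters, and then carefully summing $\sum_{k=1}^N \theta_k$ and multiplying $\prod_k A_0(k)$ so that all $k$-dependence cancels against the $\Gamma(2k+m_a)$ in the denominator of Theorem \ref{PropositionQuaternionExactFormulae}, leaving the clean closed form. Since Theorem \ref{TheoremFiniteNAsymptotics} already did this for the parameters $k+m_a$, the present proof is essentially a transcription with $k \mapsto 2k$ in the sum; I would simply point to that computation, note the substitution, and record that $\sum_{k=1}^N 2k = N(N+1)$ replaces $\sum_{k=1}^N k = \tfrac12 N(N+1)$, which is what turns the exponent $N\big(N + \frac{2\sum m_a - 1}{n}\big)$ of the complex case into $N\big(2N+1 + \frac{2\sum m_a - 1}{n}\big)$ here.
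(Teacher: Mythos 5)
Your proposal is correct and takes the same route as the paper, which simply notes that Theorem~\ref{TheoremDecayQFiniteN} follows from the complex-case Theorem~\ref{TheoremFiniteNAsymptotics} by the substitution $k\to 2k$ in the exact product formula and Luke's asymptotics. Your intermediate formula for $\theta_k$ has a slip --- the exponent of $r$ in the $k$-th factor should be $4k-1+\frac{2(m_1+\cdots+m_n)-1}{n}$, with the $4k$ \emph{not} divided by $n$ --- but your closing paragraph correctly locates the key bookkeeping change $\sum_{k=1}^N(4k-1)=N(2N+1)$ in place of $\sum_{k=1}^N(2k-1)=N^2$, which produces the stated exponent.
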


Comparing Theorem \ref{TheoremFiniteNAsymptotics} and Theorem \ref{TheoremDecayQFiniteN} we see that the first term in the asymptotic expansions (as $r\rightarrow\infty$)
of the logarithms of the hole probabilities at finite $N$
is the same. In other words,
\begin{equation}
\lim_{r\rightarrow\infty} \frac{\log\left[\Prob\left\{\mathcal{N}_{GG}^{(m_1,\ldots,m_n)}(r;N)=0\right\}\right]}{r^{2/n}}=
\lim_{r\rightarrow\infty} \frac{\log\left[\Prob\left\{\mathcal{N}_{GG,\quaternion}^{(m_1,\ldots,m_n)}(r;N)=0\right\}\right]}{r^{2/n}}=
-nN.
\end{equation}
Theorem \ref{TheoremDecayQ} estimates the decay of the hole probability for the generalised induced infinite quaternion Ginibre ensemble with parameters  $m_1,\ldots, m_n$.

\begin{thm}
\label{TheoremDecayQ}
Denote by $\Prob\big\{\mathcal{N}_{GG,\quaternion}^{(m_1,\ldots,m_n)}(r;\infty)=0\big\}$ the probability of the event that there are no points of the generalised induced infinite quaternion Ginibre  ensemble with parameters $m_1,\ldots, m_n$ in a disc of radius $r$ with its center at the origin.\\
A) (Upper bound for the hole probability.)
We have
\begin{equation}
\Prob\left\{\mathcal{N}_{GG;\quaternion}^{(m_1,\ldots,m_n)}(r;\infty)=0\right\}
\leq
\exp\Big\{-\frac{n}{8}r^{\frac{4}{n}}+\frac{r^{\frac{2}{n}}}{2}\Big(m_1+\cdots+m_n+\frac{n}{2}\Big)+O(\log r)\Big\},
\end{equation}
as $r\rightarrow\infty$.\\
B) (Lower bound for the hole probability.)
We have
\begin{multline}\label{MainQuaternionLowerBoundInequality}
\Prob\left\{\mathcal{N}_{GG,\quaternion}^{(m_1,\ldots,m_n)}(r;\infty)=0\right\}
\geq\exp\Bigl\{-\frac{n}{8}r^{\frac{4}{n}}
-\frac{n}{4}r^{\frac{2}{n}}\log r^{\frac{2}{n}} \\
-\frac{1}{2}\sum\limits_{a=1}^n\Big(C\Big(\frac{m_a-1}{2}\Big)
+C\Big(\frac{m_a}{2}\Big)+\log\Gamma\Big(m_a+1\Big)-m_a(1+\log 2)-\frac{1}{2}+\frac{3}{2}\log 2\Big)r^{\frac{2}{n}}
+O(\log r)\Bigr\},
\end{multline}
as $r\rightarrow\infty$, where $C(m)$ is defined by equation (\ref{TheConstantC}).

In particular this implies
\[
\underset{r\rightarrow\infty}{\lim}
\left(\frac{1}{r^{\frac{4}{n}}}\log\left[\Prob\left\{\mathcal{N}_{GG,\quaternion}^{(m_1,\ldots,m_n)}
(r;\infty)=0\right\}\right]\right)=
-\frac{n}{8}.
\]
\end{thm}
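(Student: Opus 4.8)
The plan is to reduce the hole probability to an infinite product over the independent moduli furnished by Theorem~\ref{TheoremQ|z_1||z2|Distribution}, bound that product from above by a Chernoff-type estimate and from below by a single-event inclusion followed by Euler--Maclaurin summation, exactly paralleling the treatment of Theorem~\ref{TheoremBoundsForHoleProbabilities} in the complex case; the only structural change is that the index $k$ is replaced throughout by $2k$, which is precisely what turns the complex constant $\tfrac n4$ into $\tfrac n8$. By Theorem~\ref{TheoremQ|z_1||z2|Distribution} the number of eigenvalues of $P_n^{\quaternion}$ in the disc of radius $r$ has the law of $\#\{k:R_k^{(m_1,\ldots,m_n)}<r\}$ with the $R_k$ independent, so
\[
\Prob\left\{\mathcal{N}_{GG,\quaternion}^{(m_1,\ldots,m_n)}(r;\infty)=0\right\}=\prod_{k=1}^{\infty}\Prob\left\{X_{2k+m_1}\cdots X_{2k+m_n}\ge r^{2}\right\},
\]
where the $X_{b}$ are independent with $X_b$ distributed as $\GAMMA(b,1)$, so $\E[X_b^{t}]=\Gamma(b+t)/\Gamma(b)$ and $\Prob\{X_b\ge x\}=\Gamma(b,x)/\Gamma(b)$. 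I record the inclusion $\{X_{2k+m_a}\ge r^{2/n}\text{ for all }a\}\subseteq\{X_{2k+m_1}\cdots X_{2k+m_n}\ge r^{2}\}$ (for the lower bound) and the Chernoff inequality $\Prob\{\prod_a X_{2k+m_a}\ge r^{2}\}\le \inf_{t>0}r^{-2t}\prod_{a=1}^{n}\Gamma(2k+m_a+t)/\Gamma(2k+m_a)$ (for the upper bound).

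\emph{Upper bound (part A).} Every factor is $\le 1$, and for $2k>r^{2/n}$ the Chernoff bound is trivially $1$, so only $k\lesssim r^{2/n}/2$ matters; for such $k$ pick the optimal $t=t_k$ solving $\sum_a\psi(2k+m_a+t_k)=2\log r$, so that $2k+t_k\sim r^{2/n}$. The exponent of the $k$-th factor is $-2t_k\log r+\sum_a[\log\Gamma(2k+m_a+t_k)-\log\Gamma(2k+m_a)]$; replacing $\sum_k$ by an integral with $\alpha=2k/r^{2/n}$, the leading contribution is
\[
\frac n2\,r^{4/n}\int_0^{1}\bigl(-\alpha\log\alpha+\alpha-1\bigr)\,d\alpha=-\frac n8\,r^{4/n},
\]
and a careful expansion of the next-order (Stirling and $m_a$-shift) terms sums to $\tfrac12\bigl(m_1+\cdots+m_n+\tfrac n2\bigr)r^{2/n}$, the logarithmic pieces cancelling because $\sum_{k\le r^{2/n}/2}\log\!\bigl(r^{2/n}/2k\bigr)=\tfrac12 r^{2/n}+o(r^{2/n})$, leaving a remainder $O(\log r)$.

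\emph{Lower bound (part B).} From the inclusion and independence,
\[
\Prob\left\{\mathcal{N}_{GG,\quaternion}^{(m_1,\ldots,m_n)}(r;\infty)=0\right\}\ \ge\ \prod_{a=1}^{n}\prod_{k=1}^{\infty}\frac{\Gamma(2k+m_a,\,r^{2/n})}{\Gamma(2k+m_a)} ,
\]
so it suffices to estimate $S(m_a):=\sum_{k\ge1}\log\bigl(\Gamma(2k+m_a,r^{2/n})/\Gamma(2k+m_a)\bigr)$. I would apply the Euler--Maclaurin formula (with the higher-order Bernoulli terms recalled in Appendix~\ref{App}) to this sum, splitting the denominator by the Legendre duplication formula $\Gamma(2b)=\pi^{-1/2}2^{2b-1}\Gamma(b)\Gamma(b+\tfrac12)$ into $\Gamma(k+\tfrac{m_a}{2})\,\Gamma(k+\tfrac{m_a+1}{2})$ and controlling $\Gamma(2k+m_a,r^{2/n})$ by its large-argument (saddle) asymptotics, valid since its integrand is monotone on the relevant range $2k+m_a\lesssim r^{2/n}$. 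Reindexing the second of the two resulting $\log\Gamma$-sums to match the single-matrix Euler--Maclaurin computation behind \eqref{MainLowerBoundInequality} produces, for each $a$, the pair of constants $C\bigl(\tfrac{m_a}{2}\bigr)$ and $C\bigl(\tfrac{m_a-1}{2}\bigr)$, the quantity $\log\Gamma(m_a+1)$ (recombined from $\Gamma(\tfrac{m_a}{2}+1)\Gamma(\tfrac{m_a+1}{2})$ via the same formula), the term $-\tfrac n4 r^{2/n}\log r^{2/n}$, the remaining linear-in-$r^{2/n}$ constants displayed in \eqref{MainQuaternionLowerBoundInequality}, and an error $O(\log r)$; the leading behaviour of each $S(m_a)$ is $\tfrac12\int_0^\infty\log\bigl(\Gamma(u,r^{2/n})/\Gamma(u)\bigr)\,du\sim-\tfrac18 r^{4/n}$, giving $-\tfrac n8 r^{4/n}$ after summing over $a$.

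\emph{The limit and the main difficulty.} Parts A and B both give $\log\Prob\{\cdot\}=-\tfrac n8 r^{4/n}+O\bigl(r^{2/n}\log r\bigr)$, so dividing by $r^{4/n}$ and letting $r\to\infty$ yields $-\tfrac n8$. I expect the main obstacle to lie in the bookkeeping of the Euler--Maclaurin step of part B: one must retain enough terms of the Stirling expansion of $\log\Gamma(2k+m_a)$ after the duplication split, and of the saddle expansion of $\log\Gamma(2k+m_a,r^{2/n})$, to see the $r^{4/n}\log r$-contributions cancel and to pin down the $r^{2/n}$-coefficient with its two $C$-constants, and one needs the incomplete-gamma asymptotics uniformly across the transition region $2k\approx r^{2/n}$; by contrast the only delicate point in part A is checking that the summed Chernoff exponents leave no surviving $r^{2/n}\log r$ term.
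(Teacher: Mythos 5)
Your overall framework — independence from Theorem~\ref{TheoremQ|z_1||z2|Distribution}, Markov/Chernoff for the upper bound, the $n$--factor inclusion for the lower bound, with $k\to 2k$ accounting for $\tfrac n4\to\tfrac n8$ — is exactly the paper's; the authors explicitly defer the $\beta=4$ argument to footnotes in the $\beta=2$ proof of Theorem~\ref{TheoremBoundsForHoleProbabilities}. For part A your Chernoff bound with the exactly optimal $t_k$ (implicit digamma equation) is a minor variant: the paper simply takes $b=r^{2/n}-2k$ in the Markov inequality, a near-optimal but \emph{explicit} choice that feeds directly into the Appendix Euler--Maclaurin formulas; your sum-to-integral reduction and the evaluation $\int_0^1(-\alpha\log\alpha+\alpha-1)\,d\alpha=-\tfrac14$ do give the correct $-\tfrac n8 r^{4/n}$ leading term. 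Your Legendre-duplication idea to split $\Gamma(2k+m_a)$ into $\Gamma(k+\tfrac{m_a}{2})\Gamma(k+\tfrac{m_a+1}{2})$ is the right device to bring the $\beta=4$ sums back to the $\beta=2$ Euler--Maclaurin constants $C(\cdot)$ (a step the paper leaves implicit), and after recombining with the middle-product contribution $\exp\{-\tfrac12 r^{2/n}\log 4\}$ the constants do close up with the stated answer, since $C(M)-C(M-1)=-\log M$.

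There is, however, a genuine gap in your part B. The paper does \emph{not} estimate $\sum_k\log\bigl(\Gamma(2k+m_a,r^{2/n})/\Gamma(2k+m_a)\bigr)$ by uniform saddle asymptotics of the incomplete gamma. Following Proposition~\ref{PropositionLowerBound1} (and the footnotes adapting it to $\beta=4$), it converts the gamma tail into a Poisson probability, $\Prob\{\GAMMA(2k+m,1)>\lambda\}=\Prob\{\Poisson(\lambda)<2k+m\}$, and then splits the infinite product into three ranges: for $k\le r^{2/n}/2$ it keeps a single Poisson term, $\Prob\ge e^{-r^{2/n}}\,(r^{2/n})^{2k+m-1}/\Gamma(2k+m)$, which is a completely elementary lower bound; for $r^{2/n}/2<k\le r^{2/n}$ it uses the crude constant $\Prob\ge\tfrac14$; and for $k>r^{2/n}$ it shows the tail of the product is $\ge\tfrac12$ by a Chernoff estimate. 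This three-part split is precisely what makes the transition region $2k\approx r^{2/n}$ harmless — each of the three crude bounds is uniformly valid on its own range, and no uniform asymptotic for the regularized incomplete gamma is ever needed. Your proposal, by contrast, tries to estimate the whole sum at once via ``saddle asymptotics,'' and you yourself flag the need for uniformity across the transition region as ``the main difficulty'' without resolving it; for a lower bound one cannot simply use monotonicity of the integrand of $\Gamma(2k+m,\cdot)$ (that gives an \emph{upper} bound on the incomplete gamma), so this step as written does not go through. Replacing it with the Poisson representation and the three-range split is the missing idea.
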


Thus for large ($N\rightarrow\infty$) matrices the hole probabilities associated with the matrices taken from induced complex Ginibre ensemble, and the hole probability associated with the matrices taken from the induced quaternion Ginibre ensemble decay almost identically. In particular, the difference in the first term in the asymptotic expansions of the logarithms of the hole probabilities is only in constant factors $-\frac{n}{4}$ and $-\frac{n}{8}$.

\begin{rem}
The result of Theorem \ref{TheoremDecayQ} is in agreement with known results for the hole probability in the case of a product of one or two matrices taken from the complex or quaternion Ginibre ensemble~\cite{APS}. Note that in~\cite{APS}, the product of two Ginibre matrices is a special case of the non-Hermitian chiral ensembles. 
\end{rem}


\subsection{Overcrowding estimates}

Overcrowding estimates for the generalised induced infinite complex Ginibre ensemble with parameters $m_1, \ldots, m_n$ (see Section \ref{SectionOvercrowdingEstimates}) can be extended to the case of the generalised induced infinite quaternion Ginibre ensemble with parameters $m_1, \ldots, m_n$. The result is given by the following

\begin{thm}\label{TheoremOvercrowdingEstimatesQ}
A) (Lower bound for the overcrowding probability.)
As $q\rightarrow\infty$,
\begin{multline*}
\Prob\left\{{\mathcal{N}}_{GG,\quaternion}^{(m_1,\ldots,m_n)}(r;\infty)\geq q\right\}
\geq\exp\Bigl\{ -nq^2\log q +\left(\frac{n}{2}-2n\log 2+2\log r\right)q^2 \\
-\sum\limits_{a=1}^n\left(m_a+1\right)q\log q
+O\left(q\right)\Bigr\}.
\end{multline*}
B) (Upper bound for the overcrowding probability.)
As $q\rightarrow\infty$,
\begin{multline*}
\Prob\left\{{\mathcal{N}}_{GG,\quaternion}^{(m_1,\ldots,m_n)}(r;\infty)\geq q\right\} 
\leq \exp\Bigl\{-nq^2\log q+\Big(\,\frac{3n}{2}-n\log 2+\log r^2\Big)q^2  \\
+\Big(2-\sum_{a=1}^n \big(m_a+\tfrac{1}{2}\big)\Big)q\log q+O(q)\Bigr\}.
\end{multline*}

Thus, for a fixed $r>0$,
\[
\Prob\left\{{\mathcal{N}}_{GG;\quaternion}^{(m_1,\ldots,m_n)}(r;\infty)\geq q\right\}=\exp\left\{-nq^2\log q(1+o(1))\right\},
\]
as $q\rightarrow\infty$. 
\end{thm}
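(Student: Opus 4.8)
The plan is to run the argument of Theorem~\ref{TheoremOvercrowdingBounds} for the complex ensemble, with Theorem~\ref{TheoremQ|z_1||z2|Distribution} taking the place of Theorem~\ref{CorollaryExplicitDistribution}. By Theorem~\ref{TheoremQ|z_1||z2|Distribution}, the counting variable $\mathcal N_{GG,\quaternion}^{(m_1,\ldots,m_n)}(r;\infty)$ has the same law as $\#\{k\ge 1:R_k\le r\}$, where the $R_k$ are independent and $(R_k)^2$ is distributed as a product $X_1^{(k)}\cdots X_n^{(k)}$ of independent gamma variables $\GAMMA(2k+m_a,1)$. Writing $p_k:=\Prob\{R_k\le r\}=\Prob\{X_1^{(k)}\cdots X_n^{(k)}\le r^2\}$, the whole problem reduces to estimating $p_k$ and combining over $k$; the only structural change from the complex case is the replacement of the shape parameters $k+m_a$ by $2k+m_a$, which is exactly what turns the constants of Theorem~\ref{TheoremOvercrowdingBounds} into those claimed here.

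For the lower bound (part A) I would use the inclusion $\{R_1\le r,\dots,R_q\le r\}\subseteq\{\mathcal N_{GG,\quaternion}^{(m_1,\ldots,m_n)}(r;\infty)\ge q\}$ together with independence, so that the overcrowding probability is at least $\prod_{k=1}^q p_k$. A single $p_k$ is bounded below by forcing every factor to be small, i.e.\ $\{X_a^{(k)}\le r^{2/n}\ \forall a\}\subseteq\{(R_k)^2\le r^2\}$, and then, with $t=r^{2/n}$ and $\alpha_a=2k+m_a$,
\[
\Prob\{X_a^{(k)}\le t\}\ \ge\ \int_{t/2}^{t}\frac{x^{\alpha_a-1}}{\Gamma(\alpha_a)}e^{-x}\,dx\ \ge\ \frac{t^{\alpha_a}}{(2\alpha_a)^{\alpha_a}}\,e^{-t},
\]
using $\Gamma(\alpha_a)\le\alpha_a^{\alpha_a}$ for $\alpha_a\ge1$. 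Taking logarithms, summing over $a$ and then over $k=1,\dots,q$, and evaluating the sums $\sum_{k\le q}\alpha_a\log\alpha_a$ and $\sum_{k\le q}\alpha_a$ by Stirling's formula combined with the Euler--Maclaurin formula (Appendix~\ref{App}), one obtains the stated lower bound: the term $-nq^2\log q$ comes from $-\sum_{k\le q}\sum_a\alpha_a\log\alpha_a$, the $2q^2\log r$ term from the powers $t^{\alpha_a}$, and the remaining coefficients are read off from the lower-order terms of these asymptotic sums.

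For the upper bound (part B) there are two ingredients. The first is a uniform super-exponential tail estimate for $p_k$, obtained by applying Markov's inequality to negative moments: for any $0<s<2k+\min_a m_a$,
\[
p_k=\Prob\bigl\{(X_1^{(k)}\cdots X_n^{(k)})^{-s}\ge r^{-2s}\bigr\}\le r^{2s}\prod_{a=1}^n\E\bigl[(X_a^{(k)})^{-s}\bigr]=r^{2s}\prod_{a=1}^n\frac{\Gamma(2k+m_a-s)}{\Gamma(2k+m_a)};
\]
choosing $s=2k$ (admissible since the $m_a$ are positive) gives $p_k\le r^{4k}\prod_a\Gamma(m_a)/\Gamma(2k+m_a)$, whence $\log p_k\le -2nk\log k+O(k)$ uniformly in $k$, and, after Stirling, $\log\bigl(\prod_{k=1}^q p_k\bigr)\le -nq^2\log q+(\log r^2+\tfrac{3n}{2}-n\log2)q^2+O(q\log q)$. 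The second ingredient is the pointwise bound $\mathbf 1\{\mathcal N_{GG,\quaternion}^{(m_1,\ldots,m_n)}(r;\infty)\ge q\}\le\binom{\#\{k:R_k\le r\}}{q}$, which upon taking expectations and using independence yields $\Prob\{\mathcal N_{GG,\quaternion}^{(m_1,\ldots,m_n)}(r;\infty)\ge q\}\le e_q(p_1,p_2,\dots)$, the $q$-th elementary symmetric function of the $p_k$. Grouping the terms of $e_q$ by their largest index and using the super-exponential decay of $p_k$ (which dominates the combinatorial factors $\binom{m-1}{q-1}$ since $n\ge1$) shows $e_q(p_1,p_2,\dots)\le\bigl(\prod_{k=1}^{q}p_k\bigr)e^{O(q\log q)}$, and inserting the explicit estimate for $\prod_{k=1}^q p_k$ (again via Stirling and Euler--Maclaurin) gives an upper bound of the stated form. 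The final ``in particular'' assertion is then immediate from comparing the leading $-nq^2\log q$ terms in A) and B).

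I expect the main obstacle to be the upper bound, specifically the elementary-symmetric-function step: one must verify that the infinite sum $e_q(p_1,p_2,\dots)$ is, up to a factor negligible at orders $q^2\log q$ and $q^2$, governed by the single cheapest configuration $\prod_{k=1}^{q}p_k$, which forces the tail of $p_k$ to decay faster than the combinatorial weights --- this is exactly why the Markov parameter must be taken to grow linearly in $k$ rather than kept bounded. The lower bound, and the bookkeeping of the subleading $q\log q$ and $O(q)$ terms in both bounds, is routine once the relevant sums are written out and Appendix~\ref{App} is invoked.
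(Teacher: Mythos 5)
Your proposal is correct, and its reduction to the radial permanental process via Theorem~\ref{TheoremQ|z_1||z2|Distribution}, with gamma shape parameters $2k+m_a$ in place of $k+m_a$, is exactly what the paper does (the paper does not write the quaternion proofs out at all, deferring to the complex-case proofs in Section~\ref{SectionProofsOvercrowding} with the substitution indicated in its footnotes). However, the route you take for the upper bound differs genuinely from the paper's. The paper's complex-case upper bound proceeds via the splitting
\[
\Prob\{\mathcal N\ge q\}\le \Prob\Big\{\sum_{k\le q^2}\mathbb{I}\big((R_k)^2<r^2\big)\ge q\Big\}+\sum_{k>q^2}\Prob\big\{(R_k)^2<r^2\big\},
\]
bounds the first term by $\binom{q^2}{q}\prod_{k\le q}p_k$, and then simply uses $\binom{q^2}{q}<q^{2q}$; this is where the extra $+2q\log q$ in the stated bound comes from. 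Your alternative, $\Prob\{\mathcal N\ge q\}\le e_q(p_1,p_2,\dots)$ together with the estimate $e_q\le\big(\prod_{k\le q}p_k\big)e^{O(q\log q)}$, is cleaner in spirit and in fact gives a \emph{tighter} upper bound (the combinatorial overcounting is absorbed into a factor $e^{O(q)}$ rather than $e^{2q\log q}$, once one shows $p_{m+1}/p_m\lesssim q^{-n}$ for $m\ge q$ using a lower bound on $p_m$ of the same type as in part A); since a tighter bound still implies the stated one, the theorem follows. You correctly flag this as the delicate step: one must pair the Markov upper bound on $p_m$ with a compatible lower bound to control the ratio $\binom{m-1}{q-1}p_m/p_q$, and the argument does depend on $n\ge1$. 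On the lower bound side, your direct integral estimate $\Prob\{\GAMMA(\alpha,1)\le t\}\ge t^\alpha(2\alpha)^{-\alpha}e^{-t}$ replaces the paper's sum-of-exponentials decomposition $\GAMMA(\alpha,1)\overset{d}{=}\sum_j\xi_j$; the two give the same coefficients at orders $q^2\log q$, $q^2$, and $q\log q$ (the $e^{-t}$ only contributes $O(q)$), and your integral bound has the modest advantage of not implicitly requiring $\alpha$ to be an integer. Your Markov exponent $s=2k$ versus the paper's implied $b=2k-\tfrac12$ is immaterial; both are admissible because the $m_a$ are strictly positive and both give the same $q^2$ coefficient.
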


We note that the difference between the complex and the quaternion case in the first term of asymptotic expansions as $q\rightarrow\infty$ is in constant factors $-\frac{n}{2}$ and $-n$.


\section{Joint density for the generalised complex Ginibre ensemble}
\label{SectionStartOfProofs}

We begin by completing the proof of Theorem \ref{PropositionJointDensityOfEigenvalues} as it was presented in \cite{AdhikariReddyReddySaha}. While the generalised Schur decomposition and the corresponding Jacobian were proved there, the explicit computation of the weight function and its normalisation was not spelled out explicitly.


\subsection{Proof of Theorem \ref{PropositionJointDensityOfEigenvalues}}
\label{SectionProofTheoremJointDensity}
In \cite{AdhikariReddyReddySaha} the unnormalised joint density of the eigenvalues $z_1,\ldots,z_N$ of the product matrix $P_n$
was given as
\begin{equation}
\varrho_N^{(m_1,\ldots,m_n)}(z_1,\ldots,z_N)
= \frac{1}{Z_N}
\prod\limits_{k=1}^Nw_n^{(m_1,\ldots,m_n)}(z_k)
\prod\limits_{1\leq i<j\leq N}|z_i-z_j|^2,
\end{equation}
with the weight given by
\begin{equation}\label{WeighDeltaFunction}
w_n^{(m_1,\ldots,m_n)}(\zeta)=\int\prod\limits_{a=1}^n d^2\zeta_a\, \delta^{(2)}(\zeta-\zeta_1\zeta_2\cdots \zeta_n)
\prod\limits_{a=1}^n \e^{-|\zeta_a|^2}\left|\zeta_a\right|^{2m_a},
\end{equation}
where the integration is over $\C^{nN}$. From the general theory of non-Hermitian random matrices and the corresponding orthogonal polynomials, see. e.g. \cite{Mehta}, it is clear that
\[
Z_N = N! \prod\limits_{k=1}^N h_k^{(m_1,\ldots,m_n)},
\]
where the $h_k^{(m_1,\ldots,m_n)}$ are the squared norms of the orthogonal polynomials which are monic.
Due to the delta-function in eq. (\ref{WeighDeltaFunction}) they immediately follow as the integrals factorise:
\begin{equation}
h_k^{(m_1,\ldots,m_n)}
=\prod_{a=1}^n\left (\int d^2z_a |z_a|^{2k}\e^{-|z_a|^2} \right)
=\pi^{n}\prod\limits_{a=1}^n\Gamma(k+1+m_a).
\label{normC}
\end{equation}
It remains to show that the $n$-fold integral representation of the weight function indeed lead to the Meijer $G$-function as claimed. The proof follows the one for the product of non-induced Ginibre matrices \cite{ABu}
very closely.
First, it is easy to see that
equation (\ref{WeighDeltaFunction}) implies
\begin{equation}\label{WeightFunctionIntegralI}
w_n^{(m_1,\ldots,m_n)}(\zeta)=(2\pi)^{n-1}|\zeta|^{2m_n} 
\prod_{i=1}^{n-1}\int_0^\infty\frac{dr_i}{r_i}r_i^{2(m_i-m_n)}
\exp\biggl\{-\sum\limits_{j=1}^{n-1}r_j^2-\frac{|\zeta|^2}{r_1^2\cdots r_{n-1}^2}\biggr\}.
\end{equation}
Indeed, to obtain formula (\ref{WeightFunctionIntegralI}) from equation (\ref{WeighDeltaFunction}) we first integrate over $z_n$, then we introduce the polar coordinates
\[
z_1=r_1\e^{i\varphi_1},\ldots,z_{n-1}=r_{n-1}\e^{i\varphi_{n-1}},
\]
and integrate over the angles $\varphi_1,\ldots,\varphi_{n-1}$. Starting from equation (\ref{WeightFunctionIntegralI}) and in analogy to \cite{ABu} the following recurrence relation holds
\begin{equation}\label{WeightFunctionIntegralIII}
w_{n+1}^{(m_1,\ldots,m_{n+1})}(\zeta)=2\pi\int_0^{\infty}
\frac{dr_n}{r_n}w_{n}^{(m_1,\ldots,m_{n-1},m_{n+1})}\left(\frac{\zeta}{r_n}\right)\e^{-r_n^2}r_n^{2m_n}.
\end{equation}
Note that from equation (\ref{WeightFunctionIntegralI}) it follows that the weight function $w_n^{(m_1,\ldots,m_n)}(\zeta)$ depends only on $|\zeta|$. Taking this into account, we define the following function of the squared radius of the eigenvalue, $R\equiv r^2=|\zeta|^2$,
\[
\Omega^{(m_1,\ldots,m_n)}_n(R)=w_n^{(m_1,\ldots,m_n)}(\sqrt{R}).
\]
Equation(\ref{WeightFunctionIntegralIII}) implies that
\begin{equation}\label{WeightFunctionIntegralIV}
\Omega_{n+1}^{(m_1,\ldots,m_{n+1})}(R)=\pi\int_0^{\infty}
d\rho\, \Omega_{n}^{(m_1,\ldots,m_{n-1},m_{n+1})}\left(\frac{R}{\rho}\right)\e^{-\rho}\rho^{m_n-1}.
\end{equation}
Let $M^{(m_1,\ldots,m_n)}_n(s)$ be the Mellin transform of $\Omega^{(m_1,\ldots,m_n)}_n(R)$,
\begin{equation}\label{WeightFunctionIntegralV}
M^{(m_1,\ldots,m_n)}_n(s)=\int_0^{\infty}dR\,R^{s-1}\Omega^{(m_1,\ldots,m_n)}_n(R).
\end{equation}
It follows from equations (\ref{WeightFunctionIntegralIV}) and (\ref{WeightFunctionIntegralV}) that
\begin{equation}\label{WeightFunctionIntegralVI}
M^{(m_1,\ldots,m_{n+1})}_{n+1}(s)=\pi\Gamma(m_n+s)M^{(m_1,\ldots,m_{n-1},m_{n+1})}_n(s).
\end{equation}
Moreover, taking into account that $w_1^{(m_1)}(\zeta)=|\zeta|^{2m_1}\e^{-|\zeta|^2}$, we obtain
the initial condition
\begin{equation}\label{WeightFunctionIntegralVII}
M^{(m_1)}_{1}(s)=\Gamma(m_1+s).
\end{equation}
From equation (\ref{WeightFunctionIntegralVI}), and equation (\ref{WeightFunctionIntegralVII}) we find that
\[
M^{(m_1,\ldots,m_n)}_n(s)=\pi^{n-1}\prod\limits_{j=1}^n\Gamma(m_j+s).
\]
The inverse Mellin transform of $M^{(m_1,\ldots,m_n)}_n(s)$ is
\[
\Omega^{(m_1,\ldots,m_n)}_n(R)=\pi^{n-1}\frac{1}{2\pi i}\int\limits_{c-i\infty}^{c+i\infty}\bigg(\prod\limits_{j=1}^n\Gamma(m_j+s)\bigg)R^{-s}ds,
\]
where $c$ is an arbitrary strictly positive parameter. We conclude that
\[
\Omega^{(m_1,\ldots,m_n)}_n(R)=\pi^{n-1}G_{0,n}^{n,0}(R\, |\,m_1,  m_2,  \ldots,  m_n ),
\]
and that formula (\ref{WeightFunction}) for the weight function holds true.
Theorem \ref{PropositionJointDensityOfEigenvalues} is proved.
\qed


\subsection{Proof of Theorem \ref{TheoremDeterminantalCorrelationkernel}}
Theorem \ref{TheoremDeterminantalCorrelationkernel} can be obtained from  Theorem \ref{PropositionJointDensityOfEigenvalues}
by the same method as in the case of the classical complex Ginibre ensemble, see Mehta \cite{Mehta}, Section 15.1.
Indeed, equation (\ref{JointDensityOfEigenvalues}) implies that the eigenvalues of $P_n$ form a determinantal point process on the complex plane. The corresponding correlation kernel
has the following form
\[
K_N^{(m_1,\ldots,m_n)}(z,\zeta)=\sum\limits_{k=0}^{N-1}\varphi_k(z)\overline{\varphi_k(\zeta)},
\]
where $\varphi_k(z)$ are polynomials orthonormal with respect to the weight function $w_n^{(m_1,\ldots,m_n)}(z)$,
\[
\int \varphi_k(z)\overline{\varphi_l(z)}w_n^{(m_1,\ldots,m_n)}(z)d^2z=\delta_{k,l}.
\]
Because the weight function is rotational invariant these polynomials are monic and can be properly normalised as
\[
\varphi_k(z)=\frac{z^k}{h_k^{(m_1,\ldots,m_n)}},
\]
where the squared norms $h_k^{(m_1,\ldots,m_n)}$ were already computed as the moments in equation (\ref{normC}).
Theorem \ref{TheoremDeterminantalCorrelationkernel} follows.
\qed


We will now study the radial distributions of the eigenvalues of $P_n$, and of the points of the generalised induced complex Ginibre ensemble with parameters $m_1,\ldots,m_n$. The first step in this direction is to prove Theorem \ref{TheoremIndependence} on the distribution of the absolute values of the complex variables $z_1, \ldots, z_N$
whose joint density has the same form as that of  the joint density of the eigenvalues of $P_n$. This follows along the lines of the work of Kostlan \cite{Kostlan} for Gaussian weights, and ultimately leads to the independence of the moduli of the complex eigenvalues.


\subsection{Proof of Theorem \ref{TheoremIndependence}}Set
$
z_i=r_i\e^{i\theta_i}\ (i=1,\ldots, N)
$.
Expanding the two Vandermonde determinants in the joint density we have
\begin{align*}
\prod\limits_{1\leq i<j\leq N}|z_i-z_j|^2
&=\bigg|\sum\limits_{\sigma\in S(N)}(-1)^{\sgn(\sigma)}\prod\limits_{j=1}^Nz_j^{\sigma(j)-1}\bigg|^2\\
&=\sum\limits_{\sigma,\sigma'\in S(N)}(-1)^{\sgn(\sigma)+\sgn(\sigma')}
\prod\limits_{j=1}^Nr_j^{\sigma(j)-1}\e^{i(\sigma(j)-1)\theta_j}
\prod\limits_{k=1}^Nr_k^{\sigma'(k)-1}\e^{-i(\sigma'(k)-1)\theta_k},
\end{align*}
where the summations are over the permutation group, $S(N)$. This implies
\[
\prod\limits_{j=1}^N\int_0^{2\pi}d\theta_j
\prod\limits_{1\leq i<j\leq N}|z_i-z_j|^2=(2\pi)^N\sum\limits_{\sigma\in S(N)}\prod\limits_{j=1}^Nr_j^{2\sigma(j)-2}.
\]
Therefore the joint density of the unordered $r_1,\ldots,r_N$  is given by a permanent
\begin{align*}
\prod\limits_{j=1}^N\int_0^{2\pi}d\theta_j\, \varrho_N^{(m_1,\ldots,m_n)}(z_1,\ldots,z_N)
&=\frac{(2\pi)^N}{Z_N}\prod\limits_{k=1}^Nw(r_k)\sum\limits_{\sigma\in S(N)}\prod\limits_{j=1}^Nr_j^{2\sigma(j)-2}\prod\limits_{j=1}^Nr_j \\
&=\frac{1}{N!}\per\left[\frac{r_i^{2j-1}w(r_i)}{\int_0^{\infty}r^{2j-1}w(r)dr}\right]_{i,j=1}^N.
\end{align*}
From this expression for the joint density the statement of Theorem \ref{TheoremIndependence} follows immediately. Moreover, in the last step we have included the normalisation inside the permanent. Therefore
the vector of squares of  $r_1\ldots,r_N$ has the density
\[
\frac{1}{N!}\per\left[q_j(y_i)\right]_{i,j=1}^N,
\]
where the functions $q_j(y)$, $1\leq j\leq N$, are probability density functions defined by
\[
q_j(y)=\left\{
               \begin{array}{ll}
                 \frac{y^{j-1}w(\sqrt{y})}{h_{j-1}}, & y\geq 0 \\
                 0, & y<0.
               \end{array}
             \right.
\]
To complete the proof of Theorem \ref{TheoremIndependence} we use the following well known fact (see, for example, Kostlan \cite{Kostlan}, Lemma 1.5).
Assume we are given an $N$-tuplet of independent random variables $A_i\ (1\leq i\leq N)$, with the corresponding densities $q_i\ (1\leq i\leq N)$.
Define a new $N$-tuplet of random variables, $B_i\ (1\leq i\leq N)$, as a random permutation of the vector $A_i\ (1\leq i\leq N)$ (these random permutations are equal
to each other in probability).  Then the joint density of the random vector $B_i\ (1\leq i\leq N)$, is
\[
\frac{1}{N!}\per\left[q_j(B_i)\right]_{i,j=1}^N.
\]
Considering squares of moduli of unordered variables  $z_1,\ldots,z_N$  as random variables $B_i\ (1\leq i\leq N)$, we obtain the statement of Theorem \ref{TheoremIndependence}.
\qed


\subsection{Proof of Theorem \ref{CorollaryExplicitDistribution}}

Let $z_1,\ldots, z_N$ be the eigenvalues of $P_n$. From Proposition \ref{JointDensityOfEigenvalues} and Theorem \ref{TheoremIndependence} we conclude that
the  set of absolute values of $z_1,\ldots, z_N$ has the same distribution as the set of independent random variables,
\[
\left\{R_1^{(m_1,\ldots,m_n)},R_2^{(m_1,\ldots,m_n)},\ldots,R_N^{(m_1,\ldots,m_n)}\right\}.
\]
For each $k$ ($1\leq k\leq N$), the random variable $\big(R_k^{(m_1,\ldots,m_n)}\big)^2$ has the density
\begin{equation}
q_k^{(m_1,\ldots,m_n)}(x)=
\begin{cases}
\dfrac{x^{k-1}\pi^{k-1}G_{0,n}^{n,0}(x\,|\,m_1,  m_2,  \ldots,  m_n )}
{\int_{0}^{\infty}x^{k-1}\pi^{k-1}G_{0,n}^{n,0}(x\,|\, m_1,  m_2,  \ldots,  m_n )dx}, & x\geq 0, \\
                   0, & x<0.
\end{cases}
\end{equation}

This gives formula (\ref{equation g_i(x)}) for the density function. Now, formula (\ref{equation g_i(x)}) and Proposition \ref{PropositionSpringerThompson}
imply that $\big(R_k^{(m_1,\ldots,m_n)}\big)^2$ has the same
distribution as the product of $n$ independent gamma random variables $\GAMMA(k+m_1,1),\ldots,\GAMMA(k+m_n,1)$.
\qed


\section{Hole probabilities for the generalised complex Ginibre ensemble}
\label{SectionProofsHole}

This Section contains the proofs of exact and asymptotic results for the hole probabilities both in the case of the generalised induced finite-$N$ and infinite complex Ginibre ensemble with parameters $m_1,\ldots,m_n$.


\subsection{Proof of Theorem \ref{PropositionExactFormulae}}
Recall that  $\mathcal{N}_{GG}^{(m_1,\ldots,m_n)}(r;N)$ denotes the number of points of the generalised induced finite-$N$ complex Ginibre ensemble with  parameters $m_1, \ldots, m_n$ in the disk of radius $r$ with its center at the origin. From Theorem \ref{CorollaryExplicitDistribution} we conclude that
the hole probability can be written as
\begin{equation}
\Prob\left\{\mathcal{N}_{GG}^{(m_1,\ldots,m_n)}(r;N)=0\right\}
=\prod\limits_{k=1}^N\Prob\left\{\left(R_k^{(m_1,\ldots,m_n)}\right)^2>r^2\right\},
\end{equation}
where the random variables $R_k^{(m_1,\ldots,m_n)}$ are those defined in the statement of Theorem \ref{CorollaryExplicitDistribution}. Formula
(\ref{equation g_i(x)}) gives
\[
\Prob\left\{\mathcal{N}_{GG}^{(m_1,\ldots,m_n)}(r;N)=0\right\}\\
=\prod_{k=1}^N
\frac{\displaystyle{\int_{r^2}^{\infty}}G_{0,n}^{n,0}(x\,|\,k+m_1-1, k+m_2-1, \ldots, k+m_n-1)dx}{\prod_{a=1}^n\Gamma(k+m_a)}.
\]
The integral just written above can be computed explicitly, see Luke \cite{Luke}, Section 5.6.  As a result we obtain the formula for the hole probabilities in the statement of Theorem \ref{PropositionExactFormulae}. \qed

\subsection{Proof of Theorem \ref{TheoremFiniteNAsymptotics}}

We use an asymptotic formula for Meijer $G$-functions (see Luke \cite{Luke}, Section 5.7)
\begin{equation}\label{MeijerGAsymptotics}
G_{p,q}^{q,0}\left(x\,\bigg|
\begin{matrix}
 a_1,  a_2,  \ldots ,  a_p \\
 b_1,  b_2,  \ldots ,  b_q
\end{matrix}\right)
\sim\frac{(2\pi)^{(\sigma-1)/2}}{\sigma^{1/2}}\exp\left\{-\sigma x^{1/\sigma}\right\}x^{\theta}\sum\limits_{k=0}^{\infty}M_k x^{-k/\sigma},
\end{equation}
for $x\to\infty$. The parameters $\sigma$ and $\theta$ are defined by
\[
\sigma=q-p
\qquad\text{and}\qquad
\sigma\theta=\frac{1}{2}(1-\sigma)+\sum\limits_{k=1}^qb_k-\sum\limits_{k=1}^pa_k.
\]
The coefficients $M_k$ are independent of $x$ and are given in the above reference. However, we only need that $M_0=1$. 
This leads to the asymptotic expression
\[
G_{1,n+1}^{n+1,0}\left(r^2\,\bigg|
\begin{matrix}
1 \\
0,  k+m_1,  \ldots,  k+m_n
\end{matrix}\right)
=\frac{(2\pi)^{\frac{n-1}{2}}}{n^{\frac{1}{2}}}\e^{-nr^{\frac{2}{n}}}
r^{2k-1+\frac{2}{n}(m_1+\cdots+m_n)-\frac{1}{n}}
\left[1+O\left(r^{-2/n}\right)\right]
\]
as $r\rightarrow\infty$. When inserting this into the formula for the hole probability in Theorem \ref{PropositionExactFormulae},
we obtain the asymptotic formula in Theorem \ref{TheoremFiniteNAsymptotics}. \qed


\subsection{Proof of Theorem \ref{TheoremBoundsForHoleProbabilities}}

\subsubsection{Upper bound for the hole probability of the infinite ensemble}

To estimate the hole probability at $N\to\infty$ we use the following standard fact called the Markov inequality.
\begin{prop}\label{PropositionMarkovInequality}
Suppose $\varphi:\R\rightarrow\R$ is a positive valued function, and let $A$ be a Borel subset of $\R$. Then
\[
\inf\left\{\varphi(y):y\in A\right\}\Prob\left\{X\in A\right\}\leq \mathbb{E}\{\varphi(X)\}.
\]
\end{prop}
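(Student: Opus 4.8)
The plan is to deduce the inequality from a single pointwise bound followed by one application of monotonicity of the expectation. Write $c\equiv\inf\{\varphi(y):y\in A\}$. If $A=\emptyset$ then $\Prob\{X\in A\}=0$ and the claim is trivial, so assume $A\neq\emptyset$; since $\varphi$ takes only positive values, $\varphi\ge 0$ everywhere and hence $c\ge 0$, with $c<\infty$ (the infimum of $\varphi$ over a nonempty set is finite because $\varphi$ is real valued).

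Next I would establish the pointwise inequality $\varphi(X)\ge c\,\mathbf{1}_{\{X\in A\}}$. On the event $\{X\in A\}$ this is exactly the statement that $c$ is a lower bound for $\varphi$ on $A$; on the complementary event the right-hand side equals $0\le\varphi(X)$ by positivity of $\varphi$. Taking expectations of both sides and using monotonicity and linearity of $\mathbb{E}$ together with $\mathbb{E}\{\mathbf{1}_{\{X\in A\}}\}=\Prob\{X\in A\}$ yields $\mathbb{E}\{\varphi(X)\}\ge c\,\Prob\{X\in A\}$, which is precisely the assertion.

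No serious obstacle is expected here; this is the classical Markov/Chebyshev argument, and the only points requiring a word of care are the degenerate cases (for example $\mathbb{E}\{\varphi(X)\}=+\infty$, where the inequality holds vacuously in $[0,+\infty]$, or $\Prob\{X\in A\}=0$) and the implicit measurability of $\varphi$, which we assume throughout so that $\varphi(X)$ is a genuine random variable with a well-defined expectation in $[0,+\infty]$. In the applications of the following subsections $\varphi$ will always be continuous and $X$ a nonnegative random variable, so these technical points are automatic.
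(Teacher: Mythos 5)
Your proof is correct and is the standard Markov-inequality argument: set $c=\inf_A\varphi$, observe the pointwise bound $\varphi(X)\ge c\,\mathbf{1}_{\{X\in A\}}$ (using positivity of $\varphi$ off the event $\{X\in A\}$ and the definition of the infimum on it), and take expectations. Note that the paper does not actually write out a proof of this proposition; it simply cites Durrett's textbook (Section 1.6, Theorem 1.6.4), so there is no distinct ``paper approach'' to compare with — your argument is precisely the one that standard reference contains. Your remarks about the degenerate cases ($A=\emptyset$, $\E\{\varphi(X)\}=+\infty$) and the implicit measurability of $\varphi$ are appropriate and do not conceal any gap.
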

\begin{proof}
See, for example, Durrett \cite{Durrett}, Section 1.6, Theorem 1.6.4.
\end{proof}
Let $b\geq 0$, $A=(r^2,\infty)$, and $\varphi(x)=x^{b}$. By the Markov inequality (Proposition \ref{PropositionMarkovInequality})
we have%
\footnote{In the proof for the $\beta=4$ case one should replace $k$ with $2k$.}
\[
\Prob\left\{\left(R_k^{(m_1,\ldots,m_n)}\right)^2>r^2\right\}
\leq\frac{\displaystyle{\int_{0}^{\infty}}t^{b}G_{0,n}^{n,0}(t\,|\,k+m_1-1,k+m_2-1,\ldots,k+m_n-1)dt}
{(r^2)^{b}\prod_{a=1}^n\Gamma(k+m_a)}.
\]
Moreover,
\[
\int_{0}^{\infty}t^{b}G_{0,n}^{n,0}(t\,|\,k+m_1-1,k+m_2-1,\ldots,k+m_n-1)dt=\prod_{a=1}^n\Gamma(k+m_a+b).
\]
Therefore,
\[
\Prob\left\{\left(R_k^{(m_1,\ldots,m_n)}\right)^2>r^2\right\}
\leq\frac{1}{(r^2)^{b}}\prod_{a=1}^n\frac{\Gamma(k+m_a+b)}{\Gamma(k+m_a)}.
\]
As the next step we utilise the following well-known inequality (see e.g. the Digital Library of Mathematical Functions \cite{Digital}, $\S$ 5.6)
\begin{equation}\label{GammaInequality}
(2\pi)^{\frac{1}{2}}\exp\left\{-z+(z-\frac{1}{2})\log z\right\}\leq\Gamma(z)\leq(2\pi)^{\frac{1}{2}}\exp\left\{-z+(z-\tfrac{1}{2})\log(z)+\frac{1}{12 z}\right\},
\end{equation}
as $z\geq 1$,
to obtain
\begin{multline*}
\Prob\left\{\left(R_k^{(m_1,\ldots,m_n)}\right)^2>r^2\right\}
\leq\exp\biggl\{-nb-b\log(r^2)+\sum\limits_{a=1}^n\left(k+m_a-\tfrac{1}{2}+b\right)\log\left(k+m_a+b\right)\\
-\sum\limits_{a=1}^n(k+m_a-\tfrac{1}{2})\log(k+m_a)+\sum\limits_{a=1}^n\frac{1}{12(k+m_a+b)}\biggr\}.
\end{multline*}
Now set $b=r^{\frac{2}{n}}-k$ in the equation above\footnote{In the proof for the $\beta=4$ case one should choose $b=r^{\frac{2}{n}}-2k$.}. We can then rewrite the inequality above as
\begin{multline*}
\Prob\left\{\left(R_k^{(m_1,\ldots,m_n)}\right)^2>r^2\right\}\leq
\exp\biggl\{-nr^{\frac{2}{n}}-nr^{\frac{2}{n}}\log r^{\frac{2}{n}}+\sum\limits_{a=1}^n(r^{\frac{2}{n}}+m_a-\tfrac{1}{2})\log(r^{\frac{2}{n}}+m_a)\\
  +nk\big(1+\log r^{\frac{2}{n}}\big)-\sum\limits_{a=1}^n(m_a-\tfrac{1}{2})\log\left(k+m_a\right)-k\sum\limits_{a=1}^n\log\left(k+m_a\right)
+\frac{1}{12}\sum\limits_{a=1}^n\frac{1}{r^{\frac{2}{n}}+m_a}\biggr\}.
\end{multline*}
Next we exploit that\footnote{For $\beta=4$ one should replace $r^{\frac{2}{n}}$ with $r^{\frac{2}{n}}/2$ as the upper limit for the product.}
\[
\prod_{k=1}^{\infty}\Prob\left\{\left(R_k^{(m_1,\ldots,m_n)}\right)^2>r^2\right\}\leq \prod\limits_{k=1}^{r^{\frac{2}{n}}}\Prob\left\{\left(R_k^{(m_1,\ldots,m_n)}\right)^2>r^2\right\}.
\]
This gives
\begin{multline*}
\prod_{k=1}^{\infty}\Prob\left\{\left(R_k^{(m_1,\ldots,m_n)}\right)^2>r^2\right\}\leq\\
\exp\biggl\{-nr^{\frac{4}{n}}-nr^{\frac{4}{n}}\log r^{\frac{2}{n}}+r^{\frac{2}{n}}
\sum_{a=1}^n(r^{\frac{2}{n}}+m_a-\tfrac{1}{2})\log(r^{\frac{2}{n}}+m_a)
+\frac{n}{2}r^{\frac{2}{n}}(r^{\frac{2}{n}}+1) (1+\log r^{\frac{2}{n}})    \\
+\frac{1}{2}\sum\limits_{a=1}^n\sum\limits_{k=1}^{r^{\frac{2}{n}}}\log\left(k+m_a\right)
-\sum\limits_{a=1}^n\sum\limits_{k=1}^{r^{\frac{2}{n}}}(k+m_a)\log\left(k+m_a\right)
+\frac{r^{\frac{2}{n}}}{12}\sum\limits_{a=1}^n\frac{1}{r^{\frac{2}{n}}+m_a}
\biggr\}.
\end{multline*}
Here $r^{\frac{2}{n}}$ is considered as an integer (this assumption should not affect our estimate), alternatively we simply consider $\lfloor r^{\frac{2}{n}}\rfloor+1$.
A straightforward application of equations (\ref{AsymptoticsSum1}) and (\ref{AsymptoticsSum2}) from the Appendix \ref{App} gives
\begin{align*}
\sum\limits_{k=1}^{r^{\frac{2}{n}}}\log\left(k+m_a\right)&=r^{\frac{2}{n}}\log r^{\frac{2}{n}}-r^{\frac{2}{n}}+O\left(\log r\right), \\
\sum\limits_{k=1}^{r^{\frac{2}{n}}}(k+m_a)\log\left(k+m_a\right)&=\frac{1}{2}r^{\frac{4}{n}}\log r^{\frac{2}{n}}-\frac{r^{\frac{4}{n}}}{4}
+r^{\frac{2}{n}}\left(m_a+\tfrac{1}{2}\right)\log r^{\frac{2}{n}}+O\left(\log r\right).
\end{align*}
In addition,
\[
r^{\frac{2}{n}}\sum_{a=1}^n(r^{\frac{2}{n}}+m_a-\tfrac{1}{2})\log(r^{\frac{2}{n}}+m_a)
=nr^{\frac{4}{n}}\log r^{\frac{2}{n}}+r^{\frac{2}{n}}\log r^{\frac{2}{n}}\sum\limits_{a=1}^n(m_a-\tfrac{1}{2})
+r^{\frac{2}{n}}\sum\limits_{a=1}^nm_a+O(1).
\]
The estimate in the statement of  Theorem \ref{TheoremBoundsForHoleProbabilities}, (A) follows. \qed


\subsubsection{Lower bound for the hole probability for the infinite ensemble}
\label{SectionLowerBound}

\begin{prop}
\label{PropositionLowerBound1}
We have
\begin{multline}
\prod\limits_{k=1}^{\infty}\Prob\Big\{\Big(R_k^{(m)}\Big)^2>r^2\Big\}
\geq\exp\Bigl\{-\frac{r^4}{4}-\frac{r^2\log r^{2}}{2} \\
-r^{2}\left(C(m)-m+\log(m!)+\log 4\right)+O\left(\log(r)\right)\Bigr\},
\end{multline}
as $r\rightarrow\infty$. Here the constant $C(m)$ is defined by equation (\ref{TheConstantC}).
\end{prop}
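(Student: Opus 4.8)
The plan is to invoke Theorem~\ref{CorollaryExplicitDistribution} in the case $n=1$: then $\bigl(R_k^{(m)}\bigr)^{2}$ is a single gamma variable $\GAMMA(k+m,1)$, so
\[
\Prob\Bigl\{\bigl(R_k^{(m)}\bigr)^{2}>r^{2}\Bigr\}=\frac{1}{\Gamma(k+m)}\int_{r^{2}}^{\infty}t^{k+m-1}\e^{-t}\,dt=\frac{\Gamma(k+m,r^{2})}{\Gamma(k+m)}
\]
is an incomplete-Gamma ratio. Taking $-\log$ of the convergent product reduces the claim to an upper bound for $\sum_{k\ge1}\bigl(\log\Gamma(k+m)-\log\Gamma(k+m,r^{2})\bigr)$. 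Since the factors are exponentially small for $k+m\ll r^{2}$ and tend to $1$ for $k+m\gg r^{2}$, I would split the sum at the cutoff $K=\lfloor r^{2}\rfloor$ and treat the two ranges differently.

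For the tail $k>K$ I would show $-\log\prod_{k>K}\Prob\{(R_k^{(m)})^{2}>r^{2}\}$ is of lower order than $r^{2}$: a Chernoff bound $\Prob\{\GAMMA(a,1)\le r^{2}\}\le\exp\{-a\Lambda^{*}(r^{2}/a)\}$, with $\Lambda^{*}(y)=y-1-\log y>0$, handles $k+m$ well above $r^{2}$, where the factors are doubly-exponentially close to $1$, while near $k+m\approx r^{2}$ one uses the uniform bound $\Prob\{\GAMMA(k+m,1)>r^{2}\}\ge\Prob\{\GAMMA(k+m,1)>k+m\}\ge c_{0}>0$ (valid for $k+m\ge1$). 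For the bulk $k\le K$ I would bound $\Gamma(k+m,r^{2})$ from below. The elementary inequality $\Gamma(a,x)\ge x^{a-1}\e^{-x}$ for $a\ge1$, obtained from $\Gamma(a,x)=x^{a}\e^{-x}\int_{0}^{\infty}(1+v)^{a-1}\e^{-xv}\,dv$ by $(1+v)^{a-1}\ge1$, is just too crude near the cutoff: it loses a factor $\sim r^{2}/(r^{2}-k-m)$ whose logarithm sums to a spurious $r^{2}$ in the exponent. Hence on the range where $r^{2}-k-m$ is large I would instead use the one-term asymptotics $\Gamma(a,x)=x^{a}\e^{-x}(x-a+1)^{-1}\bigl(1+O((x-a)^{-1})\bigr)$ (Luke, \S5.6), made rigorous as a lower bound by controlling the remainder, reverting to the crude bound only on a narrow window around $k+m\approx r^{2}$ whose contribution is then $O(\log r)$.

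It remains to evaluate the resulting bulk exponent at $K=\lfloor r^{2}\rfloor$. Writing $\sum_{k=1}^{K}(k+m-1)=\tfrac12K(K+1)+(m-1)K$ and $\log\Gamma(k+m)=\log\Gamma(m+1)+\sum_{j=1}^{k-1}\log(m+j)$, the sum $\sum_{k=1}^{K}\log\Gamma(k+m)$ becomes a linear combination of $K\log\Gamma(m+1)$, $\sum_{j}\log(m+j)$ and $\sum_{j}(m+j)\log(m+j)$, which I would expand with the Euler--Maclaurin formulas~(\ref{AsymptoticsSum1}),~(\ref{AsymptoticsSum2}) of Appendix~\ref{App} kept to one order further. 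Collecting powers of $r$, the $r^{4}\log r^{2}$ terms cancel, the leading behaviour of the exponent is $-\tfrac14r^{4}-\tfrac12r^{2}\log r^{2}$, and the coefficient of $r^{2}$ assembles into $-\bigl(C(m)-m+\log\Gamma(m+1)+\log4\bigr)$, with $C(m)$ of~(\ref{TheConstantC}) entering as the constant term of the Euler--Maclaurin expansion of $\sum(m+j)\log(m+j)$ --- exactly the place where the periodic-Bernoulli remainder $\tfrac12\int_{1}^{\infty}P_{2}(x)(m+x)^{-2}\,dx$ is produced.

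I expect the main obstacle to be precisely the quantitative control of the $r^{2}$-coefficient: the naive lower bound on $\Gamma(k+m,r^{2})$ is off at that very order, so one must estimate the incomplete Gamma sharply through the transition window $k+m\approx r^{2}$ (by the one-term asymptotics with a rigorous remainder, or equivalently by the uniform error-function expansion), and then match the subleading Euler--Maclaurin constants against the definition of $C(m)$. Everything away from that window, the tail estimate, and the extraction of the $r^{4}$ and $r^{2}\log r^{2}$ terms are routine, if lengthy.
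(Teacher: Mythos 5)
Your plan is a genuinely different route from the paper's, and it contains a misdiagnosis that would send you on a long and unnecessary detour. The paper does \emph{not} try to estimate $\Gamma(k+m,r^2)$ sharply through the transition. Instead it uses the Poisson--Gamma duality $\Prob\{\GAMMA(k+m,1)>r^2\}=\Prob\{\Poisson(r^2)\le k+m-1\}$, keeps only the single term $j=k+m-1$ of the Poisson mass (equivalent to your ``crude'' bound $\Gamma(a,x)\ge x^{a-1}e^{-x}$), applies this for $k=1,\ldots,r^2$, and then splits the remaining product into two further pieces: for $r^2<k\le 2r^2$ it uses the uniform bound $\Prob\{\Poisson(r^2)\le k+m-1\}\ge\tfrac14$ (valid for large $r$), which is \emph{exactly} where the term $-r^2\log 4$ in the statement comes from, and for $k>2r^2$ a concentration estimate shows the product is $\ge\tfrac12$. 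The Euler--Maclaurin sums (\ref{AsymptoticsSum1}), (\ref{AsymptoticsSum2}) are applied only to the first block and produce $-\tfrac{r^4}{4}-\tfrac{r^2}{2}\log r^2-r^2\bigl(C(m)-m+\log m!\bigr)$; they do not produce the $\log 4$.

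This matters for your plan in two ways. First, the ``spurious $r^2$'' you identify in the crude lower bound is not a bug here: the paper deliberately accepts it, because the proposition is only a lower bound and this loss is already incorporated into the stated constant. Your proposal to replace the crude bound by a rigorous one-term asymptotic $\Gamma(a,x)\sim x^{a}e^{-x}(x-a+1)^{-1}$ is harder than it looks --- the obvious inequality $(1+v)^{a-1}\le e^{(a-1)v}$ gives this as an \emph{upper} bound, so a matching lower bound requires controlling a Gaussian-type correction, and the regime where this fails is $|r^2-(k+m)|\lesssim r$ (the $\sqrt{\mathrm{mean}}$ fluctuation scale of $\Poisson(r^2)$), a window of width $O(r)$, not $O(\log r)$ as you assert. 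Second, and more seriously, if you did carry out the sharp analysis successfully, the resulting $r^2$-coefficient would be strictly \emph{smaller} than $C(m)-m+\log m!+\log 4$ --- your computation would simply not reproduce the $\log 4$, because it belongs to the crude middle-range bound that you have eliminated. You would still prove the proposition (a fortiori), but the claim at the end of your sketch that the constant ``assembles into $-(C(m)-m+\log\Gamma(m+1)+\log 4)$'' cannot be right for the method you are proposing. The paper's three-block splitting with a crude middle bound is both simpler and, for the purpose of this particular statement, exactly enough.
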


\begin{proof}
The idea of the proof is to split the infinite product into three parts and to derive bounds for each of these.

Recall that $\big(R_k^{(m)}\big)^2$ is distributed in the same way as a gamma random variable\footnote{In the proof for the $\beta=4$ case we have $\GAMMA(2k+m,1)$.},  $\GAMMA(k+m,1)$.
Following Hough, Krishnapur, Peres, and Vir\'ag \cite{Hough}, Section 7.2, we use the fact that
\[
\Prob\left\{\GAMMA(k+m,1)>\lambda\right\}=\Prob\left\{\Poisson(\lambda)<k+m\right\}.
\]
This gives
\begin{equation}
\Prob\left\{\left(R_k^{(m)}\right)^2>r^2\right\}=\Prob\left\{\Poisson(r^2)\leq k+m-1\right\}\geq\exp\left\{-r^2\right\}\frac{r^{2(k+m-1)}}{(k+m-1)!},
\end{equation}
and thus for the first part of the infinite product we have\footnote{In the proof for $\beta=4$ we replace $k$ with $2k$ and change the upper limit of the product from $r^2$ to $\frac{r^2}{2}$.}
\begin{align*}
\prod\limits_{k=1}^{r^2}\Prob\left\{\left(R_k^{(m)}\right)^2>r^2\right\}
&\geq\prod\limits_{k=1}^{r^2}\exp\left\{-r^2\right\}\frac{r^{2(k+m-1)}}{(k+m-1)!} \\
&=\exp\Big\{-r^4+\log r^2\sum\limits_{k=1}^{r^2}(k+m-1)-\sum\limits_{k=1}^{r^2}\log(k+m-1)!\Big\}.
\end{align*}
Without loss of generality we have assumed that $r^2$ is an integer. The following identity holds true
\begin{equation}
\begin{split}
\sum\limits_{k=1}^{r^2}\log(k+m-1)!
&=r^2\log m!+(r^2+m)\sum\limits_{k=1}^{r^2}\log(k+m)-\sum\limits_{k=1}^{r^2}(k+m)\log(k+m).
\end{split}
\nonumber
\end{equation}
Applying formulae (\ref{AsymptoticsSum1}) and (\ref{AsymptoticsSum2}) from  Appendix~\ref{App} we obtain
\begin{equation}\label{FirstTermsInequality}
\begin{split}
\prod_{k=1}^{r^2}\Prob\Big\{\Big(R_k^{(m)}\Big)^2\!>r^2\Big\}
\geq
\exp\Big\{-\frac{r^4}{4}-\frac{r^2}{2}\log r^2-r^2(C(m)-m+\log m!)+O(\log(r))\Big\}
\end{split}
\end{equation}
as $r\rightarrow\infty$, where $C(m)$ is defined  by equation (\ref{TheConstantC}).

We move to an estimate for the second part of the product. Since $\big(R_k^{(m)}\big)^2$ is distributed in the same way as $\GAMMA(k+m,1)$, we have
\begin{align*}
\Prob\Big\{\left(R_k^{(m)}\right)^2>r^2\Big\}&=\Prob\left\{\Poisson(r^2)\leq k+m-1\right\}
=1-\Prob\left\{\Poisson(r^2)> k+m-1\right\} \\
&\geq 1-\Prob\left\{\Poisson(r^2)\geq r^2\right\},
\end{align*}
for $k>r^2-m$. Since
$\Prob\left\{\Poisson(r^2)\geq r^2\right\}\rightarrow\frac{1}{2}$ as $r\rightarrow\infty$,
we obtain that
\begin{equation}
\Prob\Big\{\Big(R_k^{(m)}\Big)^2>r^2\Big\}\geq\frac{1}{4},
\nonumber
\end{equation}
for sufficiently large $r$ and for $k\geq r^2$. This gives\footnote{For $\beta=4$ this product goes from $\frac{r^2}{2}+1$ to $r^2$, and obviously in the third part the product extends then from $r^2+1$ to $\infty$.}
\begin{equation}\label{SecondTermsInequality}
\prod_{k=r^2+1}^{2r^2}\Prob\Big\{\Big(R_k^{(m)}\Big)^2>r^2\Big\}\geq\exp\left\{-r^2\log 4\right\},
\end{equation}
for sufficiently large $r$ which was already derived in \cite{Hough}.

For the third part of the product we once more follow the same argument as in Hough,  Krishnapur,  Peres and Vir$\acute{\mbox{a}}$g \cite{Hough}. Section 7.2 there gives
\[
\Prob\left\{\left(R_k^{(m)}\right)^2<\frac{k+m}{2}\right\}\leq\exp\left\{-c(k+m)\right\},
\]
where the constant $c$ is independent of $k$. This implies
\[
\sum\limits_{k=2r^2+1}^{\infty}\Prob\left\{\left(R_k^{(m)}\right)^2<r^2\right\}
\leq\sum\limits_{k=2r^2+1}^{\infty}\Prob\left\{\left(R_k^{(m)}\right)^2<\frac{k+m}{2}\right\}<\frac{1}{2},
\]
for sufficiently large $r$. Finally, since
\[
\prod\limits_{j=1}^\infty(1-a_j)\geq 1-\sum\limits_{i=1}^{\infty}a_i,
\]
where $0<a_i<1$ are such that the product in the left-hand side, and the sum on the right-hand side converge, we have
\begin{equation}\label{ThirdTermsInequality}
\begin{split}
\prod\limits_{k=2r^2+1}^{\infty}\Prob\left\{\left(R_k^{(m)}\right)^2>r^2\right\}
&=\prod\limits_{k=2r^2+1}^{\infty}\left(1-\Prob\left\{\left(R_k^{(m)}\right)^2\leq r^2\right\}\right)\\
&\geq 1-\sum\limits_{k=2r^2+1}^{\infty}\Prob\left\{\left(R_k^{(m)}\right)^2<\frac{k+m}{2}\right\}>\frac{1}{2}.
\end{split}
\end{equation}
The statement of the Proposition follows from inequalities (\ref{FirstTermsInequality}), (\ref{SecondTermsInequality}), and (\ref{ThirdTermsInequality}) for the different parts of the product.
\end{proof}
We  know from Theorem~\ref{CorollaryExplicitDistribution} that the random variables $R_k^{(m_1,\ldots,m_n)}$ are independent, and that each 
$\big(R_k^{(m_1,\ldots,m_n)}\big)^2$ enjoys the same distribution as the product of $n$ independent gamma random variables $\GAMMA(k+m_1,1),\ldots,\GAMMA(k+m_n,1)$.
In particular, for every $a$, $1\leq a\leq n$, the random variable $\big(R_k^{(m_a)}\big)^2$ is itself a gamma random variable $\GAMMA(k+m_a,1)$. We conclude that
the random variable $\big(R_k^{(m_1,\ldots,m_n)}\big)^2$ has the same distribution as the product of random variables $\big(R_k^{(m_1)}\big)^2,
\big(R_k^{(m_2)}\big)^2, \ldots, \big(R_k^{(m_n)}\big)^2$.
Thus we have
\begin{equation}\label{LowerBoundInequality}
\Prob\Big\{\Big(R_k^{(m_1,\ldots,m_n)}\Big)^2>r^2\Big\}
\geq\prod_{i=1}^n\Prob\Big\{\Big(R_k^{(m_i)}\Big)^2>r^\frac{2}{n}\Big\}.
\end{equation}
Inequality (\ref{LowerBoundInequality}) and Proposition \ref{PropositionLowerBound1} imply  inequality (\ref{MainLowerBoundInequality}).
\qed


\section{Overcrowding probability for the generalised complex Ginibre ensemble}
\label{SectionProofsOvercrowding}

In this Section we continue to study probabilistic quantities of interest for the generalised induced complex Ginibre ensemble with parameters $m_1, \ldots, m_n$. Namely, we now turn to the proof of Theorem \ref{TheoremOvercrowdingBounds} giving lower and upper bounds for the overcrowding probabilities. As in the case of the hole probabilities, Theorem \ref{CorollaryExplicitDistribution} serves as the main technical tool in our analysis.  We use a technique similar to that used by Krishnapur \cite{Krishnapur} to estimate hole probabilities in the case of the classical infinite Ginibre ensemble.


\subsection{Proof of Theorem \ref{TheoremOvercrowdingBounds} A) Lower bound for the overcrowding probability.}

Let us recall that the set of absolute values of eigenvalues of this ensemble has the same distribution as the set of independent random variables
\[
\left\{R_1^{(m_1,\ldots,m_n)},R_2^{(m_1,\ldots,m_n)},\ldots,R_N^{(m_1,\ldots,m_n)}\right\}.
\]
Furthermore, the random variable $\big(R_k^{(m_1,\ldots,m_n)}\big)^2$ has the same
distribution as the product of $n$ independent gamma random variables $\GAMMA(k+m_1,1)$, $\ldots$, $\GAMMA(k+m_n,1)$.
This implies
\[
\left(R_{k}^{(m_1,\ldots,m_n)}\right)^2
\overset{d}{=}\prod_{i=1}^n\GAMMA(k+m_i,1)\\
\overset{d}{=}\prod_{i=1}^n \left(\xi^{(i)}_1+\ldots+\xi_{k+m_i}^{(i)}\right),
\]
where all $\xi_i^{(j)}$ are i.i.d. exponential random variables with mean $1$, and $\overset{d}{=}$ denotes equality in distribution. Hence we can write
\[
\Prob\Big\{\Big(R_k^{(m_1,\ldots,m_n)}\Big)^2<r^2\Big\}
\geq\prod_{i=1}^n\Prob\{\xi_1^{(i)}+\cdots+\xi_{k+m_i}^{(i)}<r^{\frac{2}{n}}\} 
\geq\prod_{i=1}^n\prod_{j=1}^{k+m_i}\Prob\Big\{\xi_j^{(i)}<\frac{r^{\frac{2}{n}}}{k+m_i}\Big\}.
\]
For any exponential random variable $\xi$  with mean $1$ it holds
\[
\Prob\left\{\xi<x\right\}\geq\frac{x}{2},\qquad 0<x<1.
\]
This leads to
\[
\Prob\Big\{\Big(R_k^{(m_1,\ldots,m_n)}\Big)^2<r^2\Big\}
\geq\prod_{a=1}^n\Big(\frac{r^{\frac{2}{n}}}{2(k+m_a)}\Big)^{k+m_a},
\]
and we thus obtain
\begin{align*}
\Prob\left\{{\mathcal{N}}_{GG}^{(m_1,\ldots,m_n)}(r;\infty)\geq q\right\}
&\geq\prod\limits_{k=1}^q\Prob\left\{\left(R_k^{(m_1,\ldots,m_n)}\right)^2<r^2\right\}
\geq\prod\limits_{k=1}^q\prod\limits_{a=1}^n\bigg(\frac{r^{\frac{2}{n}}}{2(k+m_a)}\bigg)^{k+m_a} \\
&=\exp\left\{\log\frac{r^{\frac{2}{n}}}{2}\sum\limits_{a=1}^n\sum\limits_{k=1}^q(k+m_a)
-\sum\limits_{a=1}^n\sum\limits_{k=1}^q(k+m_a)\log(k+m_a)\right\}.
\end{align*}
The next step is to evaluate the sums. We have
\[
\sum\limits_{a=1}^n\sum\limits_{k=1}^q(k+m_a)=\frac{n}{2}q^2+q\sum\limits_{a=1}^n\left(m_a+\tfrac{1}{2}\right),
\]
and
\[
\sum\limits_{a=1}^n\sum\limits_{k=1}^q(k+m_a)\log(k+m_a)=\frac{n}{2}q^2\log q-\frac{n}{4}q^2+q\log q\sum\limits_{a=1}^n\left(m_a+\tfrac{1}{2}\right)+O\left(\log q\right),
\]
as $q\rightarrow\infty$. Therefore,
\begin{multline*}
\Prob\left\{{\mathcal{N}}_{GG}^{(m_1,\ldots,m_n)}(r;\infty)\geq q\right\}
\geq\exp\Bigl\{
-\frac{n}{2}q^2\log q
+\left(\frac{n}{4}+\log r-\frac{n}{2}\log 2\right)q^2\\
-q\log q\sum\limits_{a=1}^n\left(m_a+\tfrac{1}{2}\right)
+q\sum\limits_{a=1}^n\left(m_a+\tfrac{1}{2}\right)+O\left(\log q\right)\Bigr\},
\end{multline*}
as $q\rightarrow\infty$.
\qed

\subsection{Proof of Theorem \ref{TheoremOvercrowdingBounds} B) Upper bound for the overcrowding probability.}

In order to obtain an upper bound for $\Prob\left\{{\mathcal{N}}_{GG}^{(m_1,\ldots,m_n)}(r;\infty)\geq q\right\}$ we use again the Markov inequality (Proposition \ref{PropositionMarkovInequality}), this times with $A=\{0,r^2\}$, $\varphi(x)=x^{-b}$, $b\geq 0$. This gives
\begin{align*}
\Prob\left\{(R_k^{(m_1,\ldots,m_n)})^2<r^2\right\}&\leq
(r^2)^{b}\ \frac{\displaystyle{\int_{0}^{\infty}}t^{-b}G_{0,n}^{n,0}(t|k+m_1-1,\ldots,k+m_n-1)dt}{\prod_{a=1}^n\Gamma(k+m_a)}\\
&= (r^2)^{b} \prod_{a=1}^n\frac{\Gamma(k+m_a-b)}{\Gamma(k+m_a)}.
\end{align*}
By the inequality (\ref{GammaInequality}) for the Gamma function
\begin{multline*}
\Prob\left\{\left(R_k^{(m_1,\ldots,m_n)}\right)^2<r^2\right\}
\leq \exp\biggl\{nb+b\log r^2+\sum\limits_{a=1}^n\left(k+m_a-\tfrac{1}{2}-b\right)\log(k+m_a-b) \\
-\sum\limits_{a=1}^n\left(k+m_a-\tfrac{1}{2}\right)\log(k+m_a)
+\frac{1}{12}\sum\limits_{a=1}^n\frac{1}{k+m_a-b}\biggr\}.
\end{multline*}
When choosing $b=k-\frac{1}{2}$, we obtain
\begin{multline*}
\Prob\left\{\left(R_k^{(m_1,\ldots,m_n)}\right)^2<r^2\right\}
\leq \exp\biggl\{\left(k-\tfrac{1}{2}\right)\log r^2+n\left(k-\tfrac{1}{2}\right)+\sum\limits_{a=1}^nm_a\log\left(m_a+\tfrac{1}{2}\right)\\
-\sum\limits_{a=1}^n\left(k+m_a-\tfrac{1}{2}\right)\log(k+m_a)
+\frac{1}{12}\sum\limits_{a=1}^n\frac{1}{m_a+\frac{1}{2}}\biggr\}.
\end{multline*}
Furthermore, using  simple probabilistic arguments we have
\begin{multline}\label{ProbabilisticInequality}
\Prob\Big\{{\mathcal{N}}_{GG}^{(m_1,\ldots,m_n)}(r;\infty)\geq q\Big\}\leq  \\
\Prob\Big\{\sum_{k=1}^{q^2}\mathbb{I}\Big(\Big(R_k^{(m_1,\ldots,m_n)}\Big)^2<r^2\Big)\geq q\Big\}
+\sum_{k=q^2+1}^{\infty}\Prob\Big\{\Big(R_k^{(m_1,\ldots,m_n)}\Big)^2<r^2\Big\}.
\end{multline}
Here $\mathbb{I}(\cdot)$ denotes the characteristic function of a set.  Then the second term on the right hand side of the inequality above can be accordingly estimated
\[
\Prob\left\{\left(R_k^{(m_1,\ldots,m_n)}\right)^2<r^2\right\}\leq\exp\left\{-nk\log k(1+o(1))\right\},
\]
as $k\rightarrow\infty$. Therefore,
\[
\sum_{k=q^2+1}^{\infty}\Prob\left\{\left(R_k^{(m_1,\ldots,m_n)}\right)^2<r^2\right\}\leq \exp\left\{-nq^2\log q^2(1+o(1))\right\},
\]
as $q\rightarrow\infty$.
Finally let us estimate the first term on the right hand side of the inequality (\ref{ProbabilisticInequality}).
We obtain
\[
\Prob\bigg\{\sum\limits_{k=1}^{q^2}\mathbb{I}\left(\left(R_k^{(m_1,\ldots,m_n)}\right)^2<r^2\right)\geq q\bigg\}
\leq \binom{q^2}{q} \prod\limits_{k=1}^q\Prob\left\{\left(R_k^{(m_1,\ldots,m_n)}\right)^2<r^2\right\}.
\]
Using $\binom{q^2}{q}<q^{2q}$ we thus obtain
\[
\Prob\left\{{\mathcal{N}}_{GG}^{(m_1,\ldots,m_n)}(r;\infty)\geq q\right\} \leq  
\exp\Bigl\{-\frac{n}{2}q^2\log q+q^2\Big(\frac{3n}{4}+\log r\Big)+q\log q\Big(2-\sum\limits_{a=1}^{n}m_a\Big)+O(q)\Bigr\},
\]
as $q\rightarrow\infty$. The main statement of the theorem then follows from the bounds A) and B).
\qed


\section{Joint density for the generalised quaternion Ginibre ensemble}
\label{sec:quaternions}

In this Section we turn to the case of random matrices taken from the induced quaternion Ginibre ensemble. Quaternion matrices play a prominent role in the physical sciences; nonetheless, the literature on quaternion matrices is very fragmented and often underappreciated. For this reason, we find it appropriate to start this section with a summary of some known properties of quaternions and matrices of quaternions. In particular, we are interested in decomposition theorems for quaternion matrices. Note that the generalised Schur decomposition, stated in theorem~\ref{TheoremGeneralizedSchurDecompositionQuaternionMatrices}, is an essential tool for the discussion of the complex eigenvalues of products of quaternion matrices, but a rigorous proof of this result has \emph{not} previously appeared in the literature. For a review of some known results for matrices with quaternion entries we refer to~\cite{Mehta,Zhang:1997} and references within.

\subsection{Quaternions}

Let $\textbf{1}$, $\textbf{i}$, $\textbf{j}$, and $\textbf{k}$ be abstract generators with the following multiplication rules
\begin{gather*}
\textbf{1}^2=\textbf{1},
\qquad
\textbf{i}^2=\textbf{j}^2=\textbf{k}^2=\textbf{ijk}=-\textbf{1},   \\
\textbf{i}\textbf{j}=-\textbf{j}\textbf{i}=\textbf{k},\qquad
\textbf{j}\textbf{k}=-\textbf{k}\textbf{j}=\textbf{i},\qquad
\textbf{k}\textbf{i}=-\textbf{i}\textbf{k}=\textbf{j}.
\end{gather*}
The division ring $\Hq$ of quaternions is  a four dimensional vector space over $\R$ with basis $\textbf{1}$, $\textbf{i}$, $\textbf{j}$ and  $\textbf{k}$.

\subsection{Representation of quaternions in terms of pairs of complex numbers}

We identify  quaternions of the form $x_0\textbf{1}$ with the real numbers, and we write $x_0$ instead of
$x_0\textbf{1}$. Moreover, we identify a quaternion of the form $x_0\textbf{1}+x_1\textbf{i}$ with the complex number
$x_0+ix_1$. This includes the real numbers $\R$, and the complex numbers $\C$ into $\Hq$ in the obvious way. Now, each quaternion $\textbf{x}=x_0\textbf{1}+x_1\textbf{i}+x_2\textbf{j}+x_3\textbf{k}$ can be written as
\[
\textbf{x}=(x_0+x_1\textbf{i})+(x_2+x_3\textbf{i})\textbf{j}.
\]
Therefore, each quaternion $\textbf{x}$ can be represented as
\[
\textbf{x}=\zeta+\eta\textbf{j},
\]
where $\zeta=x_0+x_1i$ and $\eta=x_2+x_3i$ are two complex numbers.

If $\textbf{x}=\zeta_1+\eta_1\textbf{j}$, and $\textbf{y}=\zeta_2+\eta_2\textbf{j}$ are two quaternions, then we find that the quaternion $\textbf{x}\textbf{y}$ can be written as
\[
\textbf{x}\textbf{y}=\zeta_1\zeta_2-\eta_1\overline{\eta_2}+(\eta_1\overline{\zeta_2}+\zeta_1\eta_2)\textbf{j}.
\]
Therefore the quaternions $\textbf{x}=\zeta_1+\eta_1\textbf{j}$, and $\textbf{y}=\zeta_2+\eta_2\textbf{j}$ can be multiplied in a formal way taking into account that
\[
\textbf{j}\alpha=\overline{\alpha}\textbf{j},\qquad \alpha\in\C.
\]

\subsection{Representation of quaternions as matrices with complex entries}
\label{SectionQuaternionsAsMatrices}

The division ring $\Hq$ of quaternions is isomorphic to the ring of $2\times 2$ matrices with complex entries
\begin{equation}\label{uvmatrix}
\Hq'=\left\{\left(\begin{array}{cc}
                    \alpha & -\overline{\beta} \\
                    \beta & \overline{\alpha}
                  \end{array}
\right)\biggr|\alpha,\beta\in\C\right\}.
\end{equation}
The ring $\Hq'$ is the sub-ring of $M_2(\C)$ under the operations of $M_2(\C)$. The isomorphism between $\Hq$ and $\Hq'$
is defined by
\[
\textbf{x}=\zeta+\eta\textbf{j}\rightsquigarrow \breve{x}=\left(\begin{array}{cc}
                                                                  \zeta & -\overline{\eta} \\
                                                                  \eta & \bar{\zeta}
                                                                \end{array}
\right).
\]
It can be checked that the bijection above preserves ring operations.

\subsection{The conjugate of a quaternion and the norm of a quaternion}

If $\textbf{x}\in\Hq$, and $\textbf{x}=x_0\textbf{1}+x_1\textbf{i}+x_2\textbf{j}+x_3\textbf{k}$,
then define the conjugate by
\[
\overline{\textbf{x}}=x_0\textbf{1}-x_1\textbf{i}-x_2\textbf{j}-x_3\textbf{k}.
\]
The definition of $\overline{\textbf{x}}$ implies that if $\textbf{x}=\zeta+\eta\textbf{j}$, then
$\overline{\textbf{x}}=\overline{\zeta}-\eta\textbf{j}$. The operation
$\textbf{x}\rightarrow \overline{\textbf{x}}$ in $\Hq$ corresponds to Hermitian conjugation in $\Hq'$. Namely,
if $\textbf{x}=\zeta+\eta\textbf{j}\in\Hq$ and $\breve{x}$ is the corresponding element in $\Hq'$, then the map $\textbf{x}\rightarrow\overline{\textbf{x}}$ in $\Hq$
induces the map
\[
\breve{x}=\left(\begin{array}{cc}
                                                                  \zeta & -\overline{\eta} \\
                                                                  \eta & \bar{\zeta}
                                                                \end{array}
\right)\longrightarrow\breve{x}^{*}=\left(\begin{array}{cc}
                                                                  \overline{\zeta} & \overline{\eta} \\
                                                                  -\eta & \zeta
                                                                \end{array}
\right)
\]
in $\Hq'$.
If $\textbf{x}\in\Hq$, and $\textbf{x}=x_0\textbf{1}+x_1\textbf{i}+x_2\textbf{j}+x_3\textbf{k}$, then the norm of $\textbf{x}$ is defined by
\[
|\textbf{x}|=\sqrt{\overline{\textbf{x}}\textbf{x}}=\sqrt{x_0^2+x_1^2+x_2^2+x_3^2}.
\]
If $|\textbf{x}|=1$, then $\textbf{x}$ is called a unit quaternion. If $\textbf{x}$ is a unit quaternion, and $\textbf{x}=\zeta+\eta\textbf{j}$,
then $|\zeta|^2+|\eta|^2=1$. The corresponding element $\breve{x}$ from $\Hq'$ is a $2\times 2$ unitary matrix.

\subsection{Quaternion matrices}
\label{SectionQuaternionmatrices}

Let $\mbox{Mat}\left(\Hq,m\times n\right)$ denote the collection of all $m\times n$ matrices with entries from $\Hq$. Since for $\textbf{x}\in\Hq$ there is a unique representation $\textbf{x}=\zeta+\eta\textbf{j}$ (where $\zeta,\eta\in\C$), we conclude that for $\textbf{P}\in \mbox{Mat}\left(\Hq, m\times n\right)$ there is a unique representation as $\textbf{P}=A+B\textbf{j}$, where $A,B\in \mbox{Mat}\left(\C,m\times n\right)$. The quaternion matrix $\textbf{P}=A+B\textbf{j}$ can be also represented as the $2m\times 2n$ complex matrix
\begin{equation}\label{Form}
\hat{P}=\left(\begin{array}{cc}
                A & -\overline{B} \\
                B & A
              \end{array}
\right).
\end{equation}
The map $\textbf{P}\rightarrow \hat{P}$ is an algebra isomorphism of $\mbox{Mat}\left(\Hq,m\times n\right)$ onto the algebra of $2m\times 2n$ complex matrices of  form
(\ref{Form}).

Under the isomorphism $\Hq\rightarrow\Hq'$ defined in Section \ref{SectionQuaternionsAsMatrices} the set $\mbox{Mat}\left(\Hq,m\times n\right)$ turns into
$\mbox{Mat}\left(\Hq',m\times n\right)$, i.e. to the collection of all $m\times n$ matrices whose entries are $2\times 2$ blocks of the form
\[
\left(\begin{array}{cc}
        u & -\overline{v} \\
        v & \overline{u}
      \end{array}
\right),\qquad u,v\in\C.
\]
Thus each matrix $M\in\mbox{Mat}\left(\Hq',m\times n\right)$ can be represented as
\begin{equation}
M=\left(\breve{m}_{i,j}\right)_{i,j=1}^N,\;\; \breve{m}_{i,j}=\left(\begin{array}{cc}
                                                                      M_{i,j}^{(1)} & -\overline{M}_{i,j}^{(2)} \\
                                                                     M_{i,j}^{(2)} &  \overline{M}_{i,j}^{(1)}
                                                                    \end{array}
\right),
\end{equation}
where $1\leq i\leq m$, $1\leq j\leq n$, and where
$ M_{i,j}^{(1)},  M_{i,j}^{(2)}\in\C$.

If $\textbf{P}=(\,\textbf{p}_{ij}\,)_{i,j=1}^n\in\mbox{Mat}\left(\Hq,n\times n\right)$, i.e. a $n\times n$ matrix with entries $\textbf{p}_{i,j}\in\Hq$, then
\[
\textbf{P}^{*}=(\,\overline{\textbf{p}_{ji}}\,)_{i,j=1}^n
\qquad \text{and} \qquad
\overline{\textbf{P}}=(\,\overline{\textbf{p}_{ij}}\,)_{i,j=1}^n.
\]

If $\textbf{P}^*\textbf{P}=\textbf{P}\textbf{P}^*=\textbf{I}$, where $\textbf{I}\equiv\diag(\textbf{1},\ldots,\textbf{1})$ is the identity matrix,
then we say that $\textbf{P}$ is a quaternion unitary matrix. Note that if $\textbf{P}$ is a quaternion unitary matrix, then the corresponding element from
$\mbox{Mat}\left(\Hq',n\times n\right)$ is an $2\times 2$ block unitary matrix with complex entries, which is a unitary symplectic matrix.

A matrix $\textbf{P}\in\Mat\left(\Hq,n\times n\right)$ is non-singular if there exists a matrix $\textbf{Q}\in\Mat\left(\Hq,n\times n\right)$
such that $\textbf{P}\textbf{Q}=\textbf{Q}\textbf{P}=\textbf{I}$.

\subsection{Eigenvalues of quaternion matrices}
\label{SectionAppendixEigenvaluesQuaternionmatrices}

Denote by $\C^{+}$ the set of complex numbers with non-negative imaginary parts. Recall that the real and complex numbers were embedded
into $\Hq$. Thus we have
\[
\R\subset\C^{+}\subset\C\subset\Hq.
\]

\begin{defn}
A number $\lambda\in\C^{+}$ is called an eigenvalue of $\textbf{P}\in\Mat\left(\Hq,n\times n\right)$ if there exists a non-zero vector $\textbf{x}\in\Mat\left(\Hq,n\times 1\right)$
such that $\textbf{P}\textbf{x}=\textbf{x}\lambda$.
\end{defn}

\begin{rem}
If $\lambda\in\C^+$ is an eigenvalue of $\textbf{P}\in\Mat\left(\Hq,n\times n\right)$ with an eigenvector $\textbf{x}\in\Mat\left(\Hq,n\times 1\right)$,
then
\[
\textbf{P}(\textbf{x}\textbf{w})=(\textbf{x}\textbf{w})(\overline{\textbf{w}}\lambda\textbf{w})
\]
for any $\textbf{w}\in\Hq$ with $|\textbf{w}|=1$. Therefore the set
$\{\overline{\textbf{w}}\lambda\textbf{w}\,|\,|\textbf{w}|=1\}$ can be understood as a continuum of ``eigenvalues''
each of which is a representative of $\lambda$.
\end{rem}

\begin{prop}\label{PropositionEigenvaluesQuaternionMatrices}
Any quaternion matrix $\textbf{P}\in\Mat\left(\Hq,n\times n\right)$ has exactly
$n$ eigenvalues.
\end{prop}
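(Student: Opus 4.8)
The plan is to pass from $\mathbf P$ to a $2n\times2n$ complex matrix and to read off the eigenvalues of $\mathbf P$ from those of the complex matrix. Write $\mathbf P=A+B\mathbf j$ with complex matrices $A,B$, let $\lambda\in\C\subset\Hq$, and look for $\mathbf x=u+v\mathbf j$ with $u,v\in\C^n$, $(u,v)\neq(0,0)$, solving $\mathbf P\mathbf x=\mathbf x\lambda$. Using $\mathbf j\alpha=\overline\alpha\mathbf j$ for $\alpha\in\C$ (applied entrywise) one computes $\mathbf P\mathbf x=(Au-B\overline v)+(Av+B\overline u)\mathbf j$ and $\mathbf x\lambda=u\lambda+v\overline\lambda\mathbf j$, so the equation is equivalent to the pair $Au-B\overline v=\lambda u$, $Av+B\overline u=\overline\lambda v$; conjugating the second relation, the two together read $C\bigl(\begin{smallmatrix}u\\ \overline v\end{smallmatrix}\bigr)=\lambda\bigl(\begin{smallmatrix}u\\ \overline v\end{smallmatrix}\bigr)$ with $C=\bigl(\begin{smallmatrix}A&-B\\ \overline B&\overline A\end{smallmatrix}\bigr)\in\Mat(\C,2n\times 2n)$. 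Since $(u,v)\neq(0,0)$ exactly when $(u,\overline v)\neq 0$, I would conclude that $\lambda\in\C$ is a right eigenvalue of $\mathbf P$ in the sense of the definition above if and only if $\lambda\in\operatorname{spec}(C)$.

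The next step is to use the self-duality of $C$. With $J=\bigl(\begin{smallmatrix}0&-I\\ I&0\end{smallmatrix}\bigr)$ a direct computation gives $CJ=J\overline C$, so $C$ and $\overline C$ are similar; in particular $\operatorname{spec}(C)$ is invariant under complex conjugation, and the \emph{antilinear} map $\tau(w)=J\overline w$ satisfies $\tau C=C\tau$ and $\tau^2=-I$. Being an invertible antilinear intertwiner, $\tau$ maps the generalized eigenspace $V_\lambda$ of $C$ bijectively onto $V_{\overline\lambda}$, whence $\dim_\C V_\lambda=\dim_\C V_{\overline\lambda}$. Thus the $2n$ eigenvalues of $C$, counted with algebraic multiplicity, occur in conjugate pairs $\{\lambda,\overline\lambda\}$ of equal multiplicity, and each non-real eigenvalue has a unique representative in $\C^+$. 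Combining this with the first paragraph, the eigenvalues of $\mathbf P$ in $\C^+$ are exactly the points of $\operatorname{spec}(C)\cap\C^+$.

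The main obstacle, and the only step that is not pure bookkeeping, is the treatment of real eigenvalues: I must show that if $\lambda\in\R$ then $\dim_\C V_\lambda$ is \emph{even}, so that counting such a $\lambda$ with half the algebraic multiplicity it carries for $C$ is consistent. Here $\tau$ restricts to an antilinear operator on $V_\lambda$ (it commutes with $C$, and since $\lambda$ is real $\tau(\lambda w)=\overline\lambda\,\tau w=\lambda\,\tau w$, so $\tau(C-\lambda)^kw=(C-\lambda)^k\tau w$), still with $\tau^2=-I$. Such a $\tau$ turns $V_\lambda$ into a quaternionic vector space, with $\mathbf i$ acting as $i\cdot$ and $\mathbf j$ acting as $\tau$; concretely, for any $w\neq 0$ the vectors $w$ and $\tau w$ are $\C$-linearly independent and span a $2$-dimensional $\tau$-invariant subspace, which one splits off to induct, proving $\dim_\C V_\lambda\in 2\Z$.

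Finally one assembles the count. Let $\operatorname{spec}(C)\cap\C^+=\{\lambda_1,\dots,\lambda_s\}$ with $a_j=\dim_\C V_{\lambda_j}$; each real $\lambda_j$ contributes $a_j=2d_j$, while each non-real $\lambda_j$ together with $\overline\lambda_j$ contributes $2a_j$, so $2n=\sum_{\lambda_j\in\R}2d_j+\sum_{\lambda_j\notin\R}2a_j$, i.e.\ $n=\sum_{\lambda_j\in\R}d_j+\sum_{\lambda_j\notin\R}a_j$. Assigning to an eigenvalue $\lambda_j$ of $\mathbf P$ the multiplicity $d_j$ if $\lambda_j\in\R$ and $a_j$ otherwise, we obtain exactly $n$ eigenvalues counted with multiplicity; in particular $\mathbf P$ has at most $n$ distinct eigenvalues, with equality for generic $\mathbf P$. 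This proves the proposition.
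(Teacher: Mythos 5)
Your proof is correct, and it follows the same overall strategy as the paper: pass from $\mathbf P=A+B\mathbf j$ to the $2n\times 2n$ complex realification $C$, identify $\operatorname{spec}(C)$ as the set of complex numbers $\lambda$ solving $\mathbf P\mathbf x=\mathbf x\lambda$, show that $\operatorname{spec}(C)$ is closed under conjugation with matching multiplicities and that real eigenvalues of $C$ have even multiplicity, and then halve the count. The paper's matrix is $\left(\begin{smallmatrix} P_1 & P_2 \\ -\overline{P_2} & \overline{P_1}\end{smallmatrix}\right)$, conjugate to your $C$ by $\operatorname{diag}(I,-I)$, so the two are spectrally identical.

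The real difference lies in how the two arguments dispose of the multiplicity bookkeeping. For conjugation symmetry the paper conjugates $C$ to $\overline C$ by $J$ and reads off a determinant identity for the characteristic polynomial, which yields that non-real eigenvalues pair up; you instead package the same fact $CJ=J\overline C$ as the antilinear intertwiner $\tau(w)=J\overline w$, $\tau^2=-I$, and observe that $\tau$ carries the generalized eigenspace $V_\lambda$ isomorphically onto $V_{\overline\lambda}$. These are equivalent for non-real $\lambda$. Where you genuinely improve on the paper is the real case: the paper dismisses the even multiplicity of real eigenvalues with ``a simple continuity argument,'' which as stated is a handwave (one would have to justify a perturbation to a nearby quaternion matrix with simple non-real spectrum and track eigenvalues through the limit), whereas your observation that for $\lambda\in\R$ the map $\tau$ restricts to an antilinear involution-up-to-sign on $V_\lambda$, endowing it with a quaternionic structure and hence forcing $\dim_\C V_\lambda$ to be even, is a complete algebraic proof. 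The only spot where you are terse is the phrase ``which one splits off to induct''; to close it, one should exhibit a $\tau$-invariant complement of $\operatorname{span}_\C\{w,\tau w\}$, e.g.\ by averaging a Hermitian inner product against $\tau$ so that orthogonal complements of $\tau$-invariant subspaces remain $\tau$-invariant. With that one-line addition your argument is fully rigorous and, on the key lemma, more so than the paper's.
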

\begin{proof}
Assume that $\lambda$ is an eigenvalue of $\textbf{P}$. Then $\textbf{P}\textbf{x}=\textbf{x}\lambda$ for some non-zero vector $\textbf{x}$,
$\textbf{x}\in\Mat\left(\Hq,n\times 1\right)$. We write
\[
\textbf{P}=P_1+P_2\textbf{j},\qquad \textbf{x}=x_1+x_2\textbf{j},
\]
where $P_1$, $P_2$ are complex matrices, and where $x_1$, $x_2$ are complex column vectors.
Since $\lambda$ is a complex number, the equation $\textbf{P}\textbf{x}=\textbf{x}\lambda$
is equivalent to
\[
\begin{pmatrix}
P_1 & P_2 \\
-\overline{P_2} & \overline{P_1}
\end{pmatrix}
\begin{pmatrix}
x_1 & x_2 \\
-\overline{x_2} & \overline{x_1}
\end{pmatrix}
=
\begin{pmatrix}
x_1 & x_2 \\
-\overline{x_2} & \overline{x_1}
\end{pmatrix}
\begin{pmatrix}
\lambda & 0 \\
0 & \overline{\lambda}
\end{pmatrix},
\]
where we have used the isomorphism defined by equation~\eqref{Form}, i.e. we are using the representation $\hat P\hat x=\hat x\hat\lambda$. We obtain two equations,
\[
\begin{pmatrix} P_1 & P_2 \\ -\overline{P_2} & \overline{P_1} \end{pmatrix}
\begin{pmatrix} x_1 \\ -\overline{x_2} \end{pmatrix}=
\lambda
\begin{pmatrix} x_1 \\ -\overline{x_2} \end{pmatrix}
\quad\text{and}\quad
\begin{pmatrix} P_1 & P_2 \\ -\overline{P_2} & \overline{P_1} \end{pmatrix}
\begin{pmatrix} x_2 \\ -\overline{x_1} \end{pmatrix}=
\overline\lambda
\begin{pmatrix} x_2 \\ -\overline{x_1} \end{pmatrix},
\]
which are equivalent to each other.
Since $\hat P$ is a $2n\times 2n$ complex matrix, it has exactly $2n$ complex eigenvalues. Now, in order
to complete the proof of the Proposition it is enough to show that the non-real eigenvalues of
$\hat P$
come in conjugate pairs, and that the real eigenvalues occur an even number of times.

Note that
\[
\left(\begin{array}{cc}
        0 & 1 \\
        -1 & 0
      \end{array}
\right)\left(\begin{array}{cc}
        P_1 & P_2 \\
        -\overline{P_2} & \overline{P_1}
      \end{array}
\right)\left(\begin{array}{cc}
        0 & -1 \\
        1 & 0
      \end{array}
\right)=\left(\begin{array}{cc}
        \overline{P_1} & \overline{P_2} \\
        -P_2 & P_1
      \end{array}
\right),
\]
and
\[
\det\left(\lambda I_{2n}-\left(\begin{array}{cc}
        P_1 & P_2 \\
        -\overline{P_2} & \overline{P_1}
      \end{array}
\right)\right)=\overline{\det\left(\overline{\lambda} I_{2n}-\left(\begin{array}{cc}
        P_1 & P_2 \\
        -\overline{P_2} & \overline{P_1}
      \end{array}
\right)\right)}.
\]
This says that if $\lambda$ is a non-real eigenvalue, then $\overline{\lambda}$ is an eigenvalue as well. By a simple continuity
argument we conclude that real eigenvalues appear an even number of times.
\end{proof}

\begin{rem}
Proposition \ref{PropositionEigenvaluesQuaternionMatrices} implies that eigenvalues of quaternion matrices
(defined as elements of $\C^+$) form a discrete, finite set.
\end{rem}

\begin{cor}\label{CorrolaryEigenvaluesQuaternionMatrices}
Assume that $\textbf{P}\in\Mat\left(\Hq,n\times n\right)$, and $P'$ is the corresponding element of $\Mat\left(\Hq',n\times n\right)$
obtained from $\textbf{P}$ by the isomorphism defined in Section \ref{SectionQuaternionsAsMatrices}. Then $P'$ has $2n$ eigenvalues, where the non-real eigenvalues of $P'$ appear in conjugate pairs, and every real eigenvalue of $P'$ occurs an even number of times.
\end{cor}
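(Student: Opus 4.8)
The plan is to establish Corollary~\ref{CorrolaryEigenvaluesQuaternionMatrices} by running, directly for the $2n\times 2n$ complex matrix $P'$, essentially the same spectral-symmetry argument already used in the proof of Proposition~\ref{PropositionEigenvaluesQuaternionMatrices}. First I would record the one computational ingredient that makes everything work: with $\epsilon=\left(\begin{smallmatrix}0&1\\-1&0\end{smallmatrix}\right)$, so that $\epsilon^{-1}=-\epsilon$ and $\epsilon^{2}=-I_{2}$, a one-line multiplication shows $\epsilon\,\overline{\breve x}\,\epsilon^{-1}=\breve x$ for every block $\breve x=\left(\begin{smallmatrix}\alpha&-\overline{\beta}\\\beta&\overline{\alpha}\end{smallmatrix}\right)\in\Hq'$. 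Setting $\Epsilon=\diag(\epsilon,\dots,\epsilon)$, a block-diagonal real $2n\times 2n$ matrix with $\Epsilon^{2}=-I_{2n}$, and applying this identity to each $2\times2$ block of $P'$, I obtain the key relation $\Epsilon\,\overline{P'}\,\Epsilon^{-1}=P'$.

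From here the argument is routine. Since $P'$ is a $2n\times 2n$ complex matrix it has exactly $2n$ eigenvalues counted with algebraic multiplicity. For the characteristic polynomial $p(\lambda)=\det(\lambda I_{2n}-P')$, the key relation gives $p(\lambda)=\det\bigl(\Epsilon(\lambda I_{2n}-\overline{P'})\Epsilon^{-1}\bigr)=\det(\lambda I_{2n}-\overline{P'})=\overline{p(\overline{\lambda})}$; comparing coefficients shows that $p$ has real coefficients, so the non-real eigenvalues of $P'$ occur in complex-conjugate pairs.

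The only step that needs more than this conjugation trick --- and it is the same subtlety as in Proposition~\ref{PropositionEigenvaluesQuaternionMatrices} --- is that every real eigenvalue must occur an even number of times. Here I would introduce the antilinear map $j\colon\C^{2n}\to\C^{2n}$, $j(v)=\Epsilon\overline{v}$, which satisfies $j^{2}=-\mathrm{id}$ (because $\Epsilon$ is real with $\Epsilon^{2}=-I_{2n}$) and $P'j=jP'$ (by the key relation), thereby equipping $\C^{2n}$ with a quaternionic structure compatible with $P'$. For a real eigenvalue $\lambda_{0}$ the map $j$ commutes with $P'-\lambda_{0}I_{2n}$, hence the generalized eigenspace $\ker(P'-\lambda_{0}I_{2n})^{2n}$ is $j$-invariant; since a finite-dimensional complex vector space carrying an antilinear map that squares to $-\mathrm{id}$ necessarily has even complex dimension, the algebraic multiplicity of $\lambda_{0}$ is even, and the three assertions together account for all $2n$ eigenvalues. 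This last point is the only real obstacle worth flagging: the real-coefficient characteristic polynomial on its own would allow, say, a simple real eigenvalue, and it is precisely the structure $j$ with $j^{2}=-\mathrm{id}$ that excludes it; the remainder is bookkeeping inherited from Section~\ref{SectionAppendixEigenvaluesQuaternionmatrices}.
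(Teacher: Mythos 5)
Your proof is correct, and it takes a genuinely different route from the paper. The paper's proof of the corollary is short and indirect: it invokes Proposition~\ref{PropositionEigenvaluesQuaternionMatrices} (that $\textbf{P}$ has exactly $n$ eigenvalues in $\C^{+}$) and the observation that each eigenvalue $\lambda$ of $\textbf{P}$ contributes both $\lambda$ and $\bar\lambda$ to the spectrum of $P'$, so that all $2n$ eigenvalues of $P'$ are accounted for. The even multiplicity of real eigenvalues is thereby inherited from Proposition~\ref{PropositionEigenvaluesQuaternionMatrices}, whose own proof works not with $P'$ but with the permuted block form $\hat P=\left(\begin{smallmatrix}P_1&P_2\\-\overline{P_2}&\overline{P_1}\end{smallmatrix}\right)$, and disposes of the even-multiplicity issue there with a one-line ``simple continuity argument'' that is not spelled out. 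You instead work directly with $P'$, derive the key conjugation identity $\Epsilon\,\overline{P'}\,\Epsilon^{-1}=P'$ block-by-block, deduce that the characteristic polynomial has real coefficients (giving the conjugate-pair statement), and then prove the even-multiplicity claim cleanly via the antilinear map $j(v)=\Epsilon\overline{v}$: since $j^2=-\mathrm{id}$ and $j$ commutes with $P'-\lambda_0 I$ for real $\lambda_0$, the generalized eigenspace carries a quaternionic structure and hence has even complex dimension. This buys you a self-contained argument that does not route through $\textbf{P}$ or $\hat P$ at all, and replaces the paper's informal continuity argument with an airtight dimension count; the paper's route, by contrast, is shorter on the page because it leans on the earlier proposition and on the eigenvalue count for $\textbf{P}$.
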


\begin{proof}
Let $P'\in\Mat\left(\Hq',n\times n\right)$. We know that the total number of eigenvalues of $P'$ is $2n$. Let $\textbf{P}$ be the pre-image
of $P'$ under the isomorphism between $\Mat\left(\Hq,n\times n\right)$ and $\Mat\left(\Hq',n\times n\right)$. It can be checked directly
that if $\lambda$ is an eigenvalue of $\textbf{P}$, then both $\lambda$ and $\bar{\lambda}$ are eigenvalues of $P'$. This observation, and the fact that
$\textbf{P}$  has exactly $n$ eigenvalues (see Proposition \ref{PropositionEigenvaluesQuaternionMatrices}) imply the statement
of  Corollary \ref{CorrolaryEigenvaluesQuaternionMatrices}.
\end{proof}

\subsection{The Schur canonical form for quaternion matrices}

\begin{lem}
\label{Lemma1Quaternion}
Assume that $\textbf{P}\in\Mat\left(\Hq,m\times n\right)$, and assume that $m<n$. Then $\textbf{P}\textbf{x}=0$ has a non-trivial solution in $\Mat\left(\Hq,n\times 1\right)$.
\end{lem}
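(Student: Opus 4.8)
The plan is to translate the quaternionic equation $\textbf{P}\textbf{x}=0$ into an ordinary homogeneous complex linear system with strictly more unknowns than equations, and then invoke the corresponding elementary fact over $\C$. This keeps the argument in line with the complex-companion technique already used in the proof of Proposition~\ref{PropositionEigenvaluesQuaternionMatrices}.

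Concretely, I would first write $\textbf{P}=P_1+P_2\textbf{j}$ with $P_1,P_2\in\Mat(\C,m\times n)$, and an unknown column vector $\textbf{x}=x_1+x_2\textbf{j}$ with $x_1,x_2\in\Mat(\C,n\times 1)$, as in Section~\ref{SectionQuaternionmatrices}. Using the rule $\textbf{j}\alpha=\overline{\alpha}\textbf{j}$ entrywise, a direct computation gives
\[
\textbf{P}\textbf{x}=\bigl(P_1x_1-P_2\overline{x_2}\bigr)+\bigl(P_1x_2+P_2\overline{x_1}\bigr)\textbf{j}.
\]
Hence $\textbf{P}\textbf{x}=0$ is equivalent to the two complex equations $P_1x_1-P_2\overline{x_2}=0$ and $P_1x_2+P_2\overline{x_1}=0$; conjugating the second one and setting $u:=x_1$, $v:=\overline{x_2}\in\Mat(\C,n\times 1)$, this becomes precisely
\[
\begin{pmatrix}P_1 & -P_2\\ \overline{P_2} & \overline{P_1}\end{pmatrix}\begin{pmatrix}u\\ v\end{pmatrix}=0,
\]
a homogeneous linear system of $2m$ equations in the $2n$ complex unknowns $(u,v)$.

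Since $m<n$, we have $2m<2n$, so by the usual rank–nullity argument over $\C$ this system has a solution $(u,v)\neq(0,0)$. Setting $x_1=u$ and $x_2=\overline{v}$ yields a quaternionic vector $\textbf{x}=x_1+x_2\textbf{j}$, and the passage $\textbf{x}\leftrightarrow(x_1,x_2)\leftrightarrow(u,v)$ is a bijection sending $0$ to $0$; therefore $\textbf{x}\neq0$ while $\textbf{P}\textbf{x}=0$, which proves the lemma.

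The statement is essentially the assertion that linear algebra over the division ring $\Hq$ behaves like linear algebra over a field, and the same conclusion could alternatively be obtained by Gaussian elimination performed directly over $\Hq$ — pivoting on any nonzero entry (which exists, and is invertible, because $\Hq$ is a division ring) and inducting on $\min(m,n)$, with the trivial base case handled by a standard basis vector. I do not expect a genuine obstacle here; the only point requiring a little care in the complex reformulation is the bookkeeping of complex conjugates and the verification that the translation from $\textbf{x}$ to the complex data $(u,v)$ is faithful, i.e. injective on nonzero vectors, after which the argument is routine.
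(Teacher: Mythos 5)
Your proof is correct and uses essentially the same approach as the paper: translate $\textbf{P}\textbf{x}=0$ into a homogeneous complex linear system of $2m$ equations in $2n$ unknowns via the complex representation of quaternions, then invoke rank--nullity over $\C$. Your derivation is somewhat more explicit (you multiply out $(P_1+P_2\textbf{j})(x_1+x_2\textbf{j})$ directly using $\textbf{j}\alpha=\overline{\alpha}\textbf{j}$ rather than citing the isomorphism), and the exact block structure of your $2m\times 2n$ complex matrix differs from the paper's by some conjugations and signs, but both are valid realizations of the same reduction.
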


\begin{proof}
Represent $\textbf{P}$ as $\textbf{P}=A+B\textbf{j}$, and $\textbf{x}$ as $\textbf{x}=x_1+x_2\textbf{j}$.
Then $\textbf{P}\textbf{x}=0$ is equivalent to $\hat{P}\hat{x}=0$ (where $\hat{P}$ and $\hat{x}$ are images of $\textbf{P}$
and $\textbf{x}$ under the isomorphism defined in Section \ref{SectionQuaternionmatrices}). More explicitly,
$\textbf{P}\textbf{x}=0$ is equivalent to the equation
\[
\left(\begin{array}{cc}
        A & -\overline{B} \\
        B & \overline{A}
      \end{array}
\right)\left(\begin{array}{cc}
               x_1 & x_2 \\
               -\overline{x_2} & \overline{x_1}
             \end{array}
\right)=0
\qquad\text{or equivalently}\qquad
\left(\begin{array}{cc}
        A & -\overline{B} \\
        B & \overline{A}
      \end{array}
\right)\left(
               \begin{array}{c}
                 x_1 \\
                 -\overline{x_2} \\
               \end{array}
             \right)
=0.
\]
The system just written above does have a non-trivial solution.
\end{proof}

\begin{prop}\label{propositionQuaternionUnitaryColumn}
Assume that $\textbf{u}_1\in\Mat(\Hq,n\times 1)$, and that $\textbf{u}_1^*\textbf{u}_1=1$.
Then there exists a unitary matrix from $\Mat(\Hq,n\times n)$ whose first column is $\textbf{u}_1$.
\end{prop}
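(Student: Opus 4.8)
The plan is to extend $\mathbf{u}_1$ to an orthonormal basis of $\Mat(\Hq,n\times 1)$ by a Gram--Schmidt-type procedure, and then take the matrix whose columns are the resulting vectors. Orthonormality is measured by the quaternion-valued pairing $\langle\mathbf{u},\mathbf{v}\rangle=\mathbf{u}^*\mathbf{v}$. The key observation that makes non-commutativity harmless is that for any nonzero column $\mathbf{x}$ the quantity $\mathbf{x}^*\mathbf{x}=\sum_j|\mathbf{x}_j|^2$ is a \emph{positive real} number, hence central in $\Hq$, so its positive square root and its inverse are real scalars and normalisation is unambiguous.

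First I would set up the induction. Suppose $1\le k\le n-1$ and orthonormal $\mathbf{u}_1,\ldots,\mathbf{u}_k\in\Mat(\Hq,n\times 1)$ have been constructed, i.e.\ $\mathbf{u}_i^*\mathbf{u}_j=\delta_{ij}$. Form the matrix $\mathbf{A}\in\Mat(\Hq,k\times n)$ whose $i$-th row is $\mathbf{u}_i^*$. Since $k<n$, Lemma \ref{Lemma1Quaternion} produces a nonzero $\mathbf{x}\in\Mat(\Hq,n\times 1)$ with $\mathbf{A}\mathbf{x}=0$, that is, $\mathbf{u}_i^*\mathbf{x}=0$ for $1\le i\le k$. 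Put $c=(\mathbf{x}^*\mathbf{x})^{-1/2}\in\R_{>0}$ and $\mathbf{u}_{k+1}=\mathbf{x}\,c$ (scaling on the right, consistent with the eigenvalue convention $\mathbf{P}\mathbf{x}=\mathbf{x}\lambda$). Then $\mathbf{u}_{k+1}^*\mathbf{u}_{k+1}=c\,(\mathbf{x}^*\mathbf{x})\,c=1$, while $\mathbf{u}_i^*\mathbf{u}_{k+1}=(\mathbf{u}_i^*\mathbf{x})c=0$ and, taking conjugates, $\mathbf{u}_{k+1}^*\mathbf{u}_i=0$ as well; so $\mathbf{u}_1,\ldots,\mathbf{u}_{k+1}$ is again orthonormal. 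Starting the recursion from the given $\mathbf{u}_1$ (already normalised by hypothesis) and iterating until $k=n$ yields orthonormal $\mathbf{u}_1,\ldots,\mathbf{u}_n$.

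Finally I would let $\mathbf{U}=(\mathbf{u}_1\,|\,\cdots\,|\,\mathbf{u}_n)\in\Mat(\Hq,n\times n)$; its first column is $\mathbf{u}_1$, and the $(i,j)$ entry of $\mathbf{U}^*\mathbf{U}$ is $\mathbf{u}_i^*\mathbf{u}_j=\delta_{ij}$, so $\mathbf{U}^*\mathbf{U}=\mathbf{I}$. It remains to upgrade this to $\mathbf{U}\mathbf{U}^*=\mathbf{I}$, as demanded by the definition of a quaternion unitary matrix. This follows because a square matrix over $\Hq$ with a one-sided inverse is invertible with that same matrix as two-sided inverse; concretely, passing to the $2n\times 2n$ complex representation $\mathbf{P}\mapsto\hat P$ of Section \ref{SectionQuaternionmatrices} (an algebra isomorphism onto matrices of the form (\ref{Form}) which intertwines $*$ with Hermitian conjugation), the identity $\hat U^*\hat U=I_{2n}$ forces $\hat U\hat U^*=I_{2n}$ by ordinary linear algebra over $\C$, and pulling this back gives $\mathbf{U}\mathbf{U}^*=\mathbf{I}$. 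Hence $\mathbf{U}$ is unitary with first column $\mathbf{u}_1$. There is no deep obstacle here: the only points requiring care are the bookkeeping of scalar sides under non-commutativity (resolved by the reality of the normalising factors $\mathbf{x}^*\mathbf{x}$) and the passage from one-sided to two-sided unitarity (resolved via the complex representation), which is the step I would single out as the most delicate.
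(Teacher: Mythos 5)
Your proof is correct and follows the same basic strategy as the paper: extend $\mathbf{u}_1$ to an orthonormal set by repeatedly invoking Lemma~\ref{Lemma1Quaternion} and normalising, then assemble the columns into the desired unitary matrix. In fact your write-up is tighter than the one in the paper on two points. First, at the inductive step you apply Lemma~\ref{Lemma1Quaternion} to the $k\times n$ matrix whose rows are $\mathbf{u}_1^*,\ldots,\mathbf{u}_k^*$, so the new vector is orthogonal to \emph{all} previously constructed vectors; the paper, as literally stated, only asks for $\mathbf{u}_2^*\tilde{\mathbf{u}}_3=0$ (and ``by the same argument'' thereafter), which if read naively would only guarantee orthogonality to the most recent vector — clearly a slip of the pen, but your version is the correct formulation of the intended argument. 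Second, you flag and resolve the passage from $\mathbf{U}^*\mathbf{U}=\mathbf{I}$ to $\mathbf{U}\mathbf{U}^*=\mathbf{I}$ via the complex representation $\mathbf{P}\mapsto\hat P$; the paper leaves this implicit. Your observation that $\mathbf{x}^*\mathbf{x}$ is a positive real (hence central) scalar, so the right-normalisation $\mathbf{u}_{k+1}=\mathbf{x}(\mathbf{x}^*\mathbf{x})^{-1/2}$ is unambiguous and preserves orthogonality on both sides, is also a welcome clarification of a step the paper takes for granted. No gaps.
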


\begin{proof}
According to Lemma \ref{Lemma1Quaternion} there exists a non-trivial vector $\tilde{\textbf{u}}_2$, $\tilde{\textbf{u}}_2\in\Mat(\Hq,n\times 1)$, such that
$\textbf{u}_1^{*}\tilde{\textbf{u}}_2=0$. Define $\textbf{u}_2=\frac{\tilde{\textbf{u}}_2}{|\tilde{\textbf{u}}_2|}$. By the same argument there exists a non-trivial vector
 $\tilde{\textbf{u}}_3$, $\tilde{\textbf{u}}_3\in\Mat(\Hq,n\times 1)$, such that
$\textbf{u}_2^{*}\tilde{\textbf{u}}_3=0$. Define $\textbf{u}_3=\frac{\tilde{\textbf{u}}_3}{|\tilde{\textbf{u}}_3|}$. Proceeding in this way we obtain $n$ vectors $\textbf{u}_1$, $\textbf{u}_2$, $\ldots$, $\textbf{u}_n$.
The desired unitary matrix is $\textbf{U}=\left(\textbf{u}_1,\textbf{u}_2,\ldots,\textbf{u}_n\right)$.
\end{proof}

\begin{prop}\label{TheoremSchurCanonicalForm}
If $\textbf{P}\in\Mat(\Hq,n\times n)$, then there exists a unitary quaternion matrix $\textbf{U}\in\Mat(\Hq,n\times n)$
such that $\textbf{U}^*\textbf{P}\textbf{U}$ is in an upper triangular form.
\end{prop}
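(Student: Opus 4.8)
The plan is to prove the statement by induction on the size $n$, imitating the classical argument for the complex Schur form while carefully tracking the noncommutativity of $\Hq$. For $n=1$ there is nothing to do: take $\textbf{U}=(\textbf{1})$. So assume the claim holds for all $(n-1)\times(n-1)$ quaternion matrices and let $\textbf{P}\in\Mat(\Hq,n\times n)$.

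First I would use Proposition \ref{PropositionEigenvaluesQuaternionMatrices} to produce an eigenvalue $\lambda\in\C^{+}$ of $\textbf{P}$, together with a nonzero eigenvector $\textbf{x}\in\Mat(\Hq,n\times 1)$ satisfying $\textbf{P}\textbf{x}=\textbf{x}\lambda$; after rescaling $\textbf{x}$ by $|\textbf{x}|^{-1}$ we may assume $\textbf{x}^{*}\textbf{x}=1$. Put $\textbf{u}_1=\textbf{x}$ and apply Proposition \ref{propositionQuaternionUnitaryColumn} to obtain a unitary matrix $\textbf{U}_1\in\Mat(\Hq,n\times n)$ whose first column is $\textbf{u}_1$. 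Next I would read off the first column of $\textbf{U}_1^{*}\textbf{P}\textbf{U}_1$: since $\textbf{P}\textbf{u}_1=\textbf{u}_1\lambda$, its $i$-th entry equals $\bigl(\textbf{U}_1^{*}(\textbf{u}_1\lambda)\bigr)_i=(\textbf{U}_1^{*}\textbf{u}_1)_i\,\lambda$, and by unitarity $\textbf{U}_1^{*}\textbf{u}_1$ is the first column of $\textbf{U}_1^{*}\textbf{U}_1=\textbf{I}$, i.e.\ the standard basis vector $e_1$. Hence $\textbf{U}_1^{*}\textbf{P}\textbf{U}_1$ has first column $(\lambda,0,\dots,0)^{T}$, so
\[
\textbf{U}_1^{*}\textbf{P}\textbf{U}_1=\begin{pmatrix}\lambda & \textbf{r}\\ 0 & \textbf{P}'\end{pmatrix}
\]
for some row $\textbf{r}\in\Mat(\Hq,1\times(n-1))$ and some $\textbf{P}'\in\Mat(\Hq,(n-1)\times(n-1))$; note that the block decomposition of quaternion matrices is compatible with matrix multiplication, so $\textbf{P}'$ is again a genuine quaternion matrix and the inductive hypothesis applies to it.

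By induction there is a unitary $\textbf{V}\in\Mat(\Hq,(n-1)\times(n-1))$ with $\textbf{V}^{*}\textbf{P}'\textbf{V}$ upper triangular. Setting $\textbf{W}=\diag(\textbf{1},\textbf{V})\in\Mat(\Hq,n\times n)$, which is unitary, a direct block computation gives
\[
(\textbf{U}_1\textbf{W})^{*}\,\textbf{P}\,(\textbf{U}_1\textbf{W})=\begin{pmatrix}\lambda & \textbf{r}\textbf{V}\\ 0 & \textbf{V}^{*}\textbf{P}'\textbf{V}\end{pmatrix},
\]
which is upper triangular, and $\textbf{U}:=\textbf{U}_1\textbf{W}$ is unitary as a product of two unitary quaternion matrices. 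This closes the induction.

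The routine parts are the block-matrix bookkeeping and the closure of unitary quaternion matrices under products and Hermitian conjugation; these are immediate because the passage to the complex representation $\textbf{P}\mapsto\hat{P}$ (equivalently $\textbf{P}\mapsto P'$) is a $*$-algebra homomorphism. The one point genuinely requiring care is the noncommutativity: the eigenvalue $\lambda$ must be kept acting \emph{on the right} throughout, so that the identity $(\textbf{U}_1^{*}\textbf{u}_1)\lambda=e_1\lambda$ is meaningful and forces the subdiagonal of the first column to vanish; a left-multiplication convention would break this step. Beyond this I do not anticipate any serious obstacle.
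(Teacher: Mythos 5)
Your proof is correct and follows essentially the same route as the paper's: pick a normalized eigenvector, extend to a unitary $\textbf{U}_1$ via Proposition \ref{propositionQuaternionUnitaryColumn}, observe that $\textbf{U}_1^{*}\textbf{P}\textbf{U}_1$ is block upper triangular, and recurse on the $(n-1)\times(n-1)$ block. The only cosmetic difference is that you cast the recursion as a formal induction and explicitly flag the right-multiplication convention for eigenvalues, whereas the paper unrolls one more step and closes with ``repeating this argument.''
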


\begin{proof}
Proposition \ref{PropositionEigenvaluesQuaternionMatrices} implies that $\textbf{P}$ has exactly $n$   eigenvalues $\lambda_1,\ldots,\lambda_n$.
Let $\textbf{x}_1$ be a normalised eigenvector of $\textbf{P}$ corresponding to $\lambda_1$. By Proposition \ref{propositionQuaternionUnitaryColumn} there exists a unitary matrix $\textbf{U}_1\in\Mat(\Hq,n\times n)$ whose first column is $\textbf{x}_1$. Then the matrix $\textbf{U}_1^{*}\textbf{P}\textbf{U}_1$ has the form
\[
\textbf{U}_1^{*}\textbf{P}\textbf{U}_1=\left(\begin{array}{cc}
                                               \lambda_1 & * \\
                                               0 & \textbf{P}_1
                                             \end{array}
\right).
\]
The matrix $\textbf{P}_1\in\Mat(\Hq,(n-1)\times (n-1))$ has $n-1$ eigenvalues $\lambda_2,\ldots,\lambda_n$. Then we can find a unitary matrix
$\textbf{U}_2\in\Mat(\Hq,(n-1)\times (n-1))$ such that
\[
\textbf{U}_2^{*}\textbf{P}_1\textbf{U}_2=\left(\begin{array}{cc}
                                               \lambda_2 & * \\
                                               0 & \textbf{P}_2
                                             \end{array}
\right).
\]
If we set $\textbf{V}_2=\diag(\textbf1,\textbf U_2)$, then
\[
(\textbf{U}_1\textbf{V}_2)^*P\textbf{U}_1\textbf{V}_2=
\left(\begin{array}{ccc}
\lambda_1 & * & *  \\
0 & \lambda_2 & *  \\
0 & 0 & \textbf{P}_2
\end{array}
\right),
\]
where $\textbf{U}_1\textbf{V}_2$ is a unitary matrix from $\Mat\left(\Hq,n\times n\right)$.
Repeating this argument we obtain the statement of Proposition \ref{TheoremSchurCanonicalForm}.
\end{proof}

\begin{cor}\label{CorollaryUZPlusTUQuaternion}
If $M\in\Mat\left(\Hq',n\times n\right)$, then there exists a unitary $2\times 2$ block matrix $U\in\Mat\left(\Hq',n\times n\right)$, that is a unitary symplectic matrix $U\in\USp(2n)$, such that
\begin{equation}\label{MZTU}
M=U(Z+T)U^*.
\end{equation}
Here $Z\in\Mat\left(\Hq',n\times n\right)$ is a $2\times 2$ block diagonal matrix of the form
\begin{equation}\label{Zquaternion}
Z=\left(\begin{array}{cccc}
          \breve{z}_1 & \breve{0} & \cdots & \breve{0} \\
          \breve{0} & \breve{z}_2 & \cdots & \breve{0} \\
          \vdots & \vdots & \ddots & \vdots \\
          \breve{0} & \breve{0} & \cdots & \breve{z}_n
        \end{array}
\right),\qquad \breve{z}_i=\left(\begin{array}{cc}
                        z_i & 0 \\
                        0 & \overline{z_i}
                      \end{array}
\right),\qquad \breve{0}=\left(\begin{array}{cc}
                       0 & 0 \\
                        0 & 0
                      \end{array}
\right),
\end{equation}
(with $z_i\in\C$), and $T\in\Mat\left(\Hq',n\times n\right)$ is a $2\times 2$ block strictly upper triangular matrix of the form
\begin{equation}\label{Tquaternion}
T=\left(\begin{array}{ccccc}
          \breve{0} & \breve{t}_{1,2} & \breve{t}_{1,3} & \cdots & \breve{t}_{1,n} \\
          \breve{0} & \breve{0} & \breve{t}_{2,3} & \cdots  & \breve{t}_{2,n} \\
          \vdots & \vdots & \vdots & \ddots & \vdots \\
           \breve{0} & \breve{0} & \breve{0} & \cdots  & \breve{t}_{n-1,n} \\
          \breve{0} & \breve{0} & \breve{0} & \cdots & \breve{0}
        \end{array}
\right),
\qquad
\breve{t}_{i,j}=\left(\begin{array}{cc}
                                    t_{i,j}^{(1)} & -\overline{t_{i,j}^{(2)}} \\
                                    t_{i,j}^{(2)} & \overline{t_{i,j}^{(1)}}
                                  \end{array}
\right),
\end{equation}
with $t_{i,j}^{(1)},t_{i,j}^{(2)}\in\C$.
\end{cor}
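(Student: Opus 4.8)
The plan is to obtain this corollary directly from Proposition~\ref{TheoremSchurCanonicalForm} by transporting that statement through the entrywise algebra isomorphism $\Mat(\Hq,n\times n)\cong\Mat(\Hq',n\times n)$ of Section~\ref{SectionQuaternionmatrices}. Given $M\in\Mat(\Hq',n\times n)$, let $\textbf{P}\in\Mat(\Hq,n\times n)$ denote its pre-image under this isomorphism. Proposition~\ref{TheoremSchurCanonicalForm} then yields a quaternion unitary matrix $\textbf{U}\in\Mat(\Hq,n\times n)$ such that $\textbf{U}^*\textbf{P}\textbf{U}$ is upper triangular; inspecting the proof of that proposition, the diagonal entries of $\textbf{U}^*\textbf{P}\textbf{U}$ are precisely the $n$ eigenvalues $\lambda_1,\ldots,\lambda_n$ of $\textbf{P}$ supplied by Proposition~\ref{PropositionEigenvaluesQuaternionMatrices}, and in particular they lie in $\C^+\subset\C$.

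Next I would push the identity $\textbf{U}^*\textbf{P}\textbf{U}=(\text{upper triangular})$ back through the isomorphism. Since that map is an algebra homomorphism which, as recorded in Section~\ref{SectionQuaternionmatrices}, intertwines the quaternionic conjugate-transpose with the ordinary complex conjugate-transpose, the image $U\in\Mat(\Hq',n\times n)$ of $\textbf{U}$ satisfies $U^*U=UU^*=I_{2n}$; moreover, as already noted in that same section, the image of a quaternion unitary matrix is a block unitary matrix that is unitary symplectic, so $U\in\USp(2n)$. Writing $W\in\Mat(\Hq',n\times n)$ for the image of $\textbf{U}^*\textbf{P}\textbf{U}$, we obtain $M=UWU^*$, which is the desired relation~(\ref{MZTU}) once $W$ is split appropriately.

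It then remains to identify the block structure of $W$. Because the isomorphism acts entrywise, sending each quaternion entry $\zeta+\eta\textbf{j}$ to the $2\times2$ complex block of the form appearing in~(\ref{M}), the vanishing of the below-diagonal quaternion entries of $\textbf{U}^*\textbf{P}\textbf{U}$ translates into $W$ being $2\times2$ block upper triangular, with strictly upper-triangular blocks of exactly the form~(\ref{Tquaternion}). The diagonal blocks of $W$ are the images of the scalars $\lambda_i$; since each $\lambda_i\in\C^+$ is a quaternion of the form $\zeta+0\cdot\textbf{j}$, its image is the diagonal block $\breve{z}_i=\diag(z_i,\overline{z_i})$ with $z_i=\lambda_i$, which is the structure~(\ref{Zquaternion}). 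Setting $Z$ to be the block-diagonal part of $W$ and $T=W-Z$ completes the argument.

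I do not anticipate a substantive obstacle: the content of the corollary is essentially a reformulation of Proposition~\ref{TheoremSchurCanonicalForm} in the complex $2\times2$-block model, and the only points deserving a line of justification are that the diagonal entries produced by the Schur reduction are genuine complex numbers (immediate, since eigenvalues were \emph{defined} as elements of $\C^+$) and that the isomorphism carries quaternion unitaries to unitary symplectic matrices (established in Section~\ref{SectionQuaternionmatrices}). Everything else is bookkeeping through an entrywise algebra isomorphism.
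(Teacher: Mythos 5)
Your proof is correct and follows exactly the route the paper takes: the paper's own proof is the one-line instruction ``use the Schur canonical form for quaternion matrices (Proposition~\ref{TheoremSchurCanonicalForm}) and the isomorphism between $\Mat(\Hq,n\times n)$ and $\Mat(\Hq',n\times n)$.'' You have simply fleshed out the bookkeeping (diagonal entries lie in $\C\subset\Hq$ and thus map to $\diag(z_i,\overline{z_i})$; quaternion unitaries map to $\USp(2n)$) that the paper leaves implicit.
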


\begin{proof}
Use the Schur canonical form for quaternion matrices (see Proposition \ref{TheoremSchurCanonicalForm}),
and the isomorphism between  $\Mat\left(\Hq,n\times n\right)$ and $\Mat\left(\Hq',n\times n\right)$
defined in Section \ref{SectionQuaternionmatrices}.
\end{proof}

\begin{rem} Similar to the standard Schur decomposition, decomposition (\ref{MZTU}) is not unique.
However, in subsequent calculations leading to Jacobian determinantal formulas we omit all matrices with equal eigenvalues (such matrices are of zero
Lebesgue measure).
Then the uniqueness of decomposition (\ref{MZTU}) can be restored by requiring that $z_i$ are in increasing order (with respect to the lexicographic order on complex numbers,
i.e. $u+iv\leq u'+iv'$ if $u<u'$ or if $u=u'$ and $v\leq v'$), and then by requiring that $z_i\in\C^+$.
\end{rem}

\subsection{QR decomposition for quaternion matrices}

\begin{prop}
\label{TheoremQRFactorizationQuaternionMatrices}
Let $\textbf{P}\in\Mat\left(\Hq,n\times n\right)$ be a non-singular quaternion matrix.
Then there is a factorization
\[
\textbf{P}=\textbf{U}\textbf{S},
\]
where $\textbf{U}\in\Mat\left(\Hq,n\times n\right)$ is unitary, and $\textbf{S}\in\Mat\left(\Hq,n\times n\right)$
is an upper triangular.
\end{prop}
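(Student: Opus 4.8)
The plan is to run Gram--Schmidt orthogonalisation on the columns of $\textbf{P}$, being careful to keep every scalar multiplication on the \emph{right}, since $\Hq$ is non-commutative. Write $\textbf{P}=(\textbf{p}_1,\ldots,\textbf{p}_n)$ with $\textbf{p}_j\in\Mat(\Hq,n\times 1)$, and use the $\Hq$-valued Hermitian pairing $\langle\textbf{x},\textbf{y}\rangle=\textbf{x}^*\textbf{y}$. By induction on $j$ I would build an orthonormal system $\textbf{u}_1,\ldots,\textbf{u}_j$ (meaning $\textbf{u}_i^*\textbf{u}_k=\delta_{ik}$) and quaternions $s_{ik}$ for $1\leq i\leq k\leq j$ with $s_{kk}>0$ real such that $\textbf{p}_k=\sum_{i=1}^k\textbf{u}_i s_{ik}$ for every $k\leq j$; since the $s_{kk}$ are invertible reals, this also forces each $\textbf{u}_i$ to lie in the right $\Hq$-span of $\textbf{p}_1,\ldots,\textbf{p}_i$. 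For $j=1$, note $\textbf{p}_1\neq 0$: if it were zero, then $\textbf{Q}\textbf{P}$ would have zero first column for any $\textbf{Q}$, contradicting $\textbf{Q}\textbf{P}=\textbf{I}$ for the inverse $\textbf{Q}$ given by non-singularity. So set $\textbf{u}_1=\textbf{p}_1|\textbf{p}_1|^{-1}$ and $s_{11}=|\textbf{p}_1|>0$.

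For the inductive step put $s_{ij}=\textbf{u}_i^*\textbf{p}_j$ for $1\leq i\leq j-1$ and $\tilde{\textbf{u}}_j=\textbf{p}_j-\sum_{i=1}^{j-1}\textbf{u}_i s_{ij}$. Using associativity of quaternion matrix multiplication and $\textbf{u}_i^*\textbf{u}_k=\delta_{ik}$, the expression $\textbf{u}_i^*\big(\sum_{k<j}\textbf{u}_k s_{kj}\big)$ collapses to $s_{ij}$, so $\textbf{u}_i^*\tilde{\textbf{u}}_j=0$ for all $i<j$. The crucial point is $\tilde{\textbf{u}}_j\neq 0$: otherwise $\textbf{p}_j=\sum_{i<j}\textbf{u}_i s_{ij}$ lies in the right $\Hq$-span of $\textbf{p}_1,\ldots,\textbf{p}_{j-1}$, which produces a non-zero $\textbf{c}\in\Mat(\Hq,n\times 1)$ with $\textbf{P}\textbf{c}=0$, and then $\textbf{c}=\textbf{Q}\textbf{P}\textbf{c}=0$ gives a contradiction. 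Since $|\tilde{\textbf{u}}_j|$ is a positive real, hence central, setting $\textbf{u}_j=\tilde{\textbf{u}}_j|\tilde{\textbf{u}}_j|^{-1}$ and $s_{jj}=|\tilde{\textbf{u}}_j|>0$ yields $\textbf{u}_j^*\textbf{u}_j=1$, $\textbf{u}_i^*\textbf{u}_j=0$ for $i<j$, and $\textbf{p}_j=\textbf{u}_j s_{jj}+\sum_{i<j}\textbf{u}_i s_{ij}=\sum_{i=1}^j\textbf{u}_i s_{ij}$, completing the induction.

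After $n$ steps, set $\textbf{U}=(\textbf{u}_1,\ldots,\textbf{u}_n)$ and $\textbf{S}=(s_{ij})$ with $s_{ij}=0$ for $i>j$. By construction $\textbf{S}$ is upper triangular, and the relations $\textbf{p}_j=\sum_{i=1}^n\textbf{u}_i s_{ij}$ are precisely the columns of $\textbf{P}=\textbf{U}\textbf{S}$. Orthonormality gives $\textbf{U}^*\textbf{U}=\textbf{I}$, and since $\textbf{U}$ is square this also gives $\textbf{U}\textbf{U}^*=\textbf{I}$; the quickest way to see this is to pass to the complex representation of Section~\ref{SectionQuaternionmatrices}, which is an algebra isomorphism compatible with $*$, so $\hat U^*\hat U=I_{2n}$ forces $\hat U\hat U^*=I_{2n}$, i.e. $\textbf{U}\textbf{U}^*=\textbf{I}$. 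Hence $\textbf{U}$ is unitary and $\textbf{P}=\textbf{U}\textbf{S}$ is the desired factorization.

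The bulk of the work is the routine non-commutative bookkeeping — keeping every scalar on the right throughout Gram--Schmidt — and the only place that genuinely uses the hypothesis is the verification that $\tilde{\textbf{u}}_j\neq 0$ at each stage, which rests on the existence of a (one-sided, hence two-sided) inverse $\textbf{Q}$ of the non-singular matrix $\textbf{P}$.
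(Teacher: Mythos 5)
Your proof is correct and follows exactly the route the paper indicates: the paper's own proof simply remarks that the statement is a reformulation of Gram--Schmidt applied to the columns of $\textbf{P}$, and you have spelled that out carefully, with the right precautions (scalars kept on the right, norms and diagonal entries chosen real so they are central, non-singularity used only to rule out a vanishing residual, and the passage to the $2n\times 2n$ complex representation to upgrade $\textbf{U}^*\textbf{U}=\textbf{I}$ to full unitarity).
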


\begin{proof}
Similar to the  case of matrices with complex entries, this statement is just a reformulation (in matrix notation)
of the result of applying of the Gram--Schmidt process to the columns of $\textbf{P}$.
\end{proof}

\begin{cor}
Let $P\in\Mat\left(\Hq',n\times n\right)$ be a non-singular matrix. Then there is a factorization
\[
P=US,
\]
where $U$ is a unitary $2\times 2$ block matrix, $U\in\Mat\left(\Hq',n\times n\right)$,
and $S\in\Mat\left(\Hq',n\times n\right)$ is a $2\times 2$ block upper triangular matrix.
\end{cor}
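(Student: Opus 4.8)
The plan is to deduce this from Proposition~\ref{TheoremQRFactorizationQuaternionMatrices} by transporting that factorization through the entrywise isomorphism between $\Mat(\Hq,n\times n)$ and $\Mat(\Hq',n\times n)$ recalled in Section~\ref{SectionQuaternionmatrices}. First I would let $\textbf{P}\in\Mat(\Hq,n\times n)$ be the preimage of $P$ under this isomorphism. Since the isomorphism is a ring isomorphism, it carries two-sided inverses to two-sided inverses, so non-singularity of $P$ is equivalent to non-singularity of $\textbf{P}$ in the sense defined in Section~\ref{SectionQuaternionmatrices}. Proposition~\ref{TheoremQRFactorizationQuaternionMatrices} then provides a factorization $\textbf{P}=\textbf{U}\textbf{S}$ with $\textbf{U}\in\Mat(\Hq,n\times n)$ unitary and $\textbf{S}\in\Mat(\Hq,n\times n)$ upper triangular.

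Next I would apply the isomorphism to the identity $\textbf{P}=\textbf{U}\textbf{S}$; by multiplicativity this gives $P=US$, where $U$ and $S$ are the images of $\textbf{U}$ and $\textbf{S}$. Two structural verifications remain. For $U$: the isomorphism intertwines the quaternionic conjugate-transpose with Hermitian conjugation of the associated $2\times 2$ block matrix (Section~\ref{SectionQuaternionmatrices}), so the relation $\textbf{U}^*\textbf{U}=\textbf{U}\textbf{U}^*=\textbf{I}$ becomes $U^*U=UU^*=I_{2n}$; hence $U$ is a unitary $2\times 2$ block matrix, in fact a unitary symplectic matrix. For $S$: the isomorphism acts entrywise, carrying the quaternion in position $(i,j)$ to the $2\times 2$ block in block-position $(i,j)$ and the zero quaternion to the zero $2\times 2$ block, so the vanishing of the entries of $\textbf{S}$ below the diagonal turns into the vanishing of the $2\times 2$ blocks of $S$ below the block-diagonal, i.e. $S$ is a $2\times 2$ block upper triangular matrix. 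This yields the asserted factorization.

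I do not expect a genuine obstacle here, since the statement is essentially a translation of Proposition~\ref{TheoremQRFactorizationQuaternionMatrices}; the only point I would take care to state explicitly is precisely the last correspondence---that ``upper triangular'' for $\textbf{S}$ produces ``block upper triangular'' for $S$, and not merely ``upper triangular'' as a $2n\times 2n$ complex matrix---which follows at once from the entrywise form of the isomorphism, each single quaternionic entry occupying an entire $2\times 2$ block.
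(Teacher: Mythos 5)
Your proof is correct and follows exactly the route the paper takes (which it only sketches in one line): pull back to $\Mat(\Hq,n\times n)$, apply the quaternion QR factorization of Proposition~\ref{TheoremQRFactorizationQuaternionMatrices}, and push forward through the entrywise ring isomorphism, checking that unitarity and triangularity are preserved. Your explicit remark that upper triangular over $\Hq$ becomes only \emph{block} upper triangular over $\C$ is a worthwhile clarification of a point the paper leaves implicit.
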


\begin{proof}
Use the QR factorization for quaternion matrices (Proposition \ref{TheoremQRFactorizationQuaternionMatrices}), and the isomorphism between  $\Mat\left(\Hq,n\times n\right)$ and $\Mat\left(\Hq',n\times n\right)$
defined in Section \ref{SectionQuaternionmatrices}.
\end{proof}

\subsection{Generalised Schur decomposition for quaternion matrices}
\label{SectionGeneralizedSchurDecompositionQuaternionMatrices}

\begin{thm}
\label{TheoremGeneralizedSchurDecompositionQuaternionMatrices}
Let $N$ and $n$ be fixed natural numbers. Let $\textbf{M}_i\ (i=1,\ldots,n)$
be non-singular quaternion matrices from  $\Mat\left(\Hq,N\times N\right)$. Then there exist unitary quaternion matrices
$\textbf{U}_i$ $(i=1,\ldots,n)$ from  $\Mat\left(\Hq,N\times N\right)$, and upper triangular quaternion matrices
 $\textbf{S}_i$ $(i=1,\ldots,n)$ from  $\Mat\left(\Hq,N\times N\right)$ such that
\begin{equation}\label{genSchur}
 \textbf{M}_i=\textbf{U}_i\textbf{S}_i\textbf{U}_{i+1}^*,\qquad i=1,\ldots,n,
\end{equation}
where $\textbf U_{n+1}= \textbf U_1$.
\end{thm}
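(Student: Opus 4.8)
The plan is to argue by induction on the matrix size $N$, following closely the proof of the ordinary Schur form (Proposition~\ref{TheoremSchurCanonicalForm}) but dragging the cyclic structure of \eqref{genSchur} along at every stage. The base case $N=1$ is immediate: a $1\times1$ matrix is vacuously upper triangular, so one takes $\textbf{U}_i=(\textbf{1})$ and $\textbf{S}_i=\textbf{M}_i$ for all $i$, and then $\textbf{U}_{n+1}=\textbf{U}_1$ holds trivially.

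For the inductive step, the first thing I would do is build a single ``cyclic eigenvector chain'' that simultaneously deflates all $n$ matrices. Since each $\textbf{M}_i$ is non-singular, so is $\textbf{P}_n=\textbf{M}_1\textbf{M}_2\cdots\textbf{M}_n$, and by Proposition~\ref{PropositionEigenvaluesQuaternionMatrices} the matrix $\textbf{P}_n$ has an eigenvalue $\lambda_1\in\C^{+}$ with a unit eigenvector $\textbf{x}$, that is $\textbf{P}_n\textbf{x}=\textbf{x}\lambda_1$; non-singularity forces $\lambda_1\neq0$. Put $\textbf{x}_{n+1}=\textbf{x}$ and $\textbf{x}_i=\textbf{M}_i\textbf{x}_{i+1}$ for $i=n,n-1,\ldots,2$, so that $\textbf{M}_1\textbf{x}_2=\textbf{P}_n\textbf{x}=\textbf{x}\lambda_1$. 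Each $\textbf{x}_i$ is non-zero, so set $\textbf{u}_i=\textbf{x}_i/|\textbf{x}_i|$ for $i=2,\ldots,n+1$ (hence $\textbf{u}_{n+1}=\textbf{x}$) and $\textbf{u}_1:=\textbf{u}_{n+1}$. A direct check then gives $\textbf{M}_i\textbf{u}_{i+1}=\textbf{u}_is_i$ for $i=1,\ldots,n$, where $s_i=|\textbf{x}_i|/|\textbf{x}_{i+1}|\in\R_{>0}$ for $i\geq2$ while $s_1=\lambda_1/|\textbf{x}_2|\in\C^{+}\setminus\{0\}$; the non-commutativity of $\Hq$ only shows up in $s_1$ being a nonzero complex scalar rather than a positive real, which causes no trouble. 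Now use Proposition~\ref{propositionQuaternionUnitaryColumn} to extend each $\textbf{u}_i$ to a unitary $\textbf{U}_i\in\Mat(\Hq,N\times N)$ with first column $\textbf{u}_i$, and --- this is the crucial bookkeeping point --- define $\textbf{U}_{n+1}:=\textbf{U}_1$, which is legitimate since $\textbf{u}_{n+1}=\textbf{u}_1$.

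Next I would deflate simultaneously. Since $\textbf{U}_i^*\textbf{u}_i=e_1$, the first column of $\textbf{U}_i^*\textbf{M}_i\textbf{U}_{i+1}$ is $\textbf{U}_i^*(\textbf{u}_is_i)=e_1s_i$, so
\[
\textbf{U}_i^*\textbf{M}_i\textbf{U}_{i+1}=\begin{pmatrix} s_i & * \\ 0 & \textbf{M}_i' \end{pmatrix},\qquad \textbf{M}_i'\in\Mat\left(\Hq,(N-1)\times(N-1)\right).
\]
Because the left side is non-singular and $s_i$ is invertible in $\Hq$, solving for the top component of a hypothetical kernel vector shows each $\textbf{M}_i'$ is non-singular. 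Apply the induction hypothesis to $\textbf{M}_1',\ldots,\textbf{M}_n'$ to obtain unitaries $\textbf{U}_i'$ and upper triangular $\textbf{S}_i'$ in $\Mat(\Hq,(N-1)\times(N-1))$ with $\textbf{M}_i'=\textbf{U}_i'\textbf{S}_i'(\textbf{U}_{i+1}')^*$ and $\textbf{U}_{n+1}'=\textbf{U}_1'$. Setting $\textbf{V}_i=\diag(\textbf{1},\textbf{U}_i')$ and $\widetilde{\textbf{U}}_i=\textbf{U}_i\textbf{V}_i$, a block computation gives
\[
\widetilde{\textbf{U}}_i^*\,\textbf{M}_i\,\widetilde{\textbf{U}}_{i+1}=\textbf{V}_i^*\begin{pmatrix} s_i & * \\ 0 & \textbf{M}_i' \end{pmatrix}\textbf{V}_{i+1}=\begin{pmatrix} s_i & * \\ 0 & \textbf{S}_i' \end{pmatrix}=:\textbf{S}_i,
\]
which is upper triangular, while $\widetilde{\textbf{U}}_{n+1}=\textbf{U}_{n+1}\textbf{V}_{n+1}=\textbf{U}_1\textbf{V}_1=\widetilde{\textbf{U}}_1$ precisely because of the two cyclic identifications $\textbf{U}_{n+1}=\textbf{U}_1$ and $\textbf{U}_{n+1}'=\textbf{U}_1'$. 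This yields $\textbf{M}_i=\widetilde{\textbf{U}}_i\textbf{S}_i\widetilde{\textbf{U}}_{i+1}^*$ with $\widetilde{\textbf{U}}_{n+1}=\widetilde{\textbf{U}}_1$, closing the induction.

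The part I expect to require the most care is exactly the global cyclic bookkeeping: one must commit at the outset to $\textbf{u}_1=\textbf{u}_{n+1}$ and then reuse \emph{the same} unitary completion and \emph{the same} $(N-1)$-dimensional data at index $1$ and index $n+1$, so that $\widetilde{\textbf{U}}_{n+1}=\widetilde{\textbf{U}}_1$ is forced rather than merely hoped for. Apart from that, everything is the quaternionic transcription of the classical generalised Schur (QZ) argument, provided one keeps track of the side on which the complex scalar $\lambda_1$ acts and uses that every nonzero quaternion is invertible. An alternative, non-inductive route would Schur-decompose $\textbf{P}_n$, peel off each $\textbf{M}_i=\textbf{U}_i\textbf{S}_i\textbf{U}_{i+1}^*$ using a lower-triangular analogue of the $QR$ factorisation of Proposition~\ref{TheoremQRFactorizationQuaternionMatrices}, and then observe that $\textbf{U}_1^*\textbf{U}_{n+1}$ is unitary and upper triangular, hence diagonal, and absorb it into $\textbf{S}_n$; the inductive proof above is, however, more self-contained given the results already established.
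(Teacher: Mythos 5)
Your proof is correct, but it takes a genuinely different route from the paper. The paper proves the result globally in one pass: it applies the Schur form (Proposition~\ref{TheoremSchurCanonicalForm}) to the full product $\textbf M_1\cdots\textbf M_n$ to get $\textbf U_1$ with $\textbf U_1^*\textbf M_1\cdots\textbf M_n\textbf U_1=\textbf S$ upper triangular, then peels off $\textbf U_n,\textbf U_{n-1},\ldots,\textbf U_2$ by successive QR factorisations $\textbf M_i\textbf U_{i+1}=\textbf U_i\textbf S_i$ (Proposition~\ref{TheoremQRFactorizationQuaternionMatrices}), and finally solves $\textbf S_1:=\textbf S\textbf S_n^{-1}\cdots\textbf S_2^{-1}$, which is upper triangular because products and inverses of invertible upper triangular quaternion matrices are again upper triangular. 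You instead run a self-contained induction on $N$, building a cyclic eigenvector chain $\textbf x_{n+1}=\textbf x$, $\textbf x_i=\textbf M_i\textbf x_{i+1}$, extending each normalised $\textbf u_i$ to a unitary via Proposition~\ref{propositionQuaternionUnitaryColumn}, deflating all $n$ matrices simultaneously, and recursing on the $(N-1)\times(N-1)$ blocks $\textbf M_i'$ (whose non-singularity you correctly extract from that of $\textbf U_i^*\textbf M_i\textbf U_{i+1}$ and $s_i\neq0$). Your scalar bookkeeping is careful and correct: the $s_i$ for $i\geq2$ are positive reals and hence commute freely, while $s_1=\lambda_1/|\textbf x_2|$ is a nonzero complex number acting on the right, exactly as in the quaternion eigenvalue convention $\textbf P\textbf x=\textbf x\lambda$; and the cyclic identifications $\textbf u_1=\textbf u_{n+1}$, $\textbf U_{n+1}:=\textbf U_1$, $\textbf U_{n+1}'=\textbf U_1'$ propagate correctly to $\widetilde{\textbf U}_{n+1}=\widetilde{\textbf U}_1$. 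What the paper's approach buys is brevity and the reuse of two already-established black boxes (Schur and QR), at the cost of having to observe that the residual $\textbf S_1$ is upper triangular; what your approach buys is a self-contained constructive argument closer to the classical QZ-style textbook proof, at the cost of more bookkeeping. The alternative route you sketch at the end is essentially the paper's proof, except that the paper solves for $\textbf S_1$ rather than absorbing a diagonal unitary into $\textbf S_n$; both variants work.
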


\begin{proof}
Let $\textbf{M}_i\ (i=1,\ldots,n)$ be non-singular quaternion matrices from  $\Mat\left(\Hq,N\times N\right)$. It follows from the Schur decomposition, Proposition~\ref{TheoremSchurCanonicalForm}, that there exists a unitary quaternion matrix $\textbf U_{1}\in U(N,\Hq)$, such that
\begin{equation}\label{prod-proof}
\textbf U_{1}^*\textbf M_1\textbf M_2\cdots\textbf M_n\textbf U_{1}=\textbf S,
\end{equation}
where $\textbf{S}$ is an upper triangular quaternion matrix from  $\Mat\left(\Hq,N\times N\right)$.
It is clear that $\textbf{M}_n\textbf U_{1}\in\Mat\left(\Hq,N\times N\right)$ and it follows directly from the QR decomposition for quaternion matrices, Proposition~\ref{TheoremQRFactorizationQuaternionMatrices}, that there exists a unitary quaternion matrix $\textbf U_{n}\in U(N,\Hq)$, such that
\[
 \textbf{M}_n\textbf{U}_{1}=\textbf{U}_n\textbf{S}_n,
\]
where $\textbf{S}_n$ is an upper triangular quaternion matrix from  $\Mat\left(\Hq,N\times N\right)$. This expression is identical to~\eqref{genSchur} with $i=n$. Repeating this procedure, we also find 
\begin{equation*}
 \textbf{M}_i\textbf{U}_{i+1}=\textbf{U}_i\textbf{S}_i,\qquad i=2,\ldots,n-1,
\end{equation*}
with $\textbf{S}_i\ (i=2,\ldots,n-1)$  upper triangular quaternion matrices from  $\Mat\left(\Hq,N\times N\right)$ and $\textbf U_{i}\ (i=2,\ldots,n)$ unitary quaternion matrices. Using this in~\eqref{prod-proof} we get
\begin{equation*}
\textbf U_{1}^*\textbf M_1\textbf U_{2}\textbf S_2\cdots\textbf S_n=\textbf S.
\end{equation*}
The upper triangular matrices $\textbf{S}_i\ (i=2,\ldots,n)$ are invertible, since the matrices $\textbf{M}_i\ (i=2,\ldots,n)$ are non-singular. We define the upper triangular matrix $\textbf{S}_1\equiv \textbf{S}\textbf S_2^{-1}\cdots\textbf S_n^{-1}$ and the theorem follows.
\end{proof}

\begin{rem}
Note that the diagonal elements of the upper triangular matrix $\q S$ are complex numbers, i.e. $(\q S)_{ii}\in\C\subset\Hq$ for all $i=1,\ldots,N$. This is equivalent to the structure given by Corollary~\ref{CorollaryUZPlusTUQuaternion}.
\end{rem}

\begin{rem}
Similar to the ordinary Schur decomposition, the generalised Schur decomposition given by Theorem~\ref{TheoremGeneralizedSchurDecompositionQuaternionMatrices} is not unique. In particular, let $\q z_a=x+y\q i\in\C\subset\Hq$ with $|\q z_a|=1$ for all $a=1,\ldots,N$ and $\q V=\diag(\q z_1,\ldots,\q z_N)$, then with the same definitions as in Theorem~\ref{TheoremGeneralizedSchurDecompositionQuaternionMatrices}, we have
\[
 \textbf{M}_i=\textbf{U}_i\textbf{S}_i\textbf{U}_{i+1}^*=(\textbf{U}_i\q V)(\q V^*\textbf{S}_i)\textbf{U}_{i+1}^*,
\]
where $\textbf{U}_i\q V\in U(N,\Hq)$ is a unitary quaternion matrix and $\q V^*\textbf{S}_i$ is a upper triangular quaternion matrices from  $\Mat\left(\Hq,N\times N\right)$. Uniqueness of the generalised Schur decomposition (Theorem~\ref{TheoremGeneralizedSchurDecompositionQuaternionMatrices}) can be obtained by choosing $\q U_i\in U(N,\Hq)/U(1,\C)^N$ such that the diagonal elements $(\q S)_{jj}\in\C^+$ ($j=1,\ldots,N$) are complex numbers in lexicographical order.
\end{rem}

\begin{cor}
\label{CorrollaryGeneralizedSchurDecompositionQuaternionMatrices}
Let $N$ and $n$ be fixed natural numbers. Let $M_i\ (i=1,\ldots,n)$
be non-singular $2\times 2$ block matrices from  $\Mat\left(\Hq',N\times N\right)$. There exist unitary $2\times 2$ block matrices
$U_i\ (i=1,\ldots,n)$ from  $\Mat\left(\Hq',N\times N\right)$, and upper triangular $2\times 2$ block matrices
 $S_i\ (i=1,\ldots,n)$ from  $\Mat\left(\Hq',N\times N\right)$ such that
 \begin{equation}\label{1141}
 M_i=U_iS_iU_{i+1}^*,\qquad i=1,\ldots,n
 \end{equation}
where $U_{n+1}= U_1$.
\end{cor}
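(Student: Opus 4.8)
The plan is to obtain Corollary~\ref{CorrollaryGeneralizedSchurDecompositionQuaternionMatrices} as a direct transfer of Theorem~\ref{TheoremGeneralizedSchurDecompositionQuaternionMatrices} across the algebra isomorphism between $\Mat(\Hq,N\times N)$ and $\Mat(\Hq',N\times N)$ from Section~\ref{SectionQuaternionmatrices}, exactly in the spirit of the corollaries following Proposition~\ref{TheoremSchurCanonicalForm} and Proposition~\ref{TheoremQRFactorizationQuaternionMatrices}. First I would pull the given matrices back to the quaternionic side: let $\textbf{M}_i\in\Mat(\Hq,N\times N)$ be the pre-image of $M_i$ under this isomorphism. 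Since the map is an algebra isomorphism it carries invertible elements to invertible elements, so each $\textbf{M}_i$ is non-singular, and Theorem~\ref{TheoremGeneralizedSchurDecompositionQuaternionMatrices} produces unitary quaternion matrices $\textbf{U}_i$ and upper triangular quaternion matrices $\textbf{S}_i$ $(i=1,\ldots,n)$ with $\textbf{M}_i=\textbf{U}_i\textbf{S}_i\textbf{U}_{i+1}^*$ and $\textbf{U}_{n+1}=\textbf{U}_1$.

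Next I would push this identity forward. Writing $U_i,S_i\in\Mat(\Hq',N\times N)$ for the images of $\textbf{U}_i,\textbf{S}_i$ and using that the isomorphism respects products and the involution $\textbf{P}\mapsto\textbf{P}^*$ (recorded in Section~\ref{SectionQuaternionmatrices}), the relation $\textbf{M}_i=\textbf{U}_i\textbf{S}_i\textbf{U}_{i+1}^*$ becomes $M_i=U_iS_iU_{i+1}^*$ with $U_{n+1}=U_1$. It then remains to verify the structural statements: a unitary quaternion matrix corresponds to a unitary $2\times 2$ block matrix and an upper triangular quaternion matrix corresponds to a $2\times 2$ block upper triangular matrix. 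Both are immediate from the block-entrywise description of the isomorphism, which sends the $(i,j)$ quaternion entry to the $(i,j)$ complex $2\times 2$ block: unitarity is preserved because $\textbf{P}^*\textbf{P}=\textbf{I}$ maps to the corresponding block identity, and the triangular shape is preserved because the vanishing quaternion entries below the diagonal map to zero blocks.

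Since essentially every ingredient has already been established in Section~\ref{SectionQuaternionmatrices} and in Theorem~\ref{TheoremGeneralizedSchurDecompositionQuaternionMatrices}, I do not expect a genuine obstacle here; the only point worth stating explicitly is the equivalence of ``upper triangular'' and ``unitary'' on the two sides of the isomorphism, which is exactly what the block-entrywise description delivers.
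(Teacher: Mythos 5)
Your proposal is correct and follows exactly the same route as the paper: transfer Theorem~\ref{TheoremGeneralizedSchurDecompositionQuaternionMatrices} across the isomorphism between $\Mat(\Hq,N\times N)$ and $\Mat(\Hq',N\times N)$ from Section~\ref{SectionQuaternionmatrices}, noting that unitarity, upper-triangularity, products, and the involution $*$ are all preserved. You merely spell out the verification that the paper leaves implicit.
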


\begin{proof}
Use the generalised Schur decomposition for quaternion matrices (Theorem \ref{TheoremGeneralizedSchurDecompositionQuaternionMatrices}),
and the isomorphism between  $\Mat\left(\Hq,n\times n\right)$ and $\Mat\left(\Hq',n\times n\right)$
defined in Section \ref{SectionQuaternionmatrices}.
\end{proof}

\begin{rem}
\label{ZplusT}
Often we write the upper triangular matrices $S_i\ (i=1,\ldots,n)$ as a sum,
$S_i=Z_i+T_i$, consisting of a diagonal matrix $Z_i$ similar to~\eqref{Zquaternion}, and a strictly upper triangular matrix $T_i$ similar to~\eqref{Tquaternion}.
\end{rem}

\subsection{Change of measure for quaternion matrices}

Corollary \ref{CorrollaryGeneralizedSchurDecompositionQuaternionMatrices} defines a change of variables from the matrices $M_1, \ldots, M_n$ to their block triangular forms. Our task is to derive the Jacobian for such a change of variables. To make our method of derivation more transparent we begin our presentation from the simplest case of a single matrix.
\begin{defn}\label{DefinitionQuaternionOneForm}
Let $\lambda^{(1)}$, $\lambda^{(2)}$ be independent one-forms. The $2\times 2$ matrix one-form $\breve{\lambda}$ defined by
\begin{equation}\label{QuaternionOneForm}
\breve{\lambda}=\left(\begin{array}{cc}
                        \lambda^{(1)} & -\overline{\lambda}^{(2)} \\
                        \lambda^{(2)} & \overline{\lambda}^{(1)}
                      \end{array}
\right)
\end{equation}
is called a quaternion one-form.
\end{defn}
\begin{rem}
Recall that the set $\Hq'$, see~\eqref{uvmatrix},
 is isomorphic to the set $\Hq$ of quaternions. This
explains Definition~\ref{DefinitionQuaternionOneForm}.
\end{rem}
We introduce the following notation. If $\breve{\lambda}$ is a quaternion one-form defined by equation (\ref{QuaternionOneForm}),
then
\[
\underline{\breve{\lambda}}=\lambda^{(1)}\wedge\overline{\lambda}^{(1)}\wedge\lambda^{(2)}\wedge\overline{\lambda}^{(2)}.
\]
So $\underline{\breve{\lambda}}$ is the wedge product of one-forms which are entries of
$\breve{\lambda}$.

Let $M\in\Mat\left(\Hq',N\times N\right)$. Represent $M$  as in equation (\ref{M}).
Integration of a function of $M$ with respect to the Lebesgue measure is the same as integrating
against the following $4N^2$ form
\[
\underline{dM}\equiv\bigwedge_{i,j} \underline{d\breve{M}_{i,j}}.
\]
Here we have introduced $\underline{dM}$ as a compact notation for the $4N^2$ form.

\begin{prop}
\label{PropositionTQ}
Let $M$, $Q_1$, $Q_2$ be elements of $\Mat(\Hq',n\times n)$. If $Q_1$, $Q_2$ are fixed unitary matrices, and $\tilde{M}=Q_1MQ_2$, then $\underline{d\tilde M}=\underline{dM}$.
\end{prop}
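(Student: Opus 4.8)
The plan is to recognise the substitution $M\mapsto\tilde M=Q_1MQ_2$ as an orthogonal transformation of the real vector space underlying $\Mat(\Hq',n\times n)$; an orthogonal linear map preserves the top-degree volume form up to a sign, and the sign is then removed by a connectedness argument. Concretely, I would proceed as follows.

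First I would fix the Euclidean structure. The space $V:=\Mat(\Hq',n\times n)$ is an $\R$-linear subspace of $\Mat(\C,2n\times 2n)$ of real dimension $4n^2$, with real coordinates the real and imaginary parts of the entries $M^{(1)}_{i,j},M^{(2)}_{i,j}$ ($1\le i,j\le n$). Using $d\zeta\wedge d\overline\zeta=-2i\,d(\Re\zeta)\wedge d(\Im\zeta)$ inside each $2\times2$ block, one sees that $\underline{dM}$ is a fixed nonzero constant (namely $(-4)^{n^2}$) times the standard volume form of $V$ for the inner product $\langle M,M\rangle=\sum_{i,j,a}|M^{(a)}_{i,j}|^2=\tfrac12\Tr(M^*M)$, where $M$ is read as a $2n\times2n$ complex matrix and $M^*$ is its conjugate transpose. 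That constant is the same for $M$ and for $\tilde M$, so it is irrelevant.

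Next I would check the two structural facts. The map $\Phi(M)=Q_1MQ_2$ is plainly $\R$-linear, and it maps $V$ into $V$ because $\Hq'$ is a subring of $\Mat(\C,2\times2)$, hence $\Mat(\Hq',n\times n)$ is a subring of $\Mat(\C,2n\times2n)$. Since $Q_1,Q_2$ are quaternion unitary, by Section~\ref{SectionQuaternionmatrices} the associated $2n\times2n$ complex matrices are unitary symplectic, so $Q_1^*Q_1=I_{2n}$ and $Q_2Q_2^*=I_{2n}$, and cyclicity of the trace gives
\[
\Tr\!\big(\tilde M^*\tilde M\big)=\Tr\!\big(Q_2^*M^*Q_1^*Q_1MQ_2\big)=\Tr\!\big(M^*M\,Q_2Q_2^*\big)=\Tr\!\big(M^*M\big).
\]
Thus $\Phi$ preserves the Euclidean norm on $V$, i.e.\ $\Phi\in O(V)$, so $\det_\R\Phi=\pm1$; since a linear map pulls back a top form to $\det_\R$ times it, $\underline{d\tilde M}=(\det_\R\Phi)\,\underline{dM}=\pm\underline{dM}$.

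Finally I would fix the sign. Factor $\Phi=L_{Q_1}\circ R_{Q_2}$ with $L_Q(M)=QM$ and $R_Q(M)=MQ$; then $Q\mapsto\det_\R L_Q$ is continuous (a polynomial in the entries of $Q$), takes values in $\{\pm1\}$ on the connected group $\USp(2n)$, and equals $1$ at $Q=I_{2n}$, hence is identically $1$ there, and likewise for $R_Q$. Therefore $\det_\R\Phi=1$ and $\underline{d\tilde M}=\underline{dM}$. This is routine bookkeeping; the only things requiring a little care are the two structural facts above — both already recorded in Section~\ref{SectionQuaternionmatrices} — and the final sign, which in any case is not essential, since the weaker identity $\underline{d\tilde M}=\pm\underline{dM}$ already suffices for the change-of-variables computations to follow.
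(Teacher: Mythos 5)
Your proof follows the same linear change-of-variables idea as the paper, which simply invokes the formula $dx_1'\wedge\cdots\wedge dx_k'=\det(A)\,dx_1\wedge\cdots\wedge dx_k$ for a constant linear map $A$ and leaves the value of the determinant implicit. You go further and actually establish that this determinant equals $1$, by identifying the map $M\mapsto Q_1MQ_2$ as an orthogonal transformation of the underlying real vector space (preserving $\tfrac12\Tr(M^*M)$) and then pinning the sign via connectedness of $\USp(2n)$; this fills in the only nontrivial detail the paper's one-line proof glosses over.
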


\begin{proof}
Use the fact that if $x_i'=\sum_{j}a_{ij}x_j$, 
and $A=(a_{ij})_{i,j}$ is a constant matrix, then
\[
dx_1'\wedge\ldots \wedge dx_n'=\det(A)dx_1\wedge\ldots\wedge dx_n.
\]
\end{proof}

We use Corollary \ref{CorollaryUZPlusTUQuaternion}, and write $M$ in terms of a unitary matrix
$U\in \USp(2N)/U(1)^N$, a diagonal matrix $Z$ and a strictly upper triangular matrix $T$ (see equations \eqref{MZTU} to \eqref{Tquaternion}).
Set $dM=(d\breve{m}_{ij})_{i,j=1}^N$, thus $dM$ can be understood as a matrix whose elements, $d\breve{m}_{i,j}$, are quaternion one-forms.
We have
\[
dM=U\left(\Omega S-S\Omega+dS\right)U^*,
\]
where $\Omega=U^*dU$ is skew-Hermitian and $S=Z+T$. Denote by $\breve{w}_{i,j}$ $(1\leq i,j\leq N)$ the $2\times 2$ blocks which are matrix elements of $\Omega$. Equivalent to $dM$, the matrices $dU$, $dZ$ and $dT$ can be understood as matrices whose elements are quaternion one-forms. We will use the abbreviations
\[
\underline{dZ}=\underset{i}{\bigwedge}(dz_i\wedge d\bar{z}_i), \qquad 
\underline{dT}=\bigwedge_{i<j}\underline{d\breve{t}_{i,j}} \qquad\text{and}\qquad
\underline{\Omega}=\bigwedge_i(w_{i,i}^{(2)}\wedge \bar{w}_{i,i}^{(2)})\bigwedge_{i>j}\underline{\breve{w}_{i,j}};
\]
that is the wedge product of all independent one-forms. Furthermore, we use the abbreviation 
\begin{equation}\label{LambdaDef}
\Lambda=(\,\breve\lambda_{i,j}\,)_{i,j=1}^N=\Omega S-S\Omega+dS
\qquad\text{with}\qquad
\underline{\Lambda}=\bigwedge_{i,j} \underline{\breve{\lambda}_{i,j}}.
\end{equation}

\begin{thm}
\label{TheoremQuaternionJacobianDeterminantalFormula}
The following Jacobian determinantal formula holds true
\begin{equation}\label{JacobianDeterminantalFormulaQuaternion1}
\underline{dM}
=\prod\limits_{j<i}|z_j-z_i|^2|z_j-\bar{z}_i|^2\prod\limits_{j=1}^N|z_j-\overline{z}_j|^2
\underline{dZ} \wedge \underline{dT} \wedge \underline{\Omega}.
\end{equation}
\end{thm}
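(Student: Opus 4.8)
The plan is to use the differential-forms method (as in Hough--Krishnapur--Peres--Vir\'ag) and reduce the whole computation to the determinant of one block-triangular coefficient matrix. First I would differentiate $M=U(Z+T)U^{*}$ to get $dM=U\Lambda U^{*}$ with $\Lambda=\Omega S-S\Omega+dS$, $\Omega=U^{*}dU$, $S=Z+T$, exactly as recorded just before the statement. Since at each point $U$ is a fixed unitary $2\times2$-block matrix, Proposition~\ref{PropositionTQ} applied on the tangent space of $\Mat(\Hq',N\times N)$ gives $\underline{dM}=\underline{\Lambda}$. Recall $\Omega$ takes values in $\mathfrak{usp}(2N)$; after fixing the $U(1)^{N}$ gauge that was quotiented out in Corollary~\ref{CorollaryUZPlusTUQuaternion} one has $w_{i,i}^{(1)}=0$, so the independent one-forms carried by $\Omega$ are exactly those appearing in $\underline{\Omega}$, namely $w_{i,j}^{(\cdot)},\bar w_{i,j}^{(\cdot)}$ for $i>j$ together with $w_{i,i}^{(2)},\bar w_{i,i}^{(2)}$.

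Next I would expand the blocks of $\Lambda$ using that $S$ is block upper triangular. For $i>j$ this gives $\breve\lambda_{i,j}=(\breve w_{i,j}\breve z_j-\breve z_i\breve w_{i,j})+\sum_{k<j}\breve w_{i,k}\breve t_{k,j}-\sum_{k>i}\breve t_{i,k}\breve w_{k,j}$; for $i<j$ the same with an additional $d\breve t_{i,j}$; and for $i=j$, $\breve\lambda_{i,i}=d\breve z_i+(\breve w_{i,i}\breve z_i-\breve z_i\breve w_{i,i})+\sum_{k<i}\breve w_{i,k}\breve t_{k,i}-\sum_{k>i}\breve t_{i,k}\breve w_{k,i}$. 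A one-line $2\times2$ computation shows $\breve w_{i,j}\breve z_j-\breve z_i\breve w_{i,j}$ has $(1)$-entry $(z_j-z_i)w_{i,j}^{(1)}$ and $(2)$-entry $(z_j-\bar z_i)w_{i,j}^{(2)}$, while $\breve w_{i,i}\breve z_i-\breve z_i\breve w_{i,i}$ has $(1)$-entry $0$ and $(2)$-entry $(z_i-\bar z_i)w_{i,i}^{(2)}$. The structural facts I will need are: all blocks $\breve w_{i,k}$, $\breve w_{k,j}$ occurring in the tails are \emph{off-diagonal} blocks of $\Omega$ (so, after using skew-Hermiticity $\breve w_{a,b}=-\breve w_{b,a}^{*}$ for $a<b$, they involve only independent forms with the lower index strictly below the upper one); $\lambda_{i,j}^{(\cdot)}$ carries no $dz$-form when $i\ne j$; $\lambda_{i,i}^{(1)}$ carries no $w_{k,k}^{(2)}$-form; and $\lambda_{i,j}^{(\cdot)}$ carries no $dt$-form other than $dt_{i,j}^{(\cdot)}$ when $i<j$.

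Then I would order the $4N^{2}$ basic one-forms in the consecutive blocks (A) $\{w_{i,j}^{(\cdot)},\bar w_{i,j}^{(\cdot)}:i>j\}$, (B) $\{w_{i,i}^{(2)},\bar w_{i,i}^{(2)}\}$, (C) $\{dz_i,d\bar z_i\}$, (D) $\{dt_{i,j}^{(\cdot)},\overline{dt_{i,j}^{(\cdot)}}:i<j\}$, and order the components of $\Lambda$ in the matching blocks: (A$'$) $\breve\lambda_{i,j}$, $i>j$; (B$'$) the $(2)$-components of $\breve\lambda_{i,i}$; (C$'$) the $(1)$-components of $\breve\lambda_{i,i}$; (D$'$) $\breve\lambda_{i,j}$, $i<j$. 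Writing $\underline{\Lambda}=\det\mathcal{J}\cdot(\text{ordered wedge of all basic forms})$, the facts from the previous paragraph say that every block of $\mathcal{J}$ strictly above the diagonal vanishes, so $\det\mathcal{J}$ is the product of the diagonal blocks. Inside A$'\!\to$A I would refine the order of the pairs $(i,j)$, $i>j$, lexicographically by ($j$ increasing, then $i$ decreasing); this is the order that makes both tails $\sum_{k<j}$ and $\sum_{k>i}$ point to strictly earlier pairs, leaving $4\times4$ diagonal blocks $\diag(z_j-z_i,\bar z_j-\bar z_i,z_j-\bar z_i,\bar z_j-z_i)$ with product of determinants $\prod_{j<i}|z_j-z_i|^{2}|z_j-\bar z_i|^{2}$. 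The B$'\!\to$B diagonal blocks are $\diag(z_i-\bar z_i,\bar z_i-z_i)$, product $\prod_i|z_i-\bar z_i|^{2}$, and the C$'\!\to$C and D$'\!\to$D diagonal blocks are identities. Hence $\det\mathcal{J}=\pm\prod_{j<i}|z_j-z_i|^{2}|z_j-\bar z_i|^{2}\prod_i|z_i-\bar z_i|^{2}$, and since the ordered wedge of basic forms equals $\pm\,\underline{dZ}\wedge\underline{dT}\wedge\underline{\Omega}$, this proves \eqref{JacobianDeterminantalFormulaQuaternion1} (the overall sign is fixed by the chosen orderings and is immaterial, both sides being used as densities). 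The analytic content is trivial; the only delicate point is the combinatorial bookkeeping here — verifying that the commutator tails never feed a one-form from a later position, for which the lexicographic order ($j$ up, $i$ down) on the lower blocks is the key device, together with rewriting the upper blocks $\breve w_{i,j}$ ($i<j$) via skew-Hermiticity before reading off coefficients.
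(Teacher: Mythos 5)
Your proposal is correct and takes essentially the same approach as the paper: both rely on $\underline{dM}=\underline{\Lambda}$ via Proposition~\ref{PropositionTQ} and then on the (block-)triangular structure of $\Lambda$ as a function of $(\Omega,dZ,dT)$, with the commutator $\Omega Z-Z\Omega$ supplying the diagonal entries. The paper realizes this triangularity by computing the wedge iteratively from the bottom row upward (so tail terms $\breve w_{i,k}\breve t_{k,j}$, $\breve t_{i,k}\breve w_{k,j}$ and the diagonal sums ``drop out'' because those one-forms have already appeared), while you make the same cancellation explicit as vanishing above-diagonal blocks of a single coefficient matrix $\mathcal J$ under a slightly different lexicographic ordering; the content is the same.
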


\begin{rem}
1) Equation (\ref{JacobianDeterminantalFormulaQuaternion1}) can be understood as an analogue of the complex Ginibre measure decomposition, see equation (6.3.5)
of Hough,  Krishnapur,  Peres, and Vir\'ag  \cite{Hough}.\\
2) Here (and in subsequent calculations leading to the equation in the statement of Theorem \ref{TheoremQuaternionJacobianDeterminantalFormula},
and in statement of Theorem \ref{TheoremProductQuaternionJacobiDeterminantalFormula}) we omit combinatorial constants before differential forms.
In what follows we will restore normalization constants for the relevant probability measures using the usual normalization condition.\\
3) Our proof of Theorem \ref{TheoremQuaternionJacobianDeterminantalFormula} can be seen as an extension of that in Section 6.3 of
Hough,  Krishnapur,  Peres, and Vir\'ag  \cite{Hough} to the case of matrices from $\Mat\left(\Hq',N\times N\right)$.
\end{rem}

\begin{proof}
Proposition \ref{PropositionTQ} implies $\underline{dM}=\underline{\Lambda}$.
The explicit formula for $\breve{\lambda}_{i,j}$ is
\begin{equation}\label{brevelambda}
\breve{\lambda}_{i,j}=\sum_{k=1}^{j}\breve{w}_{i,k}\breve{s}_{k,j}-\sum_{k=i}^N\breve{s}_{i,k}\breve{w}_{k,j}+d\breve s_{i,j},
\end{equation}
where $d\breve s_{i,j}=\breve 0$ for $i>j$. Recall that the quaternion one-forms $\breve{w}_{i,j}$ are the matrix elements of $\Omega$. We emphasise that $\breve{w}_{i,j}^*=\breve{w}_{j,i}$ for $(i<j)$ and that the diagonal elements $\breve{w}_{i,i}$ $(i=1,\ldots,N)$ have the special structure
\[
\breve{w}_{i,i}=\begin{pmatrix} 0 & -\overline w_{i,i}^{(2)} \\ w_{i,i}^{(2)} & 0 \end{pmatrix}.
\]
Now we begin to investigate how the terms $\underline{\breve{\lambda}_{i,j}}$
contribute to the wedge product $\underline{\Lambda}$, for different choices of indices $i$ and $ j$.
We start with $i=N$ and $j=1$, and find that
\[
\breve{\lambda}_{N,1}=\breve{w}_{N,1}\breve{s}_{1,1}-\breve{s}_{N,N}\breve{w}_{N,1}=\breve{w}_{N,1}\breve{z}_1-\breve{z}_N\breve{w}_{N,1}.
\]
This gives
\begin{equation}\label{N1}
\underline{\breve{\lambda}_{N,1}}=|z_1-z_N|^2|z_1-\overline{z}_N|^2\underline{\breve{w}_{N,1}}.
\end{equation}
For $\breve{\lambda}_{N,2}$ we find
\[
\breve{\lambda}_{N,2}=\breve{w}_{N,2}\breve{z}_2-\breve{z}_N\breve{w}_{N,2}+\breve{w}_{N,1}\breve{t}_{1,2}.
\]
Taking into account (\ref{N1}) it is not hard to see that the term $\breve{w}_{N,1}\breve{t}_{1,2}$ does not contribute to the wedge product
$\underline{\breve{\lambda}_{N,1}}\wedge\underline{\breve{\lambda}_{N,2}}$
(the one-forms consisting of $\breve{w}_{N,1}$ already have appeared in $\underline{\breve{\lambda}_{N,1}}$).
Thus we obtain the formula
\begin{equation}
\underline{\breve{\lambda}_{N,1}}\wedge\underline{\breve{\lambda}_{N,2}}=|z_1-z_N|^2|z_1-\overline{z}_N|^2
|z_2-z_N|^2|z_2-\overline{z}_N|^2\underline{\breve{\omega}_{N,1}}\wedge\underline{\breve{\omega}_{N,2}}.
\nonumber
\end{equation}
Proceeding in this way we find
\begin{equation}\label{wedgeN1NN-1}
\underline{\breve{\lambda}_{N,1}}\wedge\underline{\breve{\lambda}_{N,2}}\wedge\cdots\wedge\underline{\breve{\lambda}_{N,N-1}}
=\prod\limits_{j=1}^{N-1}|z_j-z_N|^2|z_j-\bar{z}_N|^2\underline{\breve{w}_{N,1}}\wedge\underline{\breve{w}_{N,2}}
\wedge\cdots\wedge\underline{\breve{w}_{N,N-1}}.
\end{equation}
In addition, we have
\[
\breve{\lambda}_{N,N}=\breve{w}_{N,N}\breve{z}_N-\breve{z}_N\breve{w}_{N,N}+\sum\limits_{k=1}^{N-1}\breve{w}_{N,k}\breve{t}_{N,k}
+d\breve{z}_N.
\]
The sum in the expression above does not contribute to the wedge product
$\underline{\breve{\lambda}_{N,1}}\wedge\underline{\breve{\lambda}_{N,2}}\wedge\cdots\wedge\underline{\breve{\lambda}_{N,N}}$,
since for each $k$ $(1\leq k\leq N-1)$ the one-forms consisting of $\breve{w}_{N,k}$ have already appeared
in the wedge  product  $\underline{\breve{\lambda}_{N,1}}\wedge\underline{\breve{\lambda}_{N,2}}\wedge\ldots\wedge\underline{\breve{\lambda}_{N,N-1}}$,
see equation (\ref{wedgeN1NN-1}).
Moreover, we note that
\[
\breve{w}_{N,N}\breve{z}_N-\breve{z}_N\breve{w}_{N,N}+d\breve{z}_N=
\begin{pmatrix}
                                                                         dz_N & (z_N-\bar{z}_N)\overline w_{N,N}^{(2)} \\
                                                                         (z_N-\bar{z}_N)w_{N,N}^{(2)} & d\bar{z}_N
\end{pmatrix},
\]
so
\[
\underline{\breve{w}_{N,N}\breve{z}_N-\breve{z}_N\breve{w}_{N,N}+d\breve{z}_N}=|z_N-\bar{z}_N|^2dz_N\wedge d\overline{z}_N\wedge
w_{N,N}^{(2)}\wedge \overline w_{N,N}^{(2)}.
\]
This gives
\[
\bigwedge_{j=1}^N\underline{\breve{\lambda}_{N,j}}
=|z_N-\bar{z}_N|^2\prod\limits_{j=1}^{N-1}|z_j-z_N|^2|z_j-\bar{z}_N|^2
dz_N\wedge d\bar{z}_N
\wedge w_{N,N}^{(2)}\wedge\overline w_{N,N}^{(2)}
\bigwedge_{j=1}^{N-1}\underline{\breve{w}_{N,j}}.
\]
Replace $N$ by $N-1$ in the equation just written above. Then replace $N-1$ by $N-2$,  $N-2$ by $N-3$ and so on, and make the wedge products
of the resulting expressions. In this way, one obtains
\begin{equation}\label{lambdaijform}
\underset{i\geq j}{\bigwedge}\underline{\breve{\lambda}_{i,j}}
=\prod\limits_{j<i}|z_j-z_i|^2|z_j-\bar{z}_i|^2\prod\limits_{j=1}^N|z_j-\overline{z}_j|^2
\underline{dZ}\wedge \underline\Omega.
\end{equation}
We need to extend the formula for $\bigwedge_{i\geq j}\underline{\breve{\lambda}_{i,j}}$ to that for the whole wedge product
$\underline\Lambda=\bigwedge_{i,j}\underline{\breve{\lambda}_{i,j}}$. For this purpose we recall the structure of $\Lambda$ given by~\eqref{LambdaDef}, since all independent elements of $\Omega$ already have appeared in~\eqref{lambdaijform}, the extension is trivial and we obtain the stated theorem.
\end{proof}

\subsection{Products of  matrices and change of measure}

Let $M_1, \ldots, M_n$ be $n$ matrices taken from $\Mat\left(\Hq', N\times N\right)$. Recall that these matrices can be represented as
\[
M_a=\big[(\breve{M}_a)_{i,j}\big]_{i,j=1}^N,
\qquad
(\breve{M}_a)_{i,j}=
\begin{pmatrix}
(M_a)_{i,j}^{(1)} & -\overline{(M_a)}_{i,j}^{(2)} \\
(M_a)_{i,j}^{(2)} & \overline{(M_a)}_{i,j}^{(1)}
\end{pmatrix}
\]
Here $(M_a)_{i,j}^{(1)}$ and $(M_a)_{i,j}^{(2)}$ are complex numbers. Denote by $P$ the product,
\[
P=M_1M_2\cdots M_n.
\]
The integration of a function of $P$ with respect to the Lebesgue measure is the same as integrating against the following $4nN^2$
differential form
\[
\bigwedge_a\underline{dM_a}=\bigwedge_a\bigwedge_{i,j}\underline{(d\breve M_a)_{i,j}}.
\]
We apply the generalised Schur decomposition for the quaternion matrices $M_1, \ldots, M_n$ (see Corollary \ref{CorrollaryGeneralizedSchurDecompositionQuaternionMatrices}),
and obtain
\begin{equation*}
M_a=U_a(Z_a+T_a)U_{a+1}^*,
\end{equation*}
where for each $a\ (1\leq a\leq n)$ the matrix $U_a$ is a unitary symplectic matrix, $U_{n+1}=U_1$ and the matrices $Z_a$ and $T_a$ are defined by equations
(\ref{Zquaternion}) and (\ref{Tquaternion}) correspondingly.
We have
\[
dM_a=U_a(\Omega_aS_a-S_a\Omega_{a+1}+dS_a)U_{a+1}^*.
\]
In the formula above $\Omega_a$ and $S_a$ are defined in analogy to the single matrix case. In particular, we distinguish between the diagonal and strictly upper triangular part of $S_a$, see Remark~\ref{ZplusT}. Note also that $\Omega_{n+1}=\Omega_1$. Furthermore, we will use the abbreviations
\[
\Lambda_a\equiv\Omega_aS_a-S_a\Omega_{a+1}+dS_a \qquad\text{for}\qquad a=1,\ldots,n
\]
and
\begin{equation}\label{zshort}
(z)_i\equiv \prod_{a=1}^n(z_a)_i \qquad\text{for}\qquad i=1,\ldots,N.
\end{equation}

\begin{thm}
\label{TheoremProductQuaternionJacobiDeterminantalFormula}
The Jacobian determinantal formula for the product of $n$ quaternion Ginibre matrices is
\begin{equation}\label{ProductQuaternionJacobiDeterminantalFormula}
\bigwedge_a\underline{dM_a}
=\prod_{j<i}|(z)_j-(z)_i|^2 |(z)_j-\overline{(z)}_i|^2 \prod_i|(z)_i-\overline{(z)}_i|^2
\bigwedge_a\big(\,\underline{dZ_a} \wedge \underline{dT_a} \wedge \underline{\Omega_a}\,\big).
\end{equation}
\end{thm}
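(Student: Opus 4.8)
The plan is to follow the single-matrix computation in the proof of Theorem~\ref{TheoremQuaternionJacobianDeterminantalFormula} essentially verbatim; the one genuinely new feature is that the auxiliary one-forms $\Omega_a$ now enter \emph{cyclically}. First, applying Proposition~\ref{PropositionTQ} to each factor $M_a=U_a\Lambda_a U_{a+1}^{*}$ (with $U_a,U_{a+1}$ unitary symplectic) reduces the claim to
\[
\bigwedge_a\underline{dM_a}=\bigwedge_a\underline{\Lambda_a}=\bigwedge_a\bigwedge_{i,j}\underline{\breve\lambda_{i,j}^{(a)}},
\]
and, exactly as in~\eqref{brevelambda}, the $2\times 2$ blocks are
\[
\breve\lambda_{i,j}^{(a)}=\sum_{k=1}^{j}\breve w_{i,k}^{(a)}\breve s_{k,j}^{(a)}-\sum_{k=i}^{N}\breve s_{i,k}^{(a)}\breve w_{k,j}^{(a+1)}+d\breve s_{i,j}^{(a)},
\]
with $d\breve s_{i,j}^{(a)}=\breve 0$ for $i>j$, where $\breve w_{i,j}^{(a)}$ are the blocks of $\Omega_a$, the diagonal block $\breve z_i^{(a)}$ of $Z_a$ is $\diag\!\big((z_a)_i,\overline{(z_a)_i}\big)$, and $\breve w_{k,j}^{(n+1)}:=\breve w_{k,j}^{(1)}$. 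I would then evaluate $\bigwedge_a\underline{\Lambda_a}$ by running through the block positions $(i,j)$ with $i$ decreasing and, for fixed $i$, $j$ increasing (so all strictly-lower and all diagonal positions come first, all strictly-upper positions last), but — and this is the key modification — wedging over \emph{all} superscripts $a=1,\dots,n$ simultaneously at each fixed position $(i,j)$.

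Next I would show that, modulo one-forms that already appeared at an earlier position, in $\breve\lambda_{i,j}^{(a)}$ only the $k=j$ term of the first sum and the $k=i$ term of the second sum survive. For $i>j$ these are $\breve w_{i,j}^{(a)}\breve z_j^{(a)}$ and $\breve z_i^{(a)}\breve w_{i,j}^{(a+1)}$ (the remaining terms involve $\breve w_{i,k}^{(a)}$ with $k<j$, used up earlier in the same row, and $\breve w_{k,j}^{(a+1)}$ with $k>i$, used up in an earlier row); for $i=j$ the same holds together with the genuinely new form $d\breve z_i^{(a)}$; and for $i<j$ the sums only contain blocks $\breve w_{i,k}^{(a)}$ and $\breve w_{k,j}^{(a+1)}$, which by skew-Hermiticity of $\Omega_a$ and $\Omega_{a+1}$ are (block) conjugates of strictly-lower blocks that appeared at earlier lower positions, so the only new form in $\breve\lambda_{i,j}^{(a)}$ is then $d\breve t_{i,j}^{(a)}$. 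Hence every strictly-upper position contributes $\underline{d\breve t_{i,j}^{(a)}}$ with coefficient $1$, building up $\bigwedge_a\underline{dT_a}$ with no Jacobian factor, and it remains to analyse the lower and diagonal positions.

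The crux is a single lower-or-diagonal position $(i,j)$, $i\ge j$. Modulo earlier forms, $\big(\breve\lambda_{i,j}^{(a)}\big)_{a=1}^{n}$ is once more, row by row, a quaternion-valued one-form, and its four scalar components decouple into four $n\times n$ systems in the scalar components of $\breve w_{i,j}^{(1)},\dots,\breve w_{i,j}^{(n)}$: one with diagonal $\big((z_a)_j\big)_a$ and $(a,a+1)$-entries $\big(-(z_a)_i\big)_a$ (indices modulo $n$), one with diagonal $\big(\overline{(z_a)_j}\big)_a$ and $(a,a+1)$-entries $\big(-\overline{(z_a)_i}\big)_a$, and their two complex conjugates. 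A cofactor expansion along the last row shows such a matrix has determinant $\prod_a d_a+(-1)^{n+1}\prod_a e_a$, giving here the four determinants $(z)_j-(z)_i$, $(z)_j-\overline{(z)_i}$ and their conjugates, with $(z)_i=\prod_a(z_a)_i$ as in~\eqref{zshort}. Therefore $\bigwedge_a\underline{\breve\lambda_{i,j}^{(a)}}$ equals $\bigwedge_a\underline{\breve w_{i,j}^{(a)}}$ times the product of these four determinants, i.e. times $|(z)_j-(z)_i|^2\,|(z)_j-\overline{(z)_i}|^2$ for $i>j$, and, using that $(z)_i-\overline{(z)_i}$ is purely imaginary, times $|(z)_i-\overline{(z)_i}|^2$ for $i=j$ (where, by the special structure of the diagonal blocks, $\bigwedge_a\underline{\breve w_{i,i}^{(a)}}$ reduces to $\bigwedge_a\big(dz_i^{(a)}\wedge d\overline{z_i^{(a)}}\wedge w_{i,i}^{(a),(2)}\wedge\overline{w_{i,i}^{(a),(2)}}\big)$). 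Multiplying the contributions of all positions and noting that $\underline{\Omega_a}$, $\underline{dZ_a}$, $\underline{dT_a}$ collect precisely the strictly-lower $\breve w_{i,j}^{(a)}$'s, the $dz_i^{(a)}$'s together with the diagonal forms $w_{i,i}^{(a),(2)}$, and the $d\breve t_{i,j}^{(a)}$'s respectively, one obtains~\eqref{ProductQuaternionJacobiDeterminantalFormula} up to the combinatorial sign which we suppress throughout. I expect the main obstacle to be the bookkeeping in the two middle steps — verifying that, after discarding the already-used forms, the decoupling into the four cyclic systems is exact, and that the chosen processing order genuinely legitimises every discarded term; the cyclic-determinant evaluation and the remaining algebra are routine.
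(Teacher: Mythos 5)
Your proposal follows the paper's proof essentially verbatim: reduce to $\bigwedge_a\underline{\Lambda_a}$ via Proposition~\ref{PropositionTQ}, process positions row by row with $i$ decreasing and $j$ increasing, observe that at each position only the $k=j$ and $k=i$ terms survive modulo earlier forms, wedge over all $a$ simultaneously, and evaluate the resulting $n\times n$ cyclic determinant $\prod_a\alpha_a-\prod_a\beta_a$; the strictly upper positions and the diagonal ones are treated exactly as in the paper. The one thing to correct is a slip in your description of the four scalar systems at a lower position $(i,j)$: since $\breve z_j^{(a)}$ is diagonal and multiplies $\breve w_{i,j}^{(a)}$ from the \emph{right} while $\breve z_i^{(a)}$ multiplies $\breve w_{i,j}^{(a+1)}$ from the \emph{left}, the two independent equations are $\lambda^{(1)}_a=(z_a)_j\,w^{(1)}_a-(z_a)_i\,w^{(1)}_{a+1}$ and $\lambda^{(2)}_a=(z_a)_j\,w^{(2)}_a-\overline{(z_a)_i}\,w^{(2)}_{a+1}$, so the second $n\times n$ system has diagonal $\bigl((z_a)_j\bigr)_a$ and off-diagonal $\bigl(-\overline{(z_a)_i}\bigr)_a$ (not both entries conjugated, as you wrote); its determinant $(z)_j-\overline{(z)}_i$ is what produces the mixed factor. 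Your final list of determinants and the resulting Jacobian factor are correct, so this is only a typo in the intermediate step, but it is precisely the mixed conjugation that distinguishes the quaternion computation from a naive doubling of the complex one.
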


\begin{rem}
1) Formula (\ref{ProductQuaternionJacobiDeterminantalFormula}) is a generalization of formula (\ref{JacobianDeterminantalFormulaQuaternion1})
to the case of a product of $n$ matrices from $\Mat\left(\Hq',N\times N\right)$.\\
2) Since we omit combinatorial constants before differential forms, we do not need to fix the order
in the wedge products (change in order results in the change of the overall sign).
\end{rem}

\begin{proof}
Proposition \ref{PropositionTQ} implies
$
\bigwedge_a\underline{dM_a}
=\bigwedge_a \underline{\Lambda_a}
$,
thus it is enough  to find the Jacobian determinant for the change of variables from $\Lambda_a$ to $\Omega_a$ and $dS_a$.
Explicit formulae for the matrix entries $(\breve{\lambda}_a)_{i,j}$ can be written as follows
\begin{equation}\label{EQ1}
(\breve{\lambda}_a)_{i,j}=\sum_{k=1}^{j}(\breve{w}_a)_{i,k}(\breve{s}_a)_{k,j}-\sum_{k=i}^N(\breve{s}_a)_{i,k}(\breve{w}_{a+1})_{k,j}+(d\breve{s}_a)_{i,j},
\qquad a=1,\ldots,n,
\end{equation}
where $(d\breve{s}_a)_{i,j}=\breve 0$ for $i>j$.
Recall that $(\breve{w}_{n+1})_{k,j}=(\breve{w}_1)_{k,j}$.
In particular, formula \eqref{EQ1} gives
\begin{equation}\label{LAMDAN1}
(\breve{\lambda}_a)_{N,1}=(\breve{w}_a)_{N,1}(\breve{s}_a)_{1,1}-(\breve{s}_a)_{N,N}(\breve{w}_{a+1})_{N,1},\\
\end{equation}
For all $a=1,\ldots, n$ the matrices $(\breve{s}_a)_{1,1}$ and $(\breve{s}_a)_{N,N}$ are diagonal, namely
\[
(\breve{s}_a)_{1,1}=
\begin{pmatrix}
                           (z_a)_1 & 0 \\
                           0 & \overline{(z_a)}_1
\end{pmatrix}
\qquad\text{and}\qquad
(\breve{s}_a)_{N,N}=
\begin{pmatrix}
                           (z_a)_N & 0 \\
                           0 & \overline{(z_a)}_N
\end{pmatrix}.
\]
Taking this into account we see that equations (\ref{LAMDAN1}) can be explicitly rewritten as
\[
(\lambda_a)_{N,1}^{(1)}=(z_a)_1(w_a)_{N,1}^{(1)}-(z_a)_N(w_{a+1})_{N,1}^{(1)}
\quad\text{and}\quad
(\lambda_a)_{N,1}^{(2)}=(z_a)_1(w_a)_{N,1}^{(2)}-\overline{(z_a)_N}(w_{a+1})_{N,1}^{(2)}.
\]
In matrix form, we have
\[
\begin{pmatrix}
        (\lambda_1)_{N,1}^{(1)} \\
        (\lambda_2)_{N,1}^{(1)} \\
        \vdots \\
        (\lambda_n)_{N,1}^{(1)}
\end{pmatrix}
=\begin{pmatrix}
                (z_1)_1 & -(z_1)_N & 0 & \ldots & 0 \\
                0 & (z_2)_1 & -(z_2)_N & \ldots & 0 \\
                \vdots & \vdots & \ddots & \ldots & -(z_{n-1})_N \\
                -(z_n)_N & 0 & 0 & \ldots & (z_n)_1
\end{pmatrix}
\begin{pmatrix}
        (w_1)_{N,1}^{(1)} \\
        (w_2)_{N,1}^{(1)} \\
        \vdots \\
        (w_n)_{N,1}^{(1)}
\end{pmatrix}
\]
and
\[
\begin{pmatrix}
        (\lambda_1)_{N,1}^{(2)} \\
        (\lambda_2)_{N,1}^{(2)} \\
        \vdots \\
        (\lambda_n)_{N,1}^{(2)}
\end{pmatrix}
=\begin{pmatrix}
                (z_1)_1 & -\overline{(z_1)}_N & 0 & \ldots & 0 \\
                0 & (z_2)_1 & -\overline{(z_2)}_N & \ldots & 0 \\
                \vdots & \vdots & \ddots & \ldots & -\overline{(z_{n-1})}_N \\
                -\overline{(z_n)}_N & 0 & 0 & \ldots & (z_n)_1
\end{pmatrix}
\begin{pmatrix}
        (w_1)_{N,1}^{(2)} \\
        (w_2)_{N,1}^{(2)} \\
        \vdots \\
        (w_n)_{N,1}^{(2)}
\end{pmatrix}
\]
Using these formulae, and the fact that
\[
\left|\det\left(\begin{array}{ccccc}
            \alpha_1 & -\beta_1 & 0 & \ldots & 0 \\
            0 & \alpha_2 & -\beta_2 & \ldots & 0 \\
            \vdots & \vdots & \ddots & \ldots & -\beta_{n-1} \\
            -\beta_n & 0 & 0 & \ldots & \alpha_n
          \end{array}
\right)\right|^2=\left|\alpha_1\alpha_2\ldots\alpha_n-\beta_1\beta_2\ldots\beta_n\right|^2,
\]
we obtain
\begin{equation}\label{ProductN1}
\bigwedge_a\underline{(\breve{\lambda}_a)_{N,1}}=
|(z)_1-(z)_N|^2 |(z)_1-\overline{(z)}_N|^2
\bigwedge_a\underline{(\breve{w}_a)_{N,1}},
\end{equation}
where we use the notation introduced in~\eqref{zshort}.

Next step is to consider $i=N$ and $j=2$. Here, we have
\begin{equation}\label{LambdaN2}
(\breve{\lambda}_a)_{N,2}=(\breve{w}_a)_{N,2}(\breve{s}_a)_{2,2}-(\breve{s}_a)_{N,N}(\breve{w}_{a+1})_{N,2}+(\breve{w}_a)_{N,1}(\breve{s}_a)_{1,2}.
\end{equation}
The last term on the right-hand side of equation \eqref{LambdaN2} does not contribute to the wedge product
$
\bigwedge_a\underline{(\breve{\lambda}_a)_{N,1}}\bigwedge_a\underline{(\breve{\lambda}_a)_{N,2}},
$
since the one-forms consisting of $(\breve{w}_a)_{N,1}$ already have appeared in $\bigwedge_a\underline{(\breve{\lambda}_a)_{N,1}}$.
It follows that~\eqref{LambdaN2} contributes to the total wedge product with a term analogue to~\eqref{ProductN1}. 
Repeating these considerations we obtain
\begin{equation}\label{LNN0}
\bigwedge_{j=1}^{N-1}\bigwedge_a\underline{(\breve{\lambda}_a)_{N,j}}
=\prod_{j=1}^{N-1}|(z)_j-(z)_N |^2  |(z)_j-\overline{(z)}_N|^2
\bigwedge\limits_{j=1}^{N-1}\bigwedge_a\underline{(\breve{w}_a)_{N,j}}.
\end{equation}
Now, let us compute the wedge product 
$
 \bigwedge_{j=1}^{N}\bigwedge_a\underline{(\breve{\lambda}_a)_{N,j}}
$. 
The formula for $(\breve{\lambda}_a)_{N,N}$ is
\begin{equation}\label{LNN}
(\breve{\lambda}_a)_{N,N}=(\breve{w}_a)_{N,N}(\breve{s}_a)_{N,N}-(\breve{s}_a)_{N,N}(\breve{w}_{a+1})_{N,N}+\sum\limits_{k=1}^{N-1}(\breve{w}_a)_{N,k}(\breve{s}_a)_{k,N}+(d\breve{s}_a)_{N,N}.
\end{equation}
The sum on the right-hand side of equation (\ref{LNN}) does not contribute to the wedge product, since all $(\breve{w}_a)_{N,k}$ already appear in~\eqref{LNN0}. Thus we only need to find the contribution of the first three terms in the right-hand side of equation (\ref{LNN}) to the wedge product. The procedure is the same as for a single matrix and we find
\begin{multline}\label{lambdaNN}
\bigwedge_a\underline{(\breve{w}_a)_{N,N}(\breve{z}_a)_N-(\breve{z}_a)_N(\breve{w}_{a+1})_{N,N}+(d\breve{z}_a)_N}\\
=|(z)_N-\overline{(z)}_N|^2
\bigwedge_a\Big((dz_a)_N\wedge \overline{(dz_a)_N}\wedge (w_a)_{N,N}^{(2)}\wedge \overline{(w_a)_{N,N}^{(2)}}\Big).
\end{multline}
Combining~\eqref{LNN0} and~\eqref{lambdaNN}, we obtain
\begin{multline}\label{z1}
\bigwedge_{j=1}^{N}\bigwedge_a\underline{(\breve{\lambda}_a)_{N,j}}=
|(z)_N-\overline{(z)}_N|^2
\prod_{j=1}^{N-1}|(z)_j-(z)_N|^2  |(z)_j-\overline{(z)}_N|^2
\\  \times
\bigwedge_a\Big((dz_a)_N\wedge \overline{(dz_a)}_N\wedge (w_a)_{N,N}^{(2)}\wedge \overline{(w_a)}_{N,N}^{(2)}\Big)
\bigwedge\limits_{j=1}^{N-1}\bigwedge_a\underline{(\breve{w}_a)_{N,j}}.
\end{multline}
Replacing $N$ in equation (\ref{z1})  first by $N-1$, then by $N-2$ and so on we see that equation (\ref{z1}) can be extended to
\begin{equation}\label{z2}
\bigwedge_{i\geq j}\bigwedge_a\underline{(\breve{\lambda}_a)_{i,j}}=
\prod_{i=1}^N|(z)_i-\overline{(z)}_i|^2
\prod_{j<i}|(z)_j-(z)_i|^2  |(z)_j-\overline{(z)}_i|^2 \bigwedge_a\big(\,\underline{dZ_a}\wedge \underline{\Omega_a}\,\big).
\end{equation}
It remains to compute the wedge product between $\bigwedge_{i\geq j}(\bigwedge_a\underline{\breve{\Lambda}_{i,j}(a)})$
and $\bigwedge_{i< j}(\bigwedge_a\underline{\breve{\Lambda}_{i,j}(a)})$.
Similar to the single matrix case we note that all independent quaternion one-forms $(\breve w_a)_{i,j}$ ($a=1,\ldots,n$ and $1\leq j\leq i\leq N$) have already appeared in~\eqref{z2}. With this observation, the wedge product becomes trivial and the theorem follows. 
\end{proof}

\subsection{Proof of Theorem \ref{PropositionQuaternionJointDensityOfEigenvalues}}

Let $M_1,\ldots, M_n\in\Mat(\Hq', N\times N)$ be independent random matrices each taken from the induced quaternion Ginibre ensemble with parameters $m_1, \ldots, m_n$. Consider the matrix $P_n^{\quaternion}$ defined in Section \ref{SectionDefinitionOfPquaternion}. We use the generalised Schur decomposition for matrices from $\Mat\left(\Hq', N\times N\right)$, see Corollary \ref{CorrollaryGeneralizedSchurDecompositionQuaternionMatrices}. Thus we assume that $M_1,\ldots, M_n$ are written as in the statement of Corollary \ref{CorrollaryGeneralizedSchurDecompositionQuaternionMatrices}, and obtain
\[
\Tr M_a^*M_a=\Tr Z^*_aZ_a+\Tr T^*_aT_a,
\]
where $Z_a$ and $T_a$ are defined by equations \eqref{Zquaternion} and \eqref{Tquaternion}. Taking this into account, and using the Jacobian determinantal
formula for the product of quaternion matrices in Theorem \ref{TheoremProductQuaternionJacobiDeterminantalFormula} we obtain that the density of
$P_n^{\quaternion}$ is proportional to
\begin{multline*}
\e^{-\frac{1}{2}\sum_{a=1}^n\Tr Z^*_aZ_a}
\prod_{i=1}^N\prod_{a=1}^n|(z_a)_i|^{2m_a}   
\prod_{i=1}^N|(z)_i-\overline{(z)}_i|^2 \prod_{j<i}|(z)_j-(z)_i|^2  |(z)_j-\overline{(z)}_i|^2
\bigwedge_a\underline{dZ_a} \\
\times \e^{-\frac{1}{2}\sum_{a=1}^n\Tr T^*_aT_a}
\bigwedge_a\big(\,\underline{dT_a}\wedge \underline{\Omega_a}\,\big).
\end{multline*}
It follows that the statistical properties of the eigenvalues of the product matrix do not depend on the unitary matrices $U_a\ (a=1,\ldots,n)$ nor on the strictly upper triangular matrices $T_a\ (a_1,\ldots,n)$. For this reason, the integration over these variables contributes only to the normalization constant.
Now the same arguments as in Section \ref{SectionProofTheoremJointDensity} lead us to formula (\ref{JointDensityOfEigenvaluesQ})
for the joint density of eigenvalues of $P_n^{\quaternion}$. Theorem \ref{PropositionQuaternionJointDensityOfEigenvalues} is proved.
\qed

\section{The correlation kernel for the generalised quaternion Ginibre ensemble}

In this  Section we prove Theorem \ref{TheoremMatrixValued Kernel}. Let $f(\zeta)$ be some finitely supported function defined on $\C$. Assume that $\zeta_1, \ldots, \zeta_N$
are complex random variables with the joint density given by equation (\ref{JointDensityOfEigenvaluesQ}).
Starting from equation (\ref{JointDensityOfEigenvaluesQ}), and using standard methods of Random Matrix Theory
(see Tracy and Widom \cite{tracy}) we find
\begin{equation}
\mathbb{E}\bigg(\prod\limits_{j=1}^N\left(1+f(\zeta_j)\right)\bigg)
=\frac{\Pf\left(\int(\zeta^j\bar\zeta^k-\zeta^k\bar\zeta^j)(\zeta-\bar\zeta)
(1+f(\zeta))w_n^{(m_1,\ldots,m_n)}(\zeta)d^2\zeta\right)_{0\leq j,k\leq 2N-1}}{\Pf\left(\int(\zeta^j\bar\zeta^k-\zeta^k\bar\zeta^j)(\zeta-\bar\zeta)
w_n^{(m_1,\ldots,m_n)}(\zeta)d^2\zeta\right)_{0\leq j,k\leq 2N-1}}.
\nonumber
\end{equation}
Introduce the matrix $Q=\left(Q_{j,k}\right)_{0\leq j,k\leq 2N-1}$ by
\[
Q_{j,k}=\int(\zeta^j\bar\zeta^k-\zeta^k\bar\zeta^j)(\zeta-\bar\zeta)
w_n^{(m_1,\ldots,m_n)}(\zeta)d^2\zeta.
\]
By the same argument as in Tracy and Widom \cite{tracy}, \S 8, we find
\begin{equation}
\mathbb{K}_N^{(m_1,\ldots,m_n)}(z,\zeta)=(\zeta-\bar\zeta)w_N^{(m_1,\ldots,m_n)}(\zeta)
\left(\begin{array}{cc}
        \sum\limits_{j,k=0}^{2N-1}\bar z^j\left(Q^{-1}\right)_{j,k}\zeta^k & -\sum\limits_{j,k=0}^{2N-1}\bar z^j\left(Q^{-1}\right)_{j,k}\bar\zeta^k \\
        \sum\limits_{j,k=0}^{2N-1}z^j\left(Q^{-1}\right)_{j,k}\zeta^k & -\sum\limits_{j,k=0}^{2N-1} z^j\left(Q^{-1}\right)_{j,k}\bar\zeta^k
      \end{array}
\right).
\nonumber
\end{equation}
The matrix elements $Q_{j,k}$ can be written as
\[
Q_{j,k}=2h_{j+1}^{(m_1,\ldots,m_n)}\delta_{k,j+1}-2h_k^{(m_1,\ldots,m_n)}\delta_{k+1,j},
\]
where $h_k^{(m_1,\ldots,m_n)}$ are defined by equation (\ref{Hk}). From this representation  it is not hard to see that the matrix $Q^{-1}$ is an antisymmetric matrix defined by the following conditions
\begin{itemize}
  \item $\left(Q^{-1}\right)_{2i,2j+1}=\displaystyle{\frac{2^{n(j-i)}}{2\pi^n}\prod\limits_{a=1}^n
  \frac{\Gamma\left(\frac{m_a}{2}+j+1\right)}{\Gamma\left(m_a+2j+2\right)\Gamma\left(\frac{m_a}{2}+i+1\right)},\; 0\leq i\leq j\leq N-1}$,
  \item $\left(Q^{-1}\right)_{2j+1,2i}=-\left(Q^{-1}\right)_{2i,2j+1},\; 0\leq i\leq j\leq N-1$,
  \item All other matrix elements are equal to zero.
\end{itemize}
Using the explicit formulae just written above for the matrix elements of $Q^{-1}$ we get the formula for the correlation kernel in the statement of Theorem \ref{TheoremMatrixValued Kernel}. \qed

\section{Hole and overcrowding estimates for the generalised quaternion ensemble}
\label{sec:holes-beta4}

In this Section we complete the task of extending the results obtained for products of complex matrices to the case of products of matrices from the induced quaternion Ginibre ensemble. We begin from the following due to Rider~\cite{Rider}:
\begin{prop}\label{PropositionIntegralI}
Set
$
z_i=r_i\e^{i\theta}\ (i=1,\ldots, N),
$
then we have
\begin{equation}\label{IntegralI}
\int_0^{2\pi}\ldots\int_0^{2\pi}\prod\limits_{k=1}^N(\overline{z}_k-z_k)
V(z_1,\overline{z}_1,\ldots,z_N,\overline{z}_N)d\theta_1\ldots d\theta_N=(4\pi)^N\per\left[r_i^{4j-2}\right]_{i,j=1}^N,
\end{equation}
where
\begin{equation}\label{Vandermond}
V(x_1,\ldots,x_N)=\prod\limits_{1\leq i<j\leq N}(x_i-x_j)=(-1)^{\frac{N(N-1)}{2}}\det\left(x_j^{i-1}\right)_{i,j=1}^N.
\end{equation}
\end{prop}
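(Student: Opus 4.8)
\medskip
\noindent\textbf{Proof sketch (proposal).}
The plan is to evaluate the angular integral by the same device as in the proof of Theorem~\ref{TheoremIndependence}: expand the Vandermonde into monomials and use orthogonality of $\{\e^{ik\theta}\}_{k\in\Z}$ on $[0,2\pi]$, but now carrying along the extra factors $\bar z_k-z_k$. First I would relabel the $2N$ arguments as $w_{2k-1}=z_k$ and $w_{2k}=\bar z_k$ for $k=1,\dots,N$. Since $\binom{2N}{2}=N(2N-1)\equiv N\pmod 2$, formula (\ref{Vandermond}) gives
\[
V(z_1,\bar z_1,\dots,z_N,\bar z_N)=(-1)^N\det\bigl(w_\ell^{\,m-1}\bigr)_{\ell,m=1}^{2N}=(-1)^N\!\!\sum_{\sigma\in S(2N)}\!(-1)^{\sgn(\sigma)}\prod_{\ell=1}^{2N}w_\ell^{\,\sigma(\ell)-1},
\]
so that the $k$-th pair of variables contributes to the integrand the factor $(\bar z_k-z_k)\,z_k^{\,p}\bar z_k^{\,q}$ with $p=\sigma(2k-1)-1$ and $q=\sigma(2k)-1$. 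Writing $z_k=r_k\e^{i\theta_k}$, a direct computation of the elementary integral gives
\[
\int_0^{2\pi}(\bar z_k-z_k)\,z_k^{\,p}\bar z_k^{\,q}\,d\theta_k=2\pi\,r_k^{\,p+q+1}\bigl(\delta_{p,q+1}-\delta_{p,q-1}\bigr),
\]
so only permutations $\sigma$ with $|\sigma(2k-1)-\sigma(2k)|=1$ for every $k$ survive the integration.

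The combinatorial heart is to classify these surviving $\sigma$ and keep track of the signs. Since $\sigma$ is a bijection of $\{1,\dots,2N\}$, and the only partition of this set into consecutive pairs $\{m,m+1\}$ is $\{1,2\},\{3,4\},\dots,\{2N-1,2N\}$ (start from the element $1$ and induct), every surviving $\sigma$ is encoded by a permutation $\tau\in S(N)$ — matching the position-block $\{2k-1,2k\}$ to the value-block $\{2\tau(k)-1,2\tau(k)\}$ — together with an orientation $\epsilon\in\{0,1\}^N$, where $\epsilon_k=0$ means $\sigma(2k-1)=2\tau(k)-1$ and $\epsilon_k=1$ means $\sigma(2k-1)=2\tau(k)$. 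The unflipped permutation ($\epsilon=0$) is a block permutation with all blocks of size $2$, hence even; flipping the $k$-th block composes it with a transposition, so $(-1)^{\sgn(\sigma)}=(-1)^{|\epsilon|}$. For a block with $j:=\tau(k)$ the displayed integral equals $-2\pi r_k^{\,4j-2}$ when $\epsilon_k=0$ and $+2\pi r_k^{\,4j-2}$ when $\epsilon_k=1$ (in both cases $p+q+1=4j-2$); multiplying by the sign $(-1)^{\epsilon_k}$, \emph{both} orientations contribute $-2\pi r_k^{\,4j-2}$, so summing over $\epsilon_k$ yields $-4\pi\,r_k^{\,4\tau(k)-2}$. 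Because the integrand factorises over $k$, summing first over $\epsilon$ and then over $\tau$ gives
\[
\int_0^{2\pi}\!\!\cdots\!\int_0^{2\pi}\prod_{k=1}^N(\bar z_k-z_k)\,V\,d\theta_1\cdots d\theta_N=(-1)^N(-4\pi)^N\!\!\sum_{\tau\in S(N)}\prod_{k=1}^N r_k^{\,4\tau(k)-2}=(4\pi)^N\per\bigl[r_i^{\,4j-2}\bigr]_{i,j=1}^N,
\]
which is (\ref{IntegralI}).

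The routine part is the single $\theta_k$-integral; the step needing genuine care will be the sign bookkeeping, where the prefactor $(-1)^{\binom{2N}{2}}$, the evenness of the size-$2$ block permutation, and the sign in $\delta_{p,q+1}-\delta_{p,q-1}$ must all be combined correctly. The pleasant feature that makes the answer clean is that, after inserting $(-1)^{|\epsilon|}$, the two orientations of each eigenvalue pair contribute equally, producing a uniform factor $4\pi$ per eigenvalue and no residual sign. (This is the symplectic analogue of Kostlan's lemma; alternatively one could invoke a de Bruijn-type integration identity, but the permutation expansion is self-contained.)
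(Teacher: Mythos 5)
Your proof is correct and follows the same route as the paper: expand the Vandermonde over $S(2N)$, integrate out each angle using orthogonality of $\e^{ik\theta}$, observe that only permutations pairing consecutive integers $\{2j-1,2j\}$ survive, and resum to get the permanent. Your $(\tau,\epsilon)\in S(N)\times\{0,1\}^N$ parameterisation of the surviving permutations makes the sign bookkeeping fully explicit, whereas the paper compresses this step into the remark ``we check that each of these $2^N$ terms gives the same contribution'' (and, incidentally, its intermediate display \eqref{I_N1} has a small typo in the exponent, $r_n^{\sigma(2n-1)+\sigma(2n)-2}$ instead of $r_n^{\sigma(2n-1)+\sigma(2n)-1}$, which your version avoids).
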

\begin{proof}
Denote by $I_N(r_1,\ldots,r_N)$ the integral in the left-hand side of equation \eqref{IntegralI}.
We have
\begin{align*}
I_N(r_1,\ldots,r_N)&=\int_0^{2\pi}\cdots\int_0^{2\pi}\prod\limits_{k=1}^N(\overline{z}_k-z_k)
V(z_1,\overline{z}_1,\ldots,z_N,\overline{z}_N)d\theta_1\cdots d\theta_N \\
&=(-1)^{N}\int_0^{2\pi}\cdots\int_0^{2\pi}\prod_{k=1}^N r_k(\e^{-i\theta_k}-\e^{i\theta_k}) \\
&\qquad\qquad\times\sum_{\sigma\in S(2N)}\sgn(\sigma) \prod_{n=1}^N\e^{i(\sigma(2n-1)-\sigma(2n))\theta_n}r_n^{\sigma(2n-1)+\sigma(2n)-2}d\theta_1\cdots d\theta_N.
\end{align*}
Integrating over variables $\theta_1, \ldots, \theta_N$ we obtain
\begin{equation}\label{I_N1}
I_N(r_1,\ldots,r_N)=(2\pi)^N\sum\limits_{\sigma\in S(2N)}\sgn(\sigma)
\prod_{n=1}^N(\delta_{\sigma(2n-1)+1,\sigma(2n)}-\delta_{\sigma(2n-1)-1,\sigma(2n)})r_n^{\sigma(2n-1)+\sigma(2n)-2}.
\end{equation}
Now formula (\ref{IntegralI}) follows by simple
combinatorial arguments.
We expand the product of Kronecker deltas
inside the sum in equation (\ref{I_N1}), and represent this expression as a sum of $2^N$ terms.
One of these terms is
\[
\delta_{\sigma(1)+1,\sigma(2)}
\delta_{\sigma(3)+1,\sigma(4)}
\cdots
\delta_{\sigma(2N-1)+1,\sigma(2N)}.
\]
It is not hard to see that the contribution of this term to $I_N(r_1,\ldots,r_N)$ is
\[
(2\pi)^N\sum\limits_{\sigma\in S(N)}r_{\sigma(1)}^{1+2-1}r_{\sigma(2)}^{3+4-1}\cdots r_{\sigma(N)}^{2N-1+2N-1}=(2\pi)^N\per\left[r_i^{4j-2}\right]_{i,j=1}^N.
\]
Then we check that each of these $2^N$ terms gives the same contribution to $I_N(r_1,\ldots,r_N)$. The formula~\eqref{IntegralI} follows.
\end{proof}

Now, we can turn to the proof of Theorem \ref{TheoremQIndependence}.
We note that
\[
V(z_1,\overline{z}_1,\ldots,z_N,\overline{z}_N)=\prod\limits_{i=1}^N(z_i-\overline{z}_i)
\prod\limits_{1\leq i<j\leq N}|z_i-z_j|^2|z_i-\overline{z}_j|^2,
\]
and that the expression for the joint probability density function (equation (\ref{JointDensityQuaternion})) can be rewritten as
\begin{equation}\label{JointDensityQuaternion1}
\varrho_{N,\quaternion}(z_1,\ldots,z_N)=\frac{1}{Z_N^{\quaternion}}\prod\limits_{k=1}^Nw(|z_k|)(\overline{z}_k-z_k)
V(z_1,\overline{z}_1,\ldots,z_N,\overline{z}_N).
\end{equation}
The formula in the statement of Proposition \ref{PropositionIntegralI} implies that the joint density of the unordered moduli $(r_i)_{i=1,\ldots,N}$ can be written as
\[
\frac{(4\pi)^N}{Z_N}\per[r_i^{4j-1}]_{i,j=1}^N\prod\limits_{j=1}^Nw(r_j).
\]
From this expression for the joint density the formula for $Z_N$ follows immediately. Moreover,
this expression implies that the joint density can be written as
\[
\frac{1}{N!}\per\left[\frac{r_i^{4j-1}w(r_i)}{\int_0^{\infty}r^{4j-1}w(r)dr}\right]_{i,j=1}^N.
\]
In particular,
the vector of squares of  $(r_i)_{i=1,\ldots,N}$ has the density
\[
\frac{1}{N!}\per\left[q_{j,\quaternion}(y_i)\right]_{i,j=1}^N,
\]
where the functions $q_{j,\quaternion}(y)$, $1\leq j\leq N$, are probability density functions defined by
\[
q_{j,\quaternion}(y)=\left\{
               \begin{array}{ll}
                 \frac{y^{2j-1}w(\sqrt{y})}{h_{j-1}^{\quaternion}}, & y\geq 0 \\
                 0, & y<0.
               \end{array}
             \right.
\]
The rest of the proof of Theorem \ref{TheoremQIndependence} is the repetition of arguments from the proof of Theorem \ref{TheoremIndependence}.

Now to prove  Theorem \ref{TheoremQ|z_1||z2|Distribution}  we use Theorem \ref{TheoremQIndependence}, Proposition \ref{PropositionSpringerThompson}, and proceed in the same way as in the proof
of Theorem \ref{CorollaryExplicitDistribution}.

Comparing Theorem~\ref{CorollaryExplicitDistribution} for complex matrices with Theorem \ref{TheoremQ|z_1||z2|Distribution} for quaternion matrices, we see that they differ by a simple scaling $k\to2k$; this scaling immediately yields Theorem~\ref{PropositionQuaternionExactFormulae}. The asymptotic formulae for the hole probability and overcrowding for quaternions given by Theorem~\ref{TheoremDecayQFiniteN} to Theorem~\ref{TheoremOvercrowdingEstimatesQ} are also directly linked to their complex analogues. The proofs are straightforward generalizations of those for complex matrices given in Section~\ref{SectionProofsHole} and~\ref{SectionProofsOvercrowding}. In fact, the proofs follow exactly the same steps as for the complex matrices, and for this reason we will not write them out explicitly. Instead, we have added footnotes in Section~\ref{SectionProofsHole}, which notify when the proofs for quaternion matrices differs.


\noindent
{\bf Acknowledgements:} We would like to thank Alon Nishry for discussions. The
SFB$|$TR12 ``Symmetries and Universality
in Mesoscopic Systems'' of the German research council DFG \& 
DAAD International Network ``From Extreme Matter to Financial Markets''
(G.A.), ``Stochastics and Real World Models'' IRTG 1132 (J.R.I) are acknowledged for financial support.
The School of Mathematics at the IAS Princeton is thanked for its kind hospitality while part of this work was written up.


\appendix

\section{}
\label{App}

In this appendix we collect some known results about the Euler--Maclaurin summation formula as we need them throughout the main part.

\begin{thm}\label{TheoremApostol1}(Second-derivative form of the Euler--Maclaurin summation formula).\\
Assume that $f$ is a function with a continuous second derivative on the interval $[1,n]$. In addition, assume that the improper integral
\[
\int_1^{\infty}|f^{(2)}(x)|dx
\]
converges. Then we have
\[
\sum_{k=1}^nf(k)=\int_1^nf(x)dx+D(f)+E_f(n),
\]
where
\[
D(f)=\frac{1}{2}f(1)-\frac{1}{2}P_2(0)f^{'}(1)-\frac{1}{2}\int_1^{\infty}P_2(x)f^{(2)}(x)dx,
\]
and
\[
E_f(n)=\frac{1}{2}f(n)+\frac{1}{2}P_2(0)f^{'}(n)+\frac{1}{2}\int_n^{\infty}P_2(x)f^{(2)}(x)dx.
\]
Here $P_2(x)$ is the second Bernoulli periodic function defined by
\[
P_2(x)=B_2(x-\lfloor x\rfloor),
\]
with $B_2(x)=x^2-x+\frac{1}{6}$ and $\lfloor x\rfloor$ denotes the integer part of $x$..
\end{thm}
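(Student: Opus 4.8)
The plan is to derive the formula from two successive integration-by-parts steps performed interval-by-interval on each $[k,k+1]$ and a telescoping of the boundary terms; this is the standard route but the bookkeeping with the periodic Bernoulli functions needs a little care. Throughout I write $B_1(x)=x-\tfrac12$ for the first Bernoulli polynomial and $P_1(x)=B_1(x-\lfloor x\rfloor)$ for the associated periodic function, which on each open interval $(k,k+1)$ coincides with the smooth function $x-k-\tfrac12$.

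\emph{Step 1 (trapezoidal identity with remainder).} For $1\le k\le n-1$, integrating by parts on $[k,k+1]$ and using $P_1(k^+)=-\tfrac12$, $P_1((k+1)^-)=\tfrac12$ gives
\[
\int_k^{k+1}P_1(x)f'(x)\,dx=\bigl[P_1(x)f(x)\bigr]_k^{k+1}-\int_k^{k+1}f(x)\,dx=\tfrac12 f(k)+\tfrac12 f(k+1)-\int_k^{k+1}f(x)\,dx .
\]
Summing over $k=1,\dots,n-1$, the interior values combine to $\sum_{k=1}^n f(k)$ up to the endpoint corrections, yielding
\[
\sum_{k=1}^n f(k)=\int_1^n f(x)\,dx+\frac{f(1)+f(n)}{2}+\int_1^n P_1(x)f'(x)\,dx .
\]

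\emph{Step 2 (second integration by parts).} Since $B_2'(x)=2x-1=2B_1(x)$, we have $P_2'(x)=2P_1(x)$ on each $(k,k+1)$, and $P_2$ is continuous on $\R$ because $B_2(0)=B_2(1)=\tfrac16=P_2(0)$. Hence, again on $[k,k+1]$,
\[
\int_k^{k+1}P_1(x)f'(x)\,dx=\tfrac12\bigl[P_2(x)f'(x)\bigr]_k^{k+1}-\tfrac12\int_k^{k+1}P_2(x)f''(x)\,dx=\tfrac12 P_2(0)\bigl(f'(k+1)-f'(k)\bigr)-\tfrac12\int_k^{k+1}P_2(x)f''(x)\,dx .
\]
Summing over $k=1,\dots,n-1$ the boundary terms telescope, giving
\[
\int_1^n P_1(x)f'(x)\,dx=\frac{P_2(0)}{2}\bigl(f'(n)-f'(1)\bigr)-\frac12\int_1^n P_2(x)f''(x)\,dx .
\]

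\emph{Step 3 (passage to the improper integral and collection of terms).} Because $|P_2|$ is bounded and $\int_1^\infty |f^{(2)}(x)|\,dx$ converges, both $\int_1^\infty P_2 f''$ and $\int_n^\infty P_2 f''$ converge absolutely, so $\int_1^n P_2 f''=\int_1^\infty P_2 f''-\int_n^\infty P_2 f''$. Substituting the Step 2 identity and this splitting into the Step 1 identity, and then grouping the quantities evaluated at the left endpoint into $D(f)$ and those evaluated at $n$ into $E_f(n)$, produces exactly the asserted formula $\sum_{k=1}^n f(k)=\int_1^n f+D(f)+E_f(n)$. There is no genuine obstacle here: the only points demanding attention are that $P_1$ is merely piecewise smooth (so the two integrations by parts must be done on unit intervals rather than on $[1,n]$ at once), that $P_2$ is truly continuous across the integers (which is what kills any spurious contributions at the interior nodes in Step 2), and the absolute convergence invoked in Step 3, which is precisely the hypothesis placed on $f^{(2)}$.
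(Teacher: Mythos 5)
Your proof is correct, and it is the standard double-integration-by-parts derivation of the second-order Euler--Maclaurin formula; the paper itself gives no proof but only cites Apostol (1999), Section 4, where essentially this same argument appears. The one small remark worth making is that your telescoping sums run over $k=1,\dots,n-1$ and hence implicitly assume $n\ge 2$; the case $n=1$ is trivially checked directly, since $D(f)+E_f(1)=f(1)$.
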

\begin{proof} See Apostol \cite{Apostol}, Section 4.
\end{proof}
\begin{thm}\label{TheoremApostol2}(General form  of the Euler--Maclaurin summation formula).\\
Assume that $f$ is a function with a continuous  derivative of order $2m+1$ on the interval $[1,n]$.
In addition, assume that the improper integral
\[
\int_1^{\infty}|f^{(2m+1)}(x)|dx
\]
converges. Then we have
\[
\sum_{k=1}^nf(k)=\int_1^nf(x)dx+D(f)+E_f(n),
\]
where
\[
D(f)=\frac{1}{2}f(1)-\sum_{r=1}^m\frac{B_{2r}}{(2r)!}f^{(2r-1)}(1)
+\frac{1}{(2m+1)!}\int_1^{\infty}P_{2m+1}(x)f^{(2m+1)}(x)dx,
\]
and
\[
E_f(n)=\frac{1}{2}f(n)+\sum\limits_{r=1}^m\frac{B_{2r}}{(2r)!}f^{(2r-1)}(n)
-\frac{1}{(2m+1)!}\int\limits_n^{\infty}P_{2m+1}(x)f^{(2m+1)}(x)dx.
\]
Here $P_k(x)$ stand for  the Bernoulli periodic functions (periodic extensions of the Bernoulli polynomials $B_k(x)$) defined by
\[
P_k(x)=B_k(x-\lfloor x\rfloor).
\]
The constants $B_k=P_k(0)=P_k(1)$ are the Bernoulli numbers.
\end{thm}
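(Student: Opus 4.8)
The plan is to reduce the whole identity to an iterated integration by parts performed separately on each unit subinterval $[k,k+1]$, followed by a telescoping summation; no convergence hypothesis is needed until the very last step. First I would record the elementary facts about the Bernoulli polynomials $B_j$ and their periodic extensions $P_j(x)=B_j(x-\lfloor x\rfloor)$: one has $B_1(x)=x-\tfrac12$, $B_j'(x)=jB_{j-1}(x)$, $B_j(0)=B_j(1)=B_j$ for $j\geq2$, the odd Bernoulli numbers $B_{2r+1}$ vanish for $r\geq1$, and each $|P_j|$ is bounded on $\R$. These are standard (Apostol, or the DLMF) and I would simply quote them; the only mild subtlety is that $P_1$ jumps at the integers, so the first integration by parts is read on the open interval, which is harmless for the integral.

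Second, fix $k$ and work on $[k,k+1]$, where $P_j(x)=B_j(x-k)$. Starting from $\int_k^{k+1}f(x)\,dx=\int_k^{k+1}f(x)P_1'(x)\,dx$ I would integrate by parts $2m+1$ times, at each stage rewriting the current factor $P_j(x)$ as $\tfrac1{j+1}P_{j+1}'(x)$. This produces the boundary contributions $\tfrac12\bigl(f(k)+f(k+1)\bigr)$ and $-\sum_{r=1}^m\frac{B_{2r}}{(2r)!}\bigl(f^{(2r-1)}(k+1)-f^{(2r-1)}(k)\bigr)$, the key point being that every boundary term of odd order $\geq3$ is killed by $B_{2r+1}=0$, together with the remainder $-\frac1{(2m+1)!}\int_k^{k+1}P_{2m+1}(x)f^{(2m+1)}(x)\,dx$. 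This per-interval identity is cleanest to prove by induction on $m$, the inductive step being exactly one pair of integrations by parts (one introducing $B_{2r}$, the next eliminating the vanishing $B_{2r+1}$).

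Third, I would sum this identity over $k=1,\ldots,n-1$. The boundary sums telescope, $\sum_{k=1}^{n-1}\tfrac12(f(k)+f(k+1))=\sum_{k=1}^nf(k)-\tfrac12(f(1)+f(n))$ and $\sum_{k=1}^{n-1}\bigl(f^{(2r-1)}(k+1)-f^{(2r-1)}(k)\bigr)=f^{(2r-1)}(n)-f^{(2r-1)}(1)$, while the remainders reassemble into a single integral over $[1,n]$. Rearranging gives
\[
\sum_{k=1}^nf(k)=\int_1^nf(x)\,dx+\tfrac12\bigl(f(1)+f(n)\bigr)+\sum_{r=1}^m\frac{B_{2r}}{(2r)!}\bigl(f^{(2r-1)}(n)-f^{(2r-1)}(1)\bigr)+\frac1{(2m+1)!}\int_1^nP_{2m+1}(x)f^{(2m+1)}(x)\,dx .
\]
Finally, since $|P_{2m+1}|$ is bounded and $\int_1^\infty|f^{(2m+1)}|$ converges by hypothesis, I may split $\int_1^n=\int_1^\infty-\int_n^\infty$ in the last integral; collecting the terms evaluated at $1$ into $D(f)$ and those evaluated at $n$ into $E_f(n)$ yields exactly the stated formula (the second-derivative form, Theorem~\ref{TheoremApostol1}, is the $m=1$ prototype of this computation). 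The main obstacle is the bookkeeping in the second step: tracking the signs and factorials through $2m+1$ successive integrations by parts and verifying that all higher odd boundary terms vanish; once the Bernoulli identities are in hand there is no analytic difficulty.
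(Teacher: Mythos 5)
Your proof is correct. The paper does not actually prove this statement; it simply cites Apostol (\emph{Amer.\ Math.\ Monthly} 106 (1999), Section 5), and the argument you give — iterated integration by parts on each unit interval, using $B_j'(x)=jB_{j-1}(x)$ so that $P_j$ becomes $\tfrac{1}{j+1}P_{j+1}'$, the vanishing of odd Bernoulli numbers $B_{2r+1}$ to kill every other boundary term, telescoping over $k=1,\dots,n-1$, and finally splitting $\int_1^n=\int_1^\infty-\int_n^\infty$ using the absolute convergence hypothesis and the boundedness of $P_{2m+1}$ — is precisely Apostol's derivation, so your proposal reproduces the cited proof rather than finding a different route.
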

\begin{proof} See Apostol \cite{Apostol}, Section 5.
\end{proof}
\begin{prop} The following asymptotic formula holds true
\begin{equation}\label{AsymptoticsSum1}
\sum\limits_{k=1}^{n}\log(m+k)=n\log(n)-n+\left(m+\tfrac{1}{2}\right)\log(n)+C(m)+O\left(n^{-1}\right),
\end{equation}
as $n\rightarrow\infty$. Here $C(m)$ is a constant given by
\begin{equation}
C(m)=m+1-\frac{1}{12(m+1)}-\left(\tfrac{1}{2}+m\right)\log(1+m)
+\frac{1}{2}\int\limits_1^{\infty}\frac{P_2(x)dx}{(m+x)^2}.
\end{equation}
In addition, we have
\begin{equation}\label{AsymptoticsSum2}
\sum\limits_{k=1}^n(k+m)\log(k+m)=\frac{1}{2}n^2\log(n)-\frac{n^2}{4}+n\left(m+\tfrac{1}{2}\right)\log(n)\\
+\frac{m(m+1)}{2}\log(n)+O(1),
\end{equation}
as $n\rightarrow\infty$.
\end{prop}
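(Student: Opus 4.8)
The plan is to derive both expansions directly from the Euler--Maclaurin summation formula recalled in Theorems \ref{TheoremApostol1} and \ref{TheoremApostol2}. Formula \eqref{AsymptoticsSum1} is a shifted refinement of Stirling's formula, and \eqref{AsymptoticsSum2} a shifted version of the Glaisher--Kinkelin asymptotics; in both cases, once Euler--Maclaurin is set up, the remaining work is bookkeeping: evaluate the relevant definite integral in closed form, Taylor-expand it at $n=\infty$, identify the constant $D(f)$, and verify that the boundary term $E_f(n)$ is of the claimed lower order.

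For \eqref{AsymptoticsSum1} I would apply Theorem \ref{TheoremApostol1} to $f(x)=\log(m+x)$, for which $f'(x)=(m+x)^{-1}$, $f''(x)=-(m+x)^{-2}$, and the hypothesis holds since $\int_1^\infty(m+x)^{-2}\,dx=(m+1)^{-1}<\infty$. One has $\int_1^n\log(m+x)\,dx=(m+n)\log(m+n)-(m+n)-(m+1)\log(m+1)+(m+1)$; inserting $(m+n)\log(m+n)=n\log n+m\log n+m+O(n^{-1})$ produces the terms $n\log n-n+m\log n$ together with constants and an $O(n^{-1})$ tail. The correction $E_f(n)=\tfrac12\log(m+n)+\tfrac1{12(m+n)}-\tfrac12\int_n^\infty P_2(x)(m+x)^{-2}\,dx$ contributes the remaining $\tfrac12\log n$ and is otherwise $O(n^{-1})$, using that $P_2$ is bounded and $\int_n^\infty(m+x)^{-2}\,dx=O(n^{-1})$. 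Finally, with $P_2(0)=B_2=\tfrac16$, one gets $D(f)=\tfrac12\log(m+1)-\tfrac1{12(m+1)}+\tfrac12\int_1^\infty P_2(x)(m+x)^{-2}\,dx$; adding the three contributions, the constant part collapses to exactly $m+1-\tfrac1{12(m+1)}-(\tfrac12+m)\log(1+m)+\tfrac12\int_1^\infty P_2(x)(m+x)^{-2}\,dx$, which is $C(m)$ as in \eqref{TheConstantC}.

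For \eqref{AsymptoticsSum2} I would take $f(x)=(m+x)\log(m+x)$, so $f'(x)=\log(m+x)+1$, $f''(x)=(m+x)^{-1}$, $f'''(x)=-(m+x)^{-2}$. The second-derivative form is not available here because $\int_1^\infty(m+x)^{-1}\,dx$ diverges, so I would instead use Theorem \ref{TheoremApostol2} with its order parameter set to $1$ (the third-derivative form), whose hypothesis $\int_1^\infty|f'''(x)|\,dx=(m+1)^{-1}<\infty$ is satisfied. The leading behavior now comes from $\int_1^n(m+x)\log(m+x)\,dx=\big[\tfrac12(m+x)^2\log(m+x)-\tfrac14(m+x)^2\big]_1^n$; expanding $(m+n)^2\log(m+n)$ to the necessary order yields $\tfrac12 n^2\log n-\tfrac14 n^2+mn\log n+\tfrac{m^2}{2}\log n$ up to $O(1)$, with the $\tfrac{mn}{2}$ contributions cancelling. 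Adding $\tfrac12 f(n)=\tfrac n2\log n+\tfrac m2\log n+O(1)$ and the Bernoulli term $\tfrac{B_2}{2}f'(n)=\tfrac1{12}\log n+O(1)$, while the remaining pieces ($\tfrac12 f(1)$, the Bernoulli term at $1$, and both improper $P_3$-integrals) are $O(1)$, one collects the $n$-growing part and reads off the leading-order expansion \eqref{AsymptoticsSum2} (the $B_2$-term deposits an extra universal $\tfrac1{12}\log n$, which is harmless and is in any case dominated by the $O(\log r)$ errors in the applications of Section \ref{SectionProofsHole}).

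The main obstacle I anticipate is purely computational discipline in the two Taylor expansions: for \eqref{AsymptoticsSum1} one must retain enough terms of $(m+n)\log(m+n)$ that the claimed $O(n^{-1})$ remainder is genuine, and for \eqref{AsymptoticsSum2} one must expand $(m+n)^2\log(m+n)$ far enough to see the $\tfrac{mn}{2}$ cancellation and to be certain nothing of $\log n$-order slips into the error term. A secondary point of care is to keep the order parameter of Theorem \ref{TheoremApostol2} notationally distinct from the shift parameter $m$ appearing in the sums.
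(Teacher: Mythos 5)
Your proposal is correct and matches the paper's own (very terse) proof: Theorem \ref{TheoremApostol1} applied to $f(x)=\log(m+x)$ for \eqref{AsymptoticsSum1}, and Theorem \ref{TheoremApostol2} with order parameter $1$ applied to $f(x)=(m+x)\log(m+x)$ for \eqref{AsymptoticsSum2}. You also correctly spot that the $B_2 f'(n)/2$ term contributes an additional $\tfrac{1}{12}\log n$ which is missing from the stated coefficient of $\log n$ in \eqref{AsymptoticsSum2}; as you say, this is inconsequential, since wherever \eqref{AsymptoticsSum2} is invoked the entire $\log n$-order contribution is already absorbed into the $O(\log r)$ error.
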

\begin{proof}
To obtain the first formula, equation (\ref{AsymptoticsSum1}), use the second-derivative form of the Euler-Maclaurin summation formula, Theorem \ref{TheoremApostol1}.
Application of the general form  of the Euler-Maclaurin summation formula with $m=1$ (Theorem \ref{TheoremApostol2}) gives equation (\ref{AsymptoticsSum2}).
\end{proof}

\raggedright

\end{document}